\documentclass[11pt]{article}
\usepackage[left=1in, right=1in, top=1in, bottom=1in]{geometry}
\usepackage{CJK}
\usepackage{latexsym,bm}
\usepackage{xypic}
\usepackage{amsmath}
\usepackage{wasysym}
\usepackage{indentfirst}
\usepackage{amssymb}
\usepackage{dsfont}
\usepackage{amsthm}
\begin{document}
\newcommand{\fr}[2]{\frac{\;#1\;}{\;#2\;}}
\newtheorem{theorem}{Theorem}[section]
\newtheorem{lemma}{Lemma}[section]
\newtheorem{proposition}{Proposition}[section]
\newtheorem{corollary}{Corollary}[section]
\newtheorem{conjecture}{Conjecture}[section]
\newtheorem{remark}{Remark}[section]
\newtheorem{definition}{Definition}[section]
\newtheorem{example}{Example}[section]
\newtheorem{notation}{Notation}[section]
\numberwithin{equation}{section}
\newcommand{\Aut}{\mathrm{Aut}\,}
\newcommand{\CSupp}{\mathrm{CSupp}\,}
\newcommand{\Supp}{\mathrm{Supp}\,}
\newcommand{\rank}{\mathrm{rank}\,}
\newcommand{\col}{\mathrm{col}\,}
\newcommand{\len}{\mathrm{len}\,}
\newcommand{\leftlen}{\mathrm{leftlen}\,}
\newcommand{\rightlen}{\mathrm{rightlen}\,}
\newcommand{\length}{\mathrm{length}\,}
\newcommand{\bin}{\mathrm{bin}\,}
\newcommand{\wt}{\mathrm{wt}\,}
\newcommand{\diff}{\mathrm{diff}\,}
\newcommand{\lcm}{\mathrm{lcm}\,}
\newcommand{\GL}{\mathrm{GL}\,}
\newcommand{\SJ}{\mathrm{SJ}\,}
\newcommand{\LG}{\mathrm{LG}\,}
\newcommand{\bij}{\mathrm{bij}\,}
\newcommand{\dom}{\mathrm{dom}\,}
\newcommand{\fun}{\mathrm{fun}\,}
\newcommand{\SUPP}{\mathrm{SUPP}\,}
\newcommand{\supp}{\mathrm{supp}\,}
\newcommand{\End}{\mathrm{End}\,}
\newcommand{\Hom}{\mathrm{Hom}\,}
\newcommand{\ran}{\mathrm{ran}\,}
\newcommand{\row}{\mathrm{row}\,}
\newcommand{\Mat}{\mathrm{Mat}\,}
\newcommand{\rk}{\mathrm{rk}\,}
\newcommand{\rs}{\mathrm{rs}\,}
\newcommand{\piv}{\mathrm{piv}\,}
\newcommand{\perm}{\mathrm{perm}\,}
\newcommand{\rsupp}{\mathrm{rsupp}\,}
\newcommand{\inv}{\mathrm{inv}\,}
\newcommand{\orb}{\mathrm{orb}\,}
\newcommand{\id}{\mathrm{id}\,}
\newcommand{\soc}{\mathrm{soc}\,}
\newcommand{\unit}{\mathrm{unit}\,}
\newcommand{\word}{\mathrm{word}\,}

\renewcommand{\thefootnote}{\fnsymbol{footnote}}

\title{$r$-Minimal Codes with Respect to Rank Metric}
\author{Yang Xu$^1$ \,\,\,\,\,\, Haibin Kan$^2$\,\,\,\,\,\,Guangyue Han$^3$}

\maketitle

\renewcommand{\thefootnote}{\fnsymbol{footnote}}

\footnotetext{\hspace*{-6mm} \begin{tabular}{@{}r@{}p{16cm}@{}}
$^1$ & Shanghai Key Laboratory of Intelligent Information Processing, School of Computer Science, Fudan University,
Shanghai 200433, China.\\
&Shanghai Engineering Research Center of Blockchain, Shanghai 200433, China. {E-mail:xuyyang@fudan.edu.cn}\\
$^2$ & Shanghai Key Laboratory of Intelligent Information Processing, School of Computer Science, Fudan University,
Shanghai 200433, China.\\
&Shanghai Engineering Research Center of Blockchain, Shanghai 200433, China.\\
&Yiwu Research Institute of Fudan University, Yiwu City, Zhejiang 322000, China. {E-mail:hbkan@fudan.edu.cn} \\
$^3$ & Department of Mathematics, Faculty of Science, The University of Hong Kong, Pokfulam Road, Hong Kong, China. {E-mail:ghan@hku.hk} \\
\end{tabular}}

\vskip 3mm

{\hspace*{-6mm}{\bf Abstract}---In this paper, we propose and study $r$-minimal codes, a natural extension of minimal codes which have been extensively studied with respect to Hamming metric, rank metric and sum-rank metric. We first propose $r$-minimal codes in a general setting where the ambient space is a finite dimensional left module over a division ring and is supported on a lattice. We characterize minimal subcodes and $r$-minimal codes, derive a general singleton bound, and give existence results for $r$-minimal codes by using combinatorial arguments. We then consider $r$-minimal rank metric codes over a field extension $\mathbb{E}/\mathbb{F}$ of degree $m$, where $\mathbb{E}$ can be infinite. We characterize these codes in terms of cutting $r$-blocking sets, generalized rank weights of the codes and those of the dual codes, and classify codes whose $r$-dimensional subcodes have constant rank support weight. Next, with the help of the evasiveness property of cutting $r$-blocking sets and some upper bounds for the dimensions of evasive subspaces, we derive several lower and upper bounds for the minimal length of $r$-minimal codes. Furthermore, when $\mathbb{E}$ is finite, we establish a general upper bound which generalizes and improves the counterpart for minimal codes in the literature. As a corollary, we show that if $m=3$, then for any $k\geqslant2$, the minimal length of $k$-dimensional minimal codes is equal to $2k$. To the best of our knowledge, when $m\geqslant3$, there was no known explicit formula for the minimal length of $k$-dimensional minimal codes for arbitrary $k$ in the literature.
}\!

\section{Introduction}
\setlength{\parindent}{2em}
A Hamming metric linear code is said to be minimal if all of its non-zero codewords have minimal Hamming supports. In early 1990s, Massey showed in \cite{36} that the minimal codewords (i.e., codewords having minimal Hamming supports) in the dual code completely specify the access structure of the secret sharing scheme based on a linear code. Minimal Hamming metric codes have since been extensively studied due to their applications in secret sharing and secure two-party computation (see \cite{5,16,21,35}) as well as their interesting algebraic, combinatorial and geometric properties (see \cite{1,2,5,6,11,12,22,27,46} and references therein). Among others, it has been shown recently in \cite{1,46} that minimal codes are equivalent to cutting blocking sets, a class of combinatorial objects introduced in \cite{12} for constructing minimal codes. Many important results for minimal codes have since been established by using cutting blocking sets. For example, Alfarano, Borello and Neri proved in \cite{1} that minimal codes are asymptotically good; later in \cite{5}, Alon, Bishnoi, Das and Neri gave explicit constructions of asymptotically good families of minimal codes; in \cite{27}, H\'{e}ger and Nagy established upper bounds for minimal length of minimal codes which are linear in both the dimension and the field size; later in \cite{11}, Bishnoi, D'haeseleer, Gijswijt and Potukuchi further improved the counterpart result in \cite{27} to more general settings. More recently in \cite{4}, Alfarano, Borello and Neri introduced and investigated the notion of outer minimal codes and outer strong blocking sets, two larger class of objects which are closely related to minimal codes and cutting blocking sets, respectively (see \cite{4} for more details).

Recently, there has been a growing interest in minimal rank metric codes (see Section 2.1 for more details), which were first introduced by Alfarano, Borello, Neri and Ravagnani in \cite{3}. It was shown in \cite{3} that these codes are equivalent to linear cutting blocking sets, a class of combinatorial objects which are both special cases and $q$-analogues of cutting blocking sets. In \cite{3}, the authors established bounds for the parameters of minimal rank metric codes, gave general existence results via a combinatorial argument, and gave explicit constructions for some classes of minimal rank metric codes; moreover, they showed that a minimal rank metric code can be associated to a minimal Hamming metric code (see [3, Sections 5,6] for more details). Many important results for minimal codes have since been established by using linear cutting blocking sets. In \cite{8}, Bartoli, Csajb\'{o}k, Marino and Trombetti proved that linear cutting blocking sets are equivalent to a class of evasive subspaces (see \cite{9,33}), which further led to the genuinely new characterization of minimal rank metric codes in terms of their second generalized rank weights; moreover, they constructed $4$-dimensional minimal codes over $\mathbb{F}_{q^{4}}/\mathbb{F}_{q}$ with the shortest length $8$, which also provided the first example of a class of linear MRD codes that are not obtained as a direct sum of two smaller MRD codes (see [8, Sections 3,4] for more details). In \cite{31}, Lia, Longobardi, Marino and Trombetti generalized the evasiveness property of linear cutting blocking sets to linear cutting $r$-blocking sets, and constructed $3$-dimensional minimal codes over $\mathbb{F}_{q^{m}}/\mathbb{F}_{q}$ with the shortest length $m+2$ for all odd $m\geqslant5$ (see [31, Sections 3,4] for more details). More recently, the notion of minimal codes has been further generalized to sum-rank metric \cite{34}, a metric that includes both Hamming metric and rank metric as special cases; we refer the reader to \cite{13,38,44} for more details.

In this paper, we propose and study $r$-minimal codes for $r\in\mathbb{N}$. Roughly speaking, $r$-minimal codes are natural extensions of minimal codes with codewords replaced by $r$-dimensional subcodes in the definition, and boil down to minimal codes when $r=1$. We first study $r$-minimal codes in a general setting, and then restrict our attention to rank metric. Other than being an extension of minimal codes, $r$-minimal rank metric codes can be regarded as the coding-theoretic counterpart of linear cutting $r$-blocking sets (see \cite{12,26,31}). This generalizes the equivalence between minimal rank metric codes and linear cutting blocking sets.

The main contributions of this paper can be summarized as follows.

In Section 3, we propose and study $(\sigma,r)$-minimal codes as submodules of a finite dimensional left $R$-module $X$, where $R$ is a possibly infinite division ring, and $X$ is supported on a lattice $Y$ with respect to $\sigma:X\longrightarrow Y$. More precisely, in Section 3.1, with $X$, $Y$ and $\sigma$ appropriately set, we give six examples of $r$-minimal codes with respect to various metrics. In Section 3.2, we characterize minimal subcodes of a given code and $(\sigma,r)$-minimal codes (Theorems 3.2 and 3.3), and derive several corollaries including a general Singleton bound and the fact that $(\sigma,r)$-minimal codes are also $(\sigma,s)$-minimal for $0\leqslant s\leqslant r$ (Corollaries 3.1--3.3). In Section 3.3,  when $R$ is finite, we derive two existence results for $(\sigma,r)$-minimal codes of a given dimension by using combinatorial arguments (Theorems 3.4 and 3.5).

In Sections 4 and 5, we study $r$-minimal rank metric codes over a field extension $\mathbb{E}/\mathbb{F}$ of degree $m$, where $\mathbb{E}$ can be infinite.

Section 4 is devoted to the characterizations and basic properties of $r$-minimal codes and linear cutting $r$-blocking sets. In Section 4.1, we derive the maximal rank support weight of all the $s$-dimensional subcodes of a code $C$ for all $0\leqslant s\leqslant\dim_{\mathbb{E}}(C)$ (Theorem 4.1). This result extends [10, Theorem 3.1] and is crucial for our discussion in later sections. In Section 4.2, we establish the evasiveness property of linear cutting $r$-blocking sets (Theorem 4.2) and derive some corollaries. In Section 4.3, we characterize $r$-minimal codes in terms of linear cutting $r$-blocking sets, generalized rank weights of the codes and those of the dual codes (Theorems 4.3--4.5), where the equivalence between $r$-minimal codes and cutting $r$-blocking sets also extends to sum-rank metric (Remark 4.3); moreover, we characterize rank metric codes whose $r$-dimensional subcodes have constant rank support weight (Theorem 4.6). Some of our results generalize the counterpart results in \cite{3,8,31,39} to arbitrary $r$ and possibly infinite $\mathbb{E}$ via a different approach.

Section 5 is devoted to the minimal length $\varpi_{\mathbb{E}/\mathbb{F}}(k,r)$ of $k$-dimensional $r$-minimal codes for $k\geqslant r+1$, which turns out to be equal to the minimal dimension of linear cutting $r$-blocking sets of $\mathbb{E}^{k}$ (Proposition 5.1, Corollary 5.1). In Section 5.1, we derive some upper bounds for the dimensions of evasive subspaces (Theorem 5.1, Corollary 5.2). In Section 5.2, we establish several lower bounds for $\varpi_{\mathbb{E}/\mathbb{F}}(k,r)$ with the help of the results established in Section 5.1 (Theorems 5.2 and 5.3). In Section 5.3, when $\mathbb{E}$ is finite, we derive the explicit number of all the $[n,r+1]$ $r$-minimal codes (Proposition 5.3), and establish in Theorem 5.4 the general upper bound
$$\varpi_{\mathbb{E}/\mathbb{F}}(k,r)\leqslant mr+k(r+1)-r^{2}-2r.$$
As a special case, the upper bound $m+2k-3$ for minimal codes slightly improves [3, Theorem 6.11]. Furthermore, we show that $\varpi_{\mathbb{E}/\mathbb{F}}(k,1)=2k$ if $m=3$. To the best of our knowledge, when $m\geqslant3$, there was no known explicit formula for $\varpi_{\mathbb{E}/\mathbb{F}}(k,1)$ for arbitrary $k$ in the literature.

\section{Preliminaries}
\setlength{\parindent}{2em}
We begin with a few notations for modules over a division ring $R$. For a left $R$-module $X$ and any $A\subseteq X$, let $\langle A\rangle_{R}$ denote the $R$-submodule of $X$ generated by $A$, and write $A\leqslant_{R}X$ if $A$ is an $R$-submodule of $X$. For $m,n\in\mathbb{Z}^{+}$, let $R^{n}$ and $R^{[n]}$ denote the sets of all the row vectors and column vectors over $R$ of length $n$, respectively, and let $\Mat_{m,n}(R)$ denote the set of all the matrices over $R$ with $m$ rows and $n$ columns; moreover, for any $G\in\Mat_{m,n}(R)$, let $\row(G)$ and $\col(G)$ denote the sets of all the rows and columns of $G$, respectively.

\subsection{Rank metric codes}
\setlength{\parindent}{2em}
Rank metric codes were introduced by Delsarte in \cite{21} and by Gabidulin in \cite{24}. The study of rank metric codes has intensified greatly in the last decades due to their applications in secure network coding and crisscross error correction (see \cite{30,35,42,43}) as well as their important mathematical properties (see \cite{18,23,25,39,40} and references therein) that may hold true for codes over possibly infinite fields (see, e.g., \cite{7,10,29,34,35}). Among others, Berhuy, Fasel and Garotta proved in \cite{10} that if a rank metric code is of length less than or equal to the field extension degree, then there exists a codeword whose rank support is equal to the rank support of the whole code (see [10, Theorem 3.1]), settling a conjecture proposed by Jurrius and Pellikaan in \cite{29}. This result is a keystone for our approach as it enables us to establish many of our results for codes over possibly infinite fields (see Remark 2.3 for more details).

Throughout this subsection, let $\mathbb{E}/\mathbb{F}$ be a finite dimensional field extension with $\dim_{\mathbb{F}}(\mathbb{E})=m$. For any $s\in\mathbb{Z}^{+}$ and $A\subseteq \mathbb{E}^{s}$, the dual of $A$, denoted by $A^{\bot}$, is defined as
\begin{equation}A^{\bot}\triangleq\{\beta\in \mathbb{E}^{s}\mid \alpha\beta^{T}=0~\text{for all $\alpha\in A$}\},\end{equation}
where $\beta^{T}\in \mathbb{E}^{[s]}$ denotes the transport of $\beta\in \mathbb{E}^{s}$; moreover, let
\begin{equation}A^{\ddagger}\triangleq\{\beta^{T}\mid \beta\in A^{\bot}\}\subseteq\mathbb{E}^{[s]}.\end{equation}
We also fix $n\in\mathbb{Z}^{+}$. For any $k\in\mathbb{N}$, an \textit{$[n,k]$ rank metric code} is a $k$-dimensional $\mathbb{E}$-subspace of $\mathbb{E}^{n}$. For $k\in\mathbb{Z}^{+}$ and an $[n,k]$ rank metric code $C\leqslant_{\mathbb{E}}\mathbb{E}^{n}$, a \textit{generator matrix of $C$} is a matrix $G\in\Mat_{k,n}(\mathbb{E})$ satisfying $C=\langle\row(G)\rangle_{\mathbb{E}}=\{\gamma G\mid \gamma\in\mathbb{E}^{k}\}$.

Now let $(\tau_1,\dots,\tau_m)$ be an ordered basis of $\mathbb{E}/\mathbb{F}$. Then, for any $\alpha\in\mathbb{E}^{n}$, there uniquely exists $\mathcal{M}(\alpha)\in\Mat_{m,n}(\mathbb{F})$ such that $\alpha=(\tau_1,\dots,\tau_m)\mathcal{M}(\alpha)$, and the \textit{rank support of $\alpha$}, denoted by $\rsupp(\alpha)$, is defined as the $\mathbb{F}$-subspace of $\mathbb{F}^{n}$ generated by $\row(\mathcal{M}(\alpha))$. More generally, for any $A\subseteq\mathbb{E}^{n}$, the \textit{rank support of $A$}, denoted by $\chi(A)$, is defined as the $\mathbb{F}$-subspace of $\mathbb{F}^{n}$ generated by $\bigcup_{\alpha\in A}\rsupp(\alpha)$; moreover, the \textit{rank support weight of $A$}, denoted by $\wt(A)$, is defined as
\begin{equation}\wt(A)\triangleq\dim_{\mathbb{F}}(\chi(A)).\end{equation}
We remark that the rank support of an element is independent of the choice of the ordered basis.

The following lemma, in which we establish some connections between rank supports and dual spaces, will be used frequently in the rest of the paper.

\setlength{\parindent}{0em}
\begin{lemma}
{\bf{(1)}}\,\,For $A\subseteq\mathbb{E}^{n}$, we have $A^{\bot}\cap\mathbb{F}^{n}=\chi(A)^{\bot}\cap\mathbb{F}^{n}$ and $\wt(A)=n-\dim_{\mathbb{F}}(A^{\bot}\cap\mathbb{F}^{n})$.

{\bf{(2)}}\,\,Let $k\in\mathbb{Z}^{+}$, and let $G\in\Mat_{k,n}(\mathbb{E})$, $C=\langle\row(G)\rangle_{\mathbb{E}}$, $U=\langle\col(G)\rangle_{\mathbb{F}}$. Define $\varphi:\mathbb{F}^{n}\longrightarrow\mathbb{E}^{[k]}$ as $\varphi(\theta)=G\theta^{T}$. Then, it holds that $\ran(\varphi)=U$, $\ker(\varphi)=C^{\bot}\cap\mathbb{F}^{n}$ and $\wt(C)=\dim_{\mathbb{F}}(U)$. Moreover, for $B,P\subseteq\mathbb{E}^{k}$ and $D\triangleq\{\gamma G\mid\gamma\in B\}$, $Q\triangleq\{\gamma G\mid\gamma\in P\}$, we have $\varphi^{-1}[{B^{\ddagger}}\cap U]=D^{\bot}\cap\mathbb{F}^{n}$, $\wt(D)=\dim_{\mathbb{F}}(U)-\dim_{\mathbb{F}}({B^{\ddagger}}\cap U)$, and it holds that $\chi(Q)\subseteq\chi(D)\Longleftrightarrow{B^{\ddagger}}\cap U\subseteq P^{\ddagger}$.
\end{lemma}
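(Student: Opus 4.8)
The plan is to reduce everything to finite-dimensional linear algebra over $\mathbb{F}$. Two devices make this work. First, although $\mathbb{E}$ may be infinite, the spaces $\mathbb{F}^{n}$, $U\leqslant_{\mathbb{F}}\mathbb{E}^{[k]}$ and $B^{\ddagger}\cap U$ are all finite-dimensional over $\mathbb{F}$, so rank--nullity and the orthogonal-complement duality $\dim_{\mathbb{F}}(W)+\dim_{\mathbb{F}}(W^{\perp}\cap\mathbb{F}^{n})=n$ together with $(W^{\perp}\cap\mathbb{F}^{n})^{\perp}\cap\mathbb{F}^{n}=W$ are available for every $\mathbb{F}$-subspace $W\leqslant_{\mathbb{F}}\mathbb{F}^{n}$. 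Second, for $\theta\in\mathbb{F}^{n}$ the $\mathbb{E}$-valued pairing collapses to an $\mathbb{F}$-orthogonality condition, because $\mathcal{M}(\alpha)\theta^{T}\in\mathbb{F}^{[m]}$ and $(\tau_{1},\dots,\tau_{m})$ is $\mathbb{F}$-linearly independent. I would prove Part (1) first and invoke it repeatedly in Part (2).

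For Part (1), fix $\theta\in\mathbb{F}^{n}$ and $\alpha\in A$, and write $\alpha\theta^{T}=(\tau_{1},\dots,\tau_{m})\,\mathcal{M}(\alpha)\theta^{T}$. Since the entries of $\mathcal{M}(\alpha)\theta^{T}$ lie in $\mathbb{F}$ and the $\tau_{i}$ form a basis, $\alpha\theta^{T}=0$ iff $\mathcal{M}(\alpha)\theta^{T}=0$, i.e. iff $\theta$ is orthogonal to every row of $\mathcal{M}(\alpha)$, i.e. iff $\theta\in\rsupp(\alpha)^{\perp}\cap\mathbb{F}^{n}$. Quantifying over $\alpha\in A$ and using that $\chi(A)$ is the $\mathbb{F}$-span of $\bigcup_{\alpha}\rsupp(\alpha)$ gives $A^{\perp}\cap\mathbb{F}^{n}=\chi(A)^{\perp}\cap\mathbb{F}^{n}$. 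The weight identity is then immediate, since $\chi(A)\leqslant_{\mathbb{F}}\mathbb{F}^{n}$ forces $\wt(A)=\dim_{\mathbb{F}}(\chi(A))=n-\dim_{\mathbb{F}}(\chi(A)^{\perp}\cap\mathbb{F}^{n})=n-\dim_{\mathbb{F}}(A^{\perp}\cap\mathbb{F}^{n})$.

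For the basic claims of Part (2), writing $\theta=(\theta_{1},\dots,\theta_{n})$ and letting $g_{1},\dots,g_{n}$ be the columns of $G$, one has $\varphi(\theta)=G\theta^{T}=\sum_{j=1}^{n}\theta_{j}g_{j}$, an $\mathbb{F}$-linear combination of the $g_{j}$; hence $\ran(\varphi)=\langle\col(G)\rangle_{\mathbb{F}}=U$. Moreover $\varphi(\theta)=0$ iff $G\theta^{T}=0$, which since $\row(G)$ generates $C$ is exactly $\theta\in C^{\perp}\cap\mathbb{F}^{n}$, so $\ker(\varphi)=C^{\perp}\cap\mathbb{F}^{n}$. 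Combining Part (1), which gives $\wt(C)=n-\dim_{\mathbb{F}}(C^{\perp}\cap\mathbb{F}^{n})$, with rank--nullity for $\varphi$ yields $\wt(C)=\dim_{\mathbb{F}}(\ran(\varphi))=\dim_{\mathbb{F}}(U)$; this also records $\dim_{\mathbb{F}}(\ker\varphi)=n-\dim_{\mathbb{F}}(U)$, which I reuse below.

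The heart of the proof, and the step I expect to be most delicate, is the first identity of the ``moreover'' part, because it is where the transposes must be tracked precisely. For $\theta\in\mathbb{F}^{n}$ one has $\theta\in D^{\perp}\cap\mathbb{F}^{n}$ iff $(\gamma G)\theta^{T}=\gamma(G\theta^{T})=0$ for all $\gamma\in B$; setting $\beta\triangleq\varphi(\theta)^{T}\in\mathbb{E}^{k}$, the identity $\gamma(G\theta^{T})=\gamma\beta^{T}$ shows this is equivalent to $\beta\in B^{\perp}$, i.e. to $\varphi(\theta)\in B^{\ddagger}$, and since $\varphi(\theta)\in U$ automatically, to $\theta\in\varphi^{-1}[B^{\ddagger}\cap U]$. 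Thus $\varphi^{-1}[B^{\ddagger}\cap U]=D^{\perp}\cap\mathbb{F}^{n}$; note that $B^{\perp}$, hence $B^{\ddagger}$, is a subspace even though $B$ is an arbitrary subset, so $B^{\ddagger}\cap U$ is an $\mathbb{F}$-subspace of $U$. The weight formula for $D$ then follows from Part (1) and the preimage-dimension identity $\dim_{\mathbb{F}}\varphi^{-1}[W]=\dim_{\mathbb{F}}(W\cap U)+\dim_{\mathbb{F}}(\ker\varphi)$ applied to $W=B^{\ddagger}\cap U\subseteq U$, together with $\dim_{\mathbb{F}}(\ker\varphi)=n-\dim_{\mathbb{F}}(U)$, giving $\wt(D)=\dim_{\mathbb{F}}(U)-\dim_{\mathbb{F}}(B^{\ddagger}\cap U)$. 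For the last equivalence, I would pass to $\mathbb{F}$-orthogonal complements inside $\mathbb{F}^{n}$: since complementation there is an inclusion-reversing involution, $\chi(Q)\subseteq\chi(D)$ iff $\chi(D)^{\perp}\cap\mathbb{F}^{n}\subseteq\chi(Q)^{\perp}\cap\mathbb{F}^{n}$, which by Part (1) and the identity just proved (applied to both $B$ and $P$) reads $\varphi^{-1}[B^{\ddagger}\cap U]\subseteq\varphi^{-1}[P^{\ddagger}\cap U]$; surjectivity of $\varphi$ onto $U$ turns this into $B^{\ddagger}\cap U\subseteq P^{\ddagger}\cap U$, which, as $B^{\ddagger}\cap U\subseteq U$, is the same as $B^{\ddagger}\cap U\subseteq P^{\ddagger}$. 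The only genuine care needed throughout is the transpose bookkeeping and keeping straight which duals and complements are taken over $\mathbb{E}$ versus over $\mathbb{F}$.
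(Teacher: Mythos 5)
Your proposal is correct and follows essentially the same route as the paper's proof: the same basis argument $\alpha\theta^{T}=0\Longleftrightarrow\mathcal{M}(\alpha)\theta^{T}=0$ for Part (1), and in Part (2) the same map $\varphi$, the identity $\varphi^{-1}[B^{\ddagger}\cap U]=D^{\bot}\cap\mathbb{F}^{n}$, rank--nullity, and duality inside $\mathbb{F}^{n}$ for the final equivalence. The only difference is that you spell out the steps the paper labels ``straightforward to verify'' and note explicitly that $B^{\ddagger}\cap U\subseteq P^{\ddagger}\cap U$ is the same as $B^{\ddagger}\cap U\subseteq P^{\ddagger}$, which is a harmless refinement.
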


\begin{proof}
{\bf{(1)}}\,\,For $\beta\in \mathbb{F}^{n}$ and any $\alpha\in A$, we have
$$\alpha\beta^{T}=0\Longleftrightarrow(\tau_1,\dots,\tau_m)\mathcal{M}(\alpha)\beta^{T}=0\Longleftrightarrow\mathcal{M}(\alpha)\beta^{T}=0\Longleftrightarrow\beta\in\rsupp(\alpha)^{\bot},$$
which further implies that $\beta\in A^{\bot}\Longleftrightarrow\beta\in\chi(A)^{\bot}$, as desired. Moreover, from $\chi(A)\leqslant_{\mathbb{F}}\mathbb{F}^{n}$ and (2.3), we deduce that $\wt(A)=n-\dim_{\mathbb{F}}(\chi(A)^{\bot}\cap\mathbb{F}^{n})=n-\dim_{\mathbb{F}}(A^{\bot}\cap\mathbb{F}^{n})$, as desired.

{\bf{(2)}}\,\,First of all, it is straightforward to verify that $\ran(\varphi)=U$ and $\ker(\varphi)=C^{\bot}\cap\mathbb{F}^{n}$. It then follows from (1) that $\wt(C)=n-\dim_{\mathbb{F}}(\ker(\varphi))=\dim_{\mathbb{F}}(U)$, as desired. Next, for $\theta\in\mathbb{F}^{n}$, we have
$$\theta\in\varphi^{-1}[{B^{\ddagger}}\cap U]\Longleftrightarrow G\theta^{T}\in B^{\ddagger}\Longleftrightarrow(\forall~\gamma\in B:\gamma G\theta^{T}=0)\Longleftrightarrow\theta\in D^{\bot},$$
which implies that $\varphi^{-1}[{B^{\ddagger}}\cap U]=D^{\bot}\cap\mathbb{F}^{n}$, as desired. It then follows from (1) that
$$\wt(D)=n-\dim_{\mathbb{F}}(\varphi^{-1}[{B^{\ddagger}}\cap U])=n-(\dim_{\mathbb{F}}(\ker(\varphi))+\dim_{\mathbb{F}}({B^{\ddagger}}\cap U))=\dim_{\mathbb{F}}(U)-\dim_{\mathbb{F}}({B^{\ddagger}}\cap U),$$
as desired. Finally, it follows from $\chi(D),\chi(Q)\leqslant_{\mathbb{F}}\mathbb{F}^{n}$ and (1) that
$$\chi(Q)\subseteq\chi(D)\Longleftrightarrow D^{\bot}\cap\mathbb{F}^{n}\subseteq Q^{\bot}\cap\mathbb{F}^{n}\Longleftrightarrow\varphi^{-1}[{B^{\ddagger}}\cap U]\subseteq\varphi^{-1}[{P^{\ddagger}}\cap U]\Longleftrightarrow{B^{\ddagger}}\cap U\subseteq{P^{\ddagger}}\cap U,$$
completing the proof.
\end{proof}

\begin{remark}
Lemma 2.1 recovers [3, Lemma 3.7] and [3, Theorem 5.6] if both $B$ and $Q$ contain exactly one codeword. Moreover, since Lemma 2.1 does not require that $\rank(G)=k$, it can be used to derive a generalization of [38, Theorem 3.1] which was established for sum-rank metric codes.
\end{remark}

\setlength{\parindent}{2em}
The following result has been established in [39, Theorem 10] and [3, Proposition 3.4]. A proof is included since it is a direct corollary of Lemma 2.1, and will be used in Section 4.3 when we characterize codes whose subcodes of a given dimension have constant rank support weight.

\setlength{\parindent}{0em}
\begin{corollary}
Let $C\leqslant_{\mathbb{E}}\mathbb{E}^{n}$ with $\dim_{\mathbb{E}}(C)=k\geqslant1$, $G\in\Mat_{k,n}(\mathbb{E})$ be a generator matrix of $C$, and let $U=\langle\col(G)\rangle_{\mathbb{F}}$. Then, we have $\wt(C)\leqslant \dim_{\mathbb{F}}(C)$, where the equality holds if and only if $U=\mathbb{E}^{[k]}$. Moreover, if $\wt(C)=\dim_{\mathbb{F}}(C)$, then we have $\wt(D)=\dim_{\mathbb{F}}(D)$ for all $D\leqslant_{\mathbb{E}}C$.
\end{corollary}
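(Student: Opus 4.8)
The plan is to reduce everything to Lemma 2.1(2), which already identifies the rank support weight of a code with the $\mathbb{F}$-dimension of the column space of a generator matrix. Throughout I would use the basic fact that $\dim_{\mathbb{F}}(C) = mk$, since $C$ is a $k$-dimensional $\mathbb{E}$-space and $[\mathbb{E}:\mathbb{F}] = m$; the same computation gives $\dim_{\mathbb{F}}(\mathbb{E}^{[k]}) = mk$.

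For the inequality and its equality condition, I would invoke Lemma 2.1(2) directly to obtain $\wt(C) = \dim_{\mathbb{F}}(U)$. Since $U = \langle\col(G)\rangle_{\mathbb{F}}$ is by definition an $\mathbb{F}$-subspace of $\mathbb{E}^{[k]}$, its dimension cannot exceed $\dim_{\mathbb{F}}(\mathbb{E}^{[k]}) = mk = \dim_{\mathbb{F}}(C)$, which yields $\wt(C) \leqslant \dim_{\mathbb{F}}(C)$. Because a subspace of $\mathbb{E}^{[k]}$ attains the full dimension $mk$ precisely when it equals the whole space, equality holds if and only if $U = \mathbb{E}^{[k]}$.

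For the second assertion, I would assume $U = \mathbb{E}^{[k]}$ and fix an arbitrary $D \leqslant_{\mathbb{E}} C$ of $\mathbb{E}$-dimension $s$. Since $G$ is a generator matrix, the map $\gamma \mapsto \gamma G$ is an $\mathbb{E}$-linear isomorphism from $\mathbb{E}^{k}$ onto $C$, so $B \triangleq \{\gamma\in\mathbb{E}^{k}\mid \gamma G\in D\}$ is an $\mathbb{E}$-subspace of $\mathbb{E}^{k}$ with $\dim_{\mathbb{E}}(B) = s$ and $D = \{\gamma G\mid \gamma\in B\}$. Applying Lemma 2.1(2) to this $B$ and using $U = \mathbb{E}^{[k]}$ gives $\wt(D) = \dim_{\mathbb{F}}(U) - \dim_{\mathbb{F}}(B^{\ddagger}\cap U) = mk - \dim_{\mathbb{F}}(B^{\ddagger})$. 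The last ingredient is a dimension count for $B^{\ddagger}$: the dual $B^{\bot}$ is an $\mathbb{E}$-subspace of $\mathbb{E}^{k}$ of $\mathbb{E}$-dimension $k-s$, and $B^{\ddagger}$ is merely its image under transposition, hence a $(k-s)$-dimensional $\mathbb{E}$-subspace of $\mathbb{E}^{[k]}$, giving $\dim_{\mathbb{F}}(B^{\ddagger}) = m(k-s)$. Substituting yields $\wt(D) = mk - m(k-s) = ms = \dim_{\mathbb{F}}(D)$, as required.

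The argument is essentially dimension bookkeeping built on top of Lemma 2.1, so I do not anticipate a serious obstacle. The only points that demand a little care are the identification $B^{\ddagger}\cap U = B^{\ddagger}$ forced by the hypothesis $U = \mathbb{E}^{[k]}$, and the observation that $B^{\ddagger}$ is again an $\mathbb{E}$-subspace whose $\mathbb{E}$-dimension equals $\dim_{\mathbb{E}}(B^{\bot}) = k-s$; once these are in place, the stated equality follows mechanically.
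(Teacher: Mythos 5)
Your proposal is correct and follows essentially the same route as the paper's own proof: both parts reduce to Lemma 2.1(2), the inequality and equality condition come from $\wt(C)=\dim_{\mathbb{F}}(U)$ together with $U\leqslant_{\mathbb{F}}\mathbb{E}^{[k]}$ and $\dim_{\mathbb{F}}(\mathbb{E}^{[k]})=mk=\dim_{\mathbb{F}}(C)$, and the second assertion comes from writing $D=\{\gamma G\mid\gamma\in B\}$ and computing $\wt(D)=mk-\dim_{\mathbb{F}}(B^{\ddagger})=\dim_{\mathbb{F}}(B)$. Your version merely spells out the dimension count $\dim_{\mathbb{F}}(B^{\ddagger})=m(k-s)$ that the paper leaves implicit.
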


\begin{proof}
By Lemma 2.1, we have $\wt(C)=\dim_{\mathbb{F}}(U)$. This, together with $\dim_{\mathbb{F}}(C)=mk=\dim_{\mathbb{F}}(\mathbb{E}^{[k]})$ and $U\leqslant_{\mathbb{F}}\mathbb{E}^{[k]}$, immediately implies the first assertion. Next, suppose that $\wt(C)=\dim_{\mathbb{F}}(C)$. Then, we have $U=\mathbb{E}^{[k]}$. Consider $D\leqslant_{\mathbb{E}}C$, and let $B\leqslant_{\mathbb{E}}\mathbb{E}^{k}$ such that $D=\{\gamma G\mid\gamma\in B\}$. From Lemma 2.1, we deduce that $\wt(D)=mk-\dim_{\mathbb{F}}(B^{\ddagger})=\dim_{\mathbb{F}}(B)=\dim_{\mathbb{F}}(D)$, as desired.
\end{proof}

\setlength{\parindent}{2em}
Now we define $r$-minimal rank metric codes as a natural extension of minimal rank metric codes (see, e.g., [3, Definition 5.1]).

\setlength{\parindent}{0em}
\begin{definition}
Let $C\leqslant_{\mathbb{E}}\mathbb{E}^{n}$. An $\mathbb{E}$-subspace $D$ of $C$ is said to be rank minimal in $C$ if for any $B\leqslant_{\mathbb{E}}C$ such that $\dim_{\mathbb{E}}(B)=\dim_{\mathbb{E}}(D)$, $\chi(B)\subseteq\chi(D)$, it holds that $\chi(B)=\chi(D)$. For any $r\in\mathbb{N}$, we say that $C$ is $r$-minimal with respect to rank metric if every $r$-dimensional $\mathbb{E}$-subspace of $C$ is rank minimal in $C$. Moreover, we will simply say that $C$ is minimal if $C$ is $1$-minimal.
\end{definition}

\begin{remark}
When $r=1$, Definition 2.1 boils down to [3, Definition 5.1] as minimal codewords and minimal codes proposed there coincide with $1$-dimensional rank minimal subcodes and $1$-minimal codes proposed in Definition 2.1, respectively.
\end{remark}

\setlength{\parindent}{2em}
Now we recall the definition of generalized rank weights. We note that there are several equivalent definitions for generalized rank weights in the literature (see, e.g., \cite{10,23,25,29,39,40}).

\setlength{\parindent}{0em}
\begin{definition}([29, Definition 2.5]) Let $C\leqslant_{\mathbb{E}}\mathbb{E}^{n}$ with $\dim_{\mathbb{E}}(C)=k$. For any $r\in\{0,1,\dots,k\}$, the $r$-th generalized rank weight of $C$ is defined as $\mathbf{d}_{r}(C)\triangleq\min\{\wt(D)\mid D\leqslant_{\mathbb{E}}C,\dim_{\mathbb{E}}(D)=r\}$.
\end{definition}

\setlength{\parindent}{2em}
We end this subsection by introducing a lemma which is the keystone for our approach as it enables us to establish many of our results for possibly infinite $\mathbb{E}$. The following result was first proved by Jurrius and Pellikaan for finite field alphabet in \cite{29}, where the authors also raised the question that whether the statement remains valid for infinite field. Later in \cite{10}, Berhuy, Fasel and Garotta positively answered the question by using methods from algebraic geometry.

\setlength{\parindent}{0em}
\begin{lemma}([10, Theorem 3.1])
If $n\leqslant m$, then for any $C\leqslant_{\mathbb{E}}\mathbb{E}^{n}$, there exists $\alpha\in C$ such that $\rsupp(\alpha)=\chi(C)$.
\end{lemma}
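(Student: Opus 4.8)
The plan is to use Lemma 2.1 to reduce the statement to an elementary question about a single $\mathbb{F}$-subspace of columns, which I then settle by counting when $\mathbb{E}$ is finite and by a genericity argument when $\mathbb{E}$ is infinite.

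If $C=\{0\}$ the claim is trivial, so assume $k:=\dim_{\mathbb{E}}(C)\geqslant 1$, fix a generator matrix $G\in\Mat_{k,n}(\mathbb{E})$, and put $U:=\langle\col(G)\rangle_{\mathbb{F}}\leqslant_{\mathbb{F}}\mathbb{E}^{[k]}$. By Lemma 2.1(2) we have $\wt(C)=\dim_{\mathbb{F}}(U)=:w$, and since $\chi(C)\leqslant_{\mathbb{F}}\mathbb{F}^{n}$ it follows that $w=\wt(C)\leqslant n\leqslant m$. For $\gamma\in\mathbb{E}^{k}$ and the codeword $\alpha=\gamma G$, applying Lemma 2.1(2) to the singleton $B=\{\gamma\}$ gives $\wt(\alpha)=w-\dim_{\mathbb{F}}(B^{\ddagger}\cap U)$, where $B^{\ddagger}\cap U=\{v\in U\mid\gamma v=0\}$. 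As $\rsupp(\alpha)\subseteq\chi(C)$ always holds, $\rsupp(\alpha)=\chi(C)$ is equivalent to $\wt(\alpha)=w$, i.e.\ to $\gamma v\neq 0$ for all $v\in U\setminus\{0\}$. Everything thus reduces to the following core claim: for any $\mathbb{F}$-subspace $U\leqslant_{\mathbb{F}}\mathbb{E}^{[k]}$ with $\dim_{\mathbb{F}}(U)=w\leqslant m$, there is $\gamma\in\mathbb{E}^{k}$ with $\gamma v\neq0$ for every $v\in U\setminus\{0\}$; equivalently, fixing an $\mathbb{F}$-basis $v_{1},\dots,v_{w}$ of $U$, the scalars $\gamma v_{1},\dots,\gamma v_{w}\in\mathbb{E}$ are $\mathbb{F}$-linearly independent.

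When $\mathbb{E}=\mathbb{F}_{q^{m}}$ and $\mathbb{F}=\mathbb{F}_{q}$ are finite, I would prove the core claim by a union bound. For each $v\in U\setminus\{0\}$ the set $H_{v}=\{\gamma\in\mathbb{E}^{k}\mid\gamma v=0\}$ is an $\mathbb{E}$-hyperplane, so $|H_{v}|=q^{m(k-1)}$, and the bad vectors form $\bigcup_{v\in U\setminus\{0\}}H_{v}$, of cardinality at most $(|U|-1)\,q^{m(k-1)}=(q^{w}-1)\,q^{m(k-1)}<q^{\,w+m(k-1)}\leqslant q^{mk}=|\mathbb{E}^{k}|$, the last inequality being exactly where $w\leqslant m$ is used. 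Hence a good $\gamma$ exists.

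When $\mathbb{E}$ (equivalently $\mathbb{F}$) is infinite, counting fails, and this is the heart of the matter --- it is precisely [10, Theorem 3.1]. Here I would argue by genericity. Writing $\gamma_{s}=\sum_{l=1}^{m}y_{l,s}\tau_{l}$ with $\mathbb{F}$-parameters $y_{l,s}$ $(1\leqslant l\leqslant m,\ 1\leqslant s\leqslant k)$, the $\tau$-coordinate vector of each $\gamma v_{j}$ depends $\mathbb{F}$-linearly on $(y_{l,s})$, so the $\mathbb{F}$-independence of $\gamma v_{1},\dots,\gamma v_{w}$ amounts to the non-vanishing of some $w\times w$ minor of the resulting $w\times m$ matrix of linear forms; this is a Zariski-open condition on $(y_{l,s})\in\mathbb{F}^{mk}$ cut out by polynomials over $\mathbb{F}$. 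Since $\mathbb{F}$ is infinite, it then suffices to show that these minors do not all vanish identically, i.e.\ that the generic vector $\gamma^{\mathrm{gen}}$ yields $\mathbb{F}$-independent images; this non-degeneracy follows from the algebraic independence of the parameters combined with the $\mathbb{F}$-independence of $v_{1},\dots,v_{w}$, and is exactly the point that [10] establishes via algebraic geometry. The hard part is therefore this genericity step in the infinite case: it cannot be reduced to counting, and one must confirm that the good locus is a nonempty Zariski-open set, after which the infinitude of $\mathbb{F}$ supplies an $\mathbb{F}$-rational --- hence $\mathbb{E}$-valued --- solution $\gamma$.
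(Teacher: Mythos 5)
There is no proof in the paper to compare against: Lemma 2.2 is imported verbatim as [10, Theorem 3.1], and Remark 2.3 records the division of labour --- the finite case is [29, Proposition 10], proved by exactly the counting you use, while the infinite case is the contribution of Berhuy--Fasel--Garotta, established in [10] by algebraic geometry. Your reduction via Lemma 2.1 is correct: with $U=\langle\col(G)\rangle_{\mathbb{F}}$ and $w=\wt(C)=\dim_{\mathbb{F}}(U)\leqslant n\leqslant m$, the lemma is equivalent to producing $\gamma\in\mathbb{E}^{k}$ with $\gamma v\neq0$ for every $v\in U\setminus\{0\}$, and your union bound settles this when $\mathbb{E}$ is finite. (Note, incidentally, that this core claim is precisely the case $t=1$ of (1) of Theorem 4.1, which the paper in turn deduces \emph{from} Lemma 2.2; the two statements are equivalent, so passing between them cannot by itself constitute progress.)

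The genuine gap is the infinite case, which is the entire reason [10] is cited. Your genericity framework is sound as far as it goes: the good $\gamma$'s correspond to the locus of parameters $(y_{l,s})\in\mathbb{F}^{mk}$ where some $w\times w$ minor of the $w\times m$ matrix of linear forms is nonzero, this locus is Zariski-open, and an infinite field meets every nonempty open set defined over it. But the decisive claim --- that some minor is not the identically zero polynomial --- is asserted, not proved. The justification you offer (``algebraic independence of the parameters combined with the $\mathbb{F}$-independence of $v_{1},\dots,v_{w}$'') cannot be sufficient: both of those hypotheses hold equally well when $w>m$, in which case no good $\gamma$ exists at all and the conclusion fails; so any correct argument must use $w\leqslant m$ in an essential, non-formal way, and you never indicate how. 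Worse, non-vanishing of some minor is \emph{equivalent} to the statement that, over $L\triangleq\mathbb{F}(Y_{1,1},\dots,Y_{m,k})$, the generic vector $\Gamma\in(\mathbb{E}\otimes_{\mathbb{F}}L)^{k}$ satisfies $\Gamma u\neq0$ for every nonzero $u\in U\otimes_{\mathbb{F}}L$ --- that is, it is an instance of the very statement being proved, transplanted from $\mathbb{E}/\mathbb{F}$ to $(\mathbb{E}\otimes_{\mathbb{F}}L)/L$ and specialized to one candidate vector --- so the reformulation makes no progress on the hard step. Deferring that step to [10], as you explicitly do, leaves your argument circular as a self-contained proof, although relying on [10] is, to be fair, exactly what the paper itself does.
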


\setlength{\parindent}{2em}
\begin{remark}
When $\mathbb{E}$ is finite, counting methods are extensively used to establish various results for rank metric codes. For example, Lemma 2.2 was established in [29, Proposition 10] by computing the cardinality of the union of a class of subcodes of $C$; in [39, Theorem 12], the authors classified the constant weight codes by using the value function proposed in \cite{31}; in \cite{3}, the authors proved that for a code $C\neq\{0\}$, the largest weight of all the codewords of $C$ is equal to $\min\{m,\wt(C)\}$. Further improving [29, Proposition 10], this result was established by using the standard equations (see [3, Lemma 3.6]), which were used multiple times in \cite{3} to establish various results. As these tools do not directly apply to infinite $\mathbb{E}$, Lemma 2.2 is a perfect substitute for our discussion and enables us to generalize all the above mentioned results to possibly infinite $\mathbb{E}$ via a unified approach.
\end{remark}

\subsection{Evasive subspaces and cutting $r$-blocking sets}
\setlength{\parindent}{2em}
Evasive subspaces have been recently studied in \cite{8,9,33} as a natural generalization of $h$-scattered subspaces (see \cite{19}). It has been shown in [33, Theorem 3.3] that evasive subspaces are deeply connected with the generalized rank weights, which further provide a powerful tool for exploiting the interplay between geometry and coding theory (see \cite{8,33,48} for more details). We refer the reader to \cite{26,44} for generalizations of evasive subspaces and their related codes.

Let $\mathbb{E}/\mathbb{F}$ be a finite dimensional field extension with $\dim_{\mathbb{F}}(\mathbb{E})=m$, and let $X$ be a $k$-dimensional vector space over $\mathbb{E}$. We begin by recalling the definition of evasive subspaces, which has originally been proposed for vector spaces over finite fields and naturally extends to those over infinite fields.

\setlength{\parindent}{0em}
\begin{definition}
For $J\leqslant_{\mathbb{F}}X$ and $(h,r)\in\mathbb{N}\times\mathbb{Z}$, we say that $J$ is $(h,r)$-evasive in $X$ if $\langle J\rangle_{\mathbb{E}}=X$, and for any $M\leqslant_{\mathbb{E}}X$ with $\dim_{\mathbb{E}}(M)=h$, it holds that $\dim_{\mathbb{F}}(J\cap M)\leqslant r$.
\end{definition}

\setlength{\parindent}{2em}
Now we collect some properties of evasive spaces in the following lemma.

\setlength{\parindent}{0em}
\begin{lemma}
{\bf{(1)}}\,\,Let $(h,r)\in\mathbb{N}\times\mathbb{Z}$ with $h\leqslant k$, and let $J\leqslant_{\mathbb{F}}X$ be $(h,r)$-evasive in $X$. Then, it holds that $h\leqslant r$. Moreover, $J$ is $(h-t,r-t)$-evasive in $X$ for all $t\in\{0,1,\dots,h\}$.

{\bf{(2)}}\,\,Let $h\in\{0,1,\dots,k\}$, and let $J\leqslant_{\mathbb{F}}X$ be $(h,h)$-evasive in $X$ with $\dim_{\mathbb{F}}(J)\geqslant k+1$. Then, it holds that $\dim_{\mathbb{F}}(J)\leqslant km/(h+1)$.

{\bf{(3)}}\,\,Let $(b,w)\in\mathbb{N}^{2}$ with $b\leqslant k$, $U\leqslant_{\mathbb{F}}X$ be $(b,w)$-evasive in $X$, $A\leqslant_{\mathbb{E}}X$ such that $\dim_{\mathbb{E}}(A)\triangleq a\leqslant b$, and write $v=\dim_{\mathbb{F}}(U\cap A)$. Then, $(U+A)/A$ is $(b-a,w-v)$-evasive in $X/A$.
\end{lemma}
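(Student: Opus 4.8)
The three parts are logically independent, so the plan is to treat them in turn; the weight of the lemma lies entirely in part (2). For part (1), the inequality $h\leqslant r$ follows from a single witness: since $\langle J\rangle_{\mathbb{E}}=X$ and $h\leqslant k=\dim_{\mathbb{E}}(X)$, I can choose $h$ vectors of $J$ that are $\mathbb{E}$-linearly independent; they are then $\mathbb{F}$-linearly independent and span an $h$-dimensional $M\leqslant_{\mathbb{E}}X$ with $\dim_{\mathbb{F}}(J\cap M)\geqslant h$, so $h\leqslant r$ by $(h,r)$-evasiveness. For the ``moreover'' claim I would begin with an arbitrary $M\leqslant_{\mathbb{E}}X$ of $\mathbb{E}$-dimension $h-t$ and enlarge it to an $h$-dimensional $M'$ by adjoining $t$ vectors $w_{1},\dots,w_{t}\in J$, chosen one at a time outside the current $\mathbb{E}$-span (possible because $J$ spans $X$ and the running span has $\mathbb{E}$-dimension $<h\leqslant k$). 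Since each $w_{i}$ avoids $\langle M,w_{1},\dots,w_{i-1}\rangle_{\mathbb{E}}$, it also avoids the $\mathbb{F}$-span of $(J\cap M)\cup\{w_{1},\dots,w_{i-1}\}$; hence $\dim_{\mathbb{F}}(J\cap M')\geqslant\dim_{\mathbb{F}}(J\cap M)+t$, and $(h,r)$-evasiveness forces $\dim_{\mathbb{F}}(J\cap M)\leqslant r-t$.

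For part (3) the spanning condition is inherited from $\langle U\rangle_{\mathbb{E}}=X$, and the intersection bound is the modular law plus bookkeeping. Given $\bar{M}\leqslant_{\mathbb{E}}X/A$ with $\dim_{\mathbb{E}}(\bar{M})=b-a$, let $M$ be its preimage, so $A\subseteq M$ and $\dim_{\mathbb{E}}(M)=b$. Because $A\subseteq M$, the modular law gives $(U+A)\cap M=(U\cap M)+A$, while $(U\cap M)\cap A=U\cap A$; comparing $\mathbb{F}$-dimensions yields $\dim_{\mathbb{F}}\big(((U+A)/A)\cap\bar{M}\big)=\dim_{\mathbb{F}}(U\cap M)-v\leqslant w-v$, which is exactly $(b-a,w-v)$-evasiveness.

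Part (2) is the real content. First, if $h=k$ the hypothesis $\dim_{\mathbb{F}}(J)\geqslant k+1$ is impossible (take $M=X$), so the statement is vacuous and I may assume $h\leqslant k-1$. My plan is to translate into coding theory through Lemma 2.1: set $U=J$, $n=\dim_{\mathbb{F}}(U)$, and pick $G\in\Mat_{k,n}(\mathbb{E})$ whose columns form an $\mathbb{F}$-basis of $U$. As $\langle U\rangle_{\mathbb{E}}=X=\mathbb{E}^{[k]}$, $G$ has full $\mathbb{E}$-rank $k$, so $C=\langle\row(G)\rangle_{\mathbb{E}}$ is an $[n,k]$ code with $\wt(C)=n$, and $\gamma\mapsto\gamma G$ identifies each $(k-h')$-dimensional subspace $B\leqslant_{\mathbb{E}}\mathbb{E}^{k}$ with a subcode $D=\{\gamma G:\gamma\in B\}\leqslant_{\mathbb{E}}C$, where $M:=B^{\ddagger}$ ranges over all $h'$-dimensional subspaces of $X$. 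By Lemma 2.1, $\wt(D)=n-\dim_{\mathbb{F}}(M\cap U)$, whence $\mathbf{d}_{k-h'}(C)=n-\max\{\dim_{\mathbb{F}}(M\cap U):\dim_{\mathbb{E}}(M)=h'\}$. The evasiveness bounds this maximum above by $h'$, and the witness construction from part (1) produces an $M$ attaining $\geqslant h'$, so $\mathbf{d}_{k-h'}(C)=n-h'$ for $0\leqslant h'\leqslant h$; that is, the top $h+1$ generalized rank weights of $C$ are $n-h,\dots,n$.

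I would finish by passing to the dual code. By the Wei-type duality for generalized rank weights, the values $n-h,\dots,n$ all occurring as weights of $C$ force $\mathbf{d}_{1}(C^{\bot})\geqslant h+2$, i.e.\ every nonzero codeword of the $[n,n-k]$ code $C^{\bot}$ has rank weight at least $h+2$. Applying the rank-metric Singleton bound to $C^{\bot}$ (filter $C^{\bot}$ by the number of leading zero rows of the $m\times n$ matrix representation; a codeword with at most $h+1$ nonzero rows has rank $\leqslant h+1$ and so vanishes) gives $m(n-k)\leqslant n(m-h-1)$ when $n\geqslant m$, which rearranges to $n(h+1)\leqslant mk$; and when $n\leqslant m$ the bound is immediate from $mk/(h+1)\geqslant m\geqslant n$, using $h+1\leqslant k$. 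Either way $\dim_{\mathbb{F}}(J)=n\leqslant mk/(h+1)$. I expect the decisive difficulty to be exactly this sharpness: the obvious inputs --- monotonicity of the $\mathbf{d}_{r}$, the Hamming-type bound $\mathbf{d}_{r}\leqslant n-k+r$, or a direct count over the $h$-dimensional $\mathbb{E}$-subspaces --- only deliver the far weaker $n\leqslant mk-h(m-1)$, and recovering the factor $1/(h+1)$ seems to require routing through $C^{\bot}$ and the $m$-refined Singleton bound. A secondary but important point is that each ingredient remains valid for infinite $\mathbb{E}$ --- in particular the Singleton step, whose proof is a pure $\mathbb{F}$-dimension count --- which is why this route is preferable to the finite-field counting arguments.
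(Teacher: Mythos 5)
Your proposal is correct, but it does considerably more than the paper, whose proof of this lemma consists of a direct argument only for part (3) --- the same modular-law computation you give --- together with citations for the other two parts: [9, Proposition 2.6] for (1), and [19, Theorem 2.3] plus [33, Corollary 4.4] for (2), with a remark that the proofs in [9] and [33] carry over to infinite $\mathbb{E}$. Your basis-extension proof of (1) and your coding-theoretic proof of (2) are thus reconstructions of the cited results rather than a rival to an argument the paper actually writes out. For (2), your chain checks out: the dictionary $\wt(D)=n-\dim_{\mathbb{F}}(B^{\ddagger}\cap U)$ is exactly Lemma 2.1(2); $(h,h)$-evasiveness together with the witness construction from (1) gives $\mathbf{d}_{k-h}(C)\geqslant n-h$; the duality instance you need (the case $c=1$, $s=h+1$) is precisely the fact the paper itself records in the proof of Theorem 4.5; and the row-deletion Singleton bound $m(n-k)\leqslant(m-h-1)n$ is a pure $\mathbb{F}$-dimension count (your case split between $n\geqslant m$ and $n\leqslant m$ is not even needed for that step, since deleting $h+1$ rows is injective regardless of how $m$ and $n$ compare). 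This is almost certainly the route behind [33, Corollary 4.4], so in substance you have supplied the proof that the paper outsources; what your route buys is self-containedness, and what the paper's buys is brevity.

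One point should be corrected rather than left implicit: your closing claim that each ingredient remains valid for infinite $\mathbb{E}$ because everything is a dimension count is not accurate for the Wei-type duality. That duality is a substantive theorem ([23, Theorem I.3], [33, Theorem 3.3]) established in the literature for finite fields, and its validity for arbitrary finite-degree extensions of possibly infinite fields is exactly the kind of assertion that requires the explicit justification the paper gives when it claims the proofs of [9] and [33] extend (and when it invokes the same duality in Theorem 4.5). This is a caveat, not a fatal gap --- your argument is exposed at precisely the same point as the paper's own citations --- but the duality step should carry a citation and a remark, not be folded into the ``dimension count'' claim. (There is, at least, no circularity in invoking it: the duality theorem is not proved via dimension bounds on evasive subspaces.)
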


\begin{proof}
Part (1) follows from [9, Proposition 2.6], and (2) follows from [19, Theorem 2.3] and [33, Corollary 4.4]. We note that although these results were established for vector spaces over finite fields, the proofs in \cite{9,33} directly apply to any infinite $\mathbb{E}$. Therefore we only prove (3). First, it follows from $\langle U\rangle_{\mathbb{E}}=X$ that $\langle (U+A)/A\rangle_{\mathbb{E}}=X/A$. Next, let $T\leqslant_{\mathbb{E}}X/A$ with $\dim_{\mathbb{E}}(T)=b-a$. Then, there exists $W\leqslant_{\mathbb{E}}X$ such that $A\subseteq W$, $T=W/A$. Since $\dim_{\mathbb{E}}(W)=b$ and $U$ is $(b,w)$-evasive in $X$, we have $\dim_{\mathbb{F}}(U\cap W)\leqslant w$, which, together with $((U+A)/A)\cap T=(A+(U\cap W))/A$, implies that
$$\dim_{\mathbb{F}}\left(T\cap\left((U+A)/A\right)\right)=\dim_{\mathbb{F}}(U\cap W)-\dim_{\mathbb{F}}((U\cap W)\cap A)\leqslant w-\dim_{\mathbb{F}}(U\cap A)=w-v,$$
which further establishes the desired result.
\end{proof}

\setlength{\parindent}{2em}
Now we consider cutting $r$-blocking sets, which were introduced by Bonini and Borello in \cite{12}, and boil down to cutting blocking sets when $r=1$ (see [12, Definitions 3.2 and 3.4]). In \cite{12}, the authors used cutting blocking sets to construct minimal Hamming metric codes which do not satisfy the Ashikhmin-Barg condition (see \cite{6}). Later, it has been proven in \cite{1,46} that a Hamming metric linear code is minimal if and only if all the columns of its generator matrix form a cutting blocking set. Such a connection between cutting blocking sets and minimal codes has been extended to rank metric codes in \cite{3}, and to sum-rank metric codes in \cite{13,44}, respectively. In Sections 3.1 and 4.3, we will further generalize this connection to cutting $r$-blocking sets and $r$-minimal codes with respect to various metrics. We also refer the reader to \cite{11,26,31} for more recent results on cutting $r$-blocking sets.

Let $R$ be a possibly infinite division ring, and let $L$ be a $k$-dimensional left $R$-module. We begin by recalling the definition of cutting $r$-blocking sets. This definition has originally been proposed for vector spaces over finite fields, and we naturally extend it to modules over division rings.

\begin{definition}(see [1, Definition 3.2, Proposition 3.3])
For $S\subseteq L$ and $r\in\mathbb{N}$, we say that $S$ is a cutting $r$-blocking set of $L$ if for any $V\leqslant_{R}L$ with $\dim_{R}(V)=k-r$, it holds that $\langle S\cap V\rangle_{R}=V$. A cutting $1$-blocking set is simply referred to as a cutting blocking set.
\end{definition}

\setlength{\parindent}{2em}
The following characterization of cutting $r$-blocking sets will be used in Section 4.2.

\setlength{\parindent}{0em}
\begin{proposition}
Let $S\subseteq L$ with $0\in S$, and let $r\in\{0,1,\dots,k-1\}$. Then, $S$ is a cutting $r$-blocking set of $L$ if and only if for any $W,I\leqslant_{R}L$ such that $\dim_{R}(W)=k-r-1$, $\dim_{R}(I)=1$, it holds that $(S+W)\cap I\neq\{0\}$.
\end{proposition}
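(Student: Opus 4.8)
The plan is to prove both directions of the equivalence by unwinding the definitions and exploiting the correspondence between subspaces of dimension $k-r$ and pairs $(W,I)$ where $W$ has dimension $k-r-1$ and $I$ is a line. The key bookkeeping observation is that every $(k-r)$-dimensional submodule $V$ can be written as $V=W\oplus I$ for suitable $W$ of dimension $k-r-1$ and a line $I$, and conversely every such pair with $I\not\subseteq W$ spans a $(k-r)$-dimensional submodule $V=W+I$. Since $0\in S$, the condition $\langle S\cap V\rangle_R=V$ is a statement about which lines of $V$ are ``hit'' by the generated submodule, and the pair-formulation reformulates this line by line.

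\medskip

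For the forward direction, assume $S$ is a cutting $r$-blocking set, and fix $W,I\leqslant_R L$ with $\dim_R(W)=k-r-1$ and $\dim_R(I)=1$. I would first dispose of the trivial case $I\subseteq W$, where $(S+W)\cap I\supseteq(\{0\}+W)\cap I=I\neq\{0\}$ since $0\in S$; so assume $I\not\subseteq W$ and set $V=W+I$, which has $\dim_R(V)=k-r$. By the blocking hypothesis, $\langle S\cap V\rangle_R=V$. Since $I\subseteq V=\langle S\cap V\rangle_R$, there must exist some nonzero $x\in I$ expressible as an $R$-combination of elements of $S\cap V$; such an element lies in $(S\cap V)$-span, hence in the form $s+w'$ after reducing modulo $W$. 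Concretely I would argue that because $V/W\cong I$ is one-dimensional and $V$ is generated by $S\cap V$, at least one element $s\in S\cap V$ must satisfy $s\notin W$, and then $s\in V=W+I$ gives $s=w+\lambda i$ with $\lambda\neq0$, whence $\lambda i=s-w\in(S+W)\cap I$ is nonzero.

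\medskip

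For the converse, assume the pair-condition holds, and let $V\leqslant_R L$ with $\dim_R(V)=k-r$; I must show $\langle S\cap V\rangle_R=V$. Set $J=\langle S\cap V\rangle_R\leqslant_R V$ and suppose for contradiction $J\subsetneq V$. Then I would choose a hyperplane $W$ of $V$ (dimension $k-r-1$) containing $J$, and a line $I\subseteq V$ with $I\not\subseteq W$, so that $V=W\oplus I$. The pair-condition yields a nonzero element $\lambda i\in(S+W)\cap I$, i.e.\ $\lambda i=s+w$ for some $s\in S$, $w\in W$. The crucial point is to argue $s\in S\cap V$: since $w\in W\subseteq V$ and $\lambda i\in I\subseteq V$, we get $s=\lambda i-w\in V$, so indeed $s\in S\cap V\subseteq J\subseteq W$. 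But then $\lambda i=s+w\in W$, contradicting $\lambda i\neq0$ and $I\cap W=\{0\}$. This contradiction forces $J=V$.

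\medskip

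The main obstacle I anticipate is the converse direction, specifically making the choice of $W$ and $I$ rigorous over a \emph{non-commutative} division ring: I must ensure that a proper submodule $J\subsetneq V$ can always be enlarged to a hyperplane $W$ of $V$, and that a complementary line exists, using only that submodules over a division ring behave like vector subspaces (bases, dimension, complements all exist). Over a division ring these facts hold, but I would state them carefully rather than silently invoke commutative intuition; in particular the scalar $\lambda$ lives on the correct side ($L$ being a \emph{left} $R$-module, so $I=R i$ and elements of $I$ are $\lambda i$), and I must keep track of left-multiplication throughout so that ``$\langle S\cap V\rangle_R=V$'' and the generated-line arguments remain valid. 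The forward direction is comparatively routine once the $I\subseteq W$ degenerate case is separated out.
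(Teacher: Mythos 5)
Your proof is correct and follows essentially the same route as the paper's: the same case split on whether $I\subseteq W$ in the forward direction (using $0\in S$ for the degenerate case and the blocking property of $V=W+I$ otherwise), and the same reduction of the converse to hyperplanes $W\supseteq\langle S\cap V\rangle_R$ of $V$ with a complementary line $I$. The only cosmetic difference is that the paper isolates the equivalence $S\cap(I+W)\subseteq W\Longleftrightarrow(S+W)\cap I\subseteq W$ as a standalone observation, whereas you inline exactly that element-level computation ($\lambda i=s+w$, etc.) in both directions.
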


\begin{proof}
We begin by noting that for any $W,I\leqslant_{R}L$, it holds that $S\cap(I+W)\subseteq W$ if and only if $(S+W)\cap I\subseteq W$. Next, we establish the ``only if'' part. Let $W,I\leqslant_{R}L$ such that $\dim_{R}(W)=k-r-1$, $\dim_{R}(I)=1$. If $W\cap I\neq\{0\}$, then we have $I\subseteq W\subseteq S+W$ (since $0\in S$), which implies that $(S+W)\cap I=I\neq\{0\}$, as desired; if $W\cap I=\{0\}$, then we have $\dim_{R}(I+W)=k-r$, which implies that $\langle S\cap(I+W)\rangle_{R}=I+W$, and in particular, $S\cap(I+W)\nsubseteq W$, which further implies that $(S+W)\cap I\neq\{0\}$, as desired. Now we establish the ``if'' part. Let $V\leqslant_{R}L$ with $\dim_{R}(V)=k-r$. To prove that $\langle S\cap V\rangle_{R}=V$, it suffices to show that $S\cap V\nsubseteq W$ for an arbitrary $W\leqslant_{R}V$ with $\dim_{R}(W)=k-r-1$. Indeed, let $I\leqslant_{R}V$ such that $W+I=V$, $W\cap I=\{0\}$. Then, we have $(S+W)\cap I\neq \{0\}$, which, together with $0\in S$ and $W\cap I=\{0\}$, implies that $(S+W)\cap I\nsubseteq W$, which further implies that $S\cap V=S\cap(I+W)\nsubseteq W$, as desired.
\end{proof}

\subsection{Some preliminaries for $q$-binomial coefficient}
\setlength{\parindent}{2em}
For any $(n,r)\in\mathbb{N}^{2}$ and $q\in\mathbb{R}$, $q>1$, following \cite{26}, the $q$-binomial coefficient of $(n,r)$, denoted by $\bin_{q}(n,r)$, is defined as
\begin{equation}\bin_{q}(n,r)\triangleq\begin{cases}
0,&n+1\leqslant r,\\
\prod_{i=1}^{r}\frac{q^{i+n-r}-1}{q^{i}-1},&n\geqslant r;
\end{cases}
\end{equation}
moreover, define
\begin{equation}\delta_{q}(n,r)\triangleq\prod_{i=0}^{r-1}(q^{n}-q^{i}).\end{equation}
When $q$ is a prime power and $\mathbb{F}$ is a finite field with $|\mathbb{F}|=q$, it is well known that $\bin_{q}(n,r)$ is equal to the number of all the $r$-dimensional $\mathbb{F}$-subspaces of $\mathbb{F}^{n}$; moreover, for any $m,n\in\mathbb{Z}^{+}$ and $r\in\mathbb{N}$, $\delta_{q}(m,r)\bin_{q}(n,r)$ is equal to the number of all the matrices in $\Mat_{m,n}(\mathbb{F})$ with rank $r$.

\setlength{\parindent}{2em}
In the following proposition, we give two inequalities which will be used in Section 5.3 when we establish existence results for $r$-minimal codes.

\setlength{\parindent}{0em}
\begin{proposition}
{\bf{(1)}}\,\,For any $a\in\mathbb{R}$ with $a>1$ and $n\in\mathbb{N}$, it holds that
$$\prod_{i=1}^{n}(1-a^{-i})>1-a^{-1}-a^{-2}+a^{-5}>1-a^{-1}-a^{-2}.$$

{\bf{(2)}}\,\,For any $a\in\mathbb{R}$ with $a>\frac{1+\sqrt{5}}{2}$ and $(m,n,h)\in\mathbb{N}^{3}$ with $1\leqslant h\leqslant\min\{m,n\}$, it holds that
$$a^{-h(m+n-h)}\left(\sum_{i=0}^{h}\delta_{a}(m,i)\bin_{a}(n,i)\right)<\frac{1}{(a-1)(a^{2}-a-1)a^{m+n-2h-2}}+\frac{a^{2}}{a^{2}-a-1}.$$
\end{proposition}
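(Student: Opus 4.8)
The plan is to deduce both parts from sharp estimates on the infinite product $\prod_{i\geq1}(1-x^i)$ with $x=a^{-1}\in(0,1)$, treating them in turn.

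For Part (1), the right inequality $1-x-x^2+x^5>1-x-x^2$ is immediate since $x^5>0$, so the work is in the left one. I would first reduce to the infinite product: each factor $1-x^i$ lies in $(0,1)$, so the partial products decrease and $\prod_{i=1}^n(1-x^i)\geq\prod_{i=1}^\infty(1-x^i)$; hence it suffices to prove $\prod_{i=1}^\infty(1-x^i)>1-x-x^2+x^5$ (the passage from finite to infinite keeps the final inequality strict). I would then invoke Euler's pentagonal number theorem, $\prod_{i=1}^\infty(1-x^i)=1+\sum_{k\geq1}(-1)^k a_k$ with $a_k=x^{k(3k-1)/2}+x^{k(3k+1)/2}$, whose first terms are $1-x-x^2+x^5+x^7-x^{12}-x^{15}+\cdots$. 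Because the exponents increase with $k$, the sequence $(a_k)$ strictly decreases to $0$, so the relevant tail is an alternating series and $\prod_{i=1}^\infty(1-x^i)-(1-x-x^2+x^5)=x^7-a_3+a_4-\cdots\geq x^7-a_3=x^7(1-x^5-x^8)$.

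This lower bound is positive precisely when $x^5+x^8<1$, and I expect the behaviour as $x\to1^-$ to be the main obstacle, since there both sides tend to $0$ and this crude alternating-series bound degenerates. I would dispatch it with a case split at $x=3/4$. For $x\leq3/4$ one has $x^5+x^8\leq(3/4)^5+(3/4)^8<1$, so $x^7(1-x^5-x^8)>0$ and the claim holds. For $x>3/4$ I would instead use the factorization $1-x-x^2+x^5=(1-x)(1-x^2-x^3-x^4)$: the polynomial $g(x)=1-x^2-x^3-x^4$ is strictly decreasing on $(0,1)$ with $g(3/4)=-77/256<0$, so $g(x)<0$ there and thus $1-x-x^2+x^5<0<\prod_{i=1}^\infty(1-x^i)$, again giving the inequality.

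For Part (2), I would normalise the summands. Writing $N_i=\delta_a(m,i)\bin_a(n,i)$ and factoring each $a^m-a^j$ and each $(a^{j+n-i}-1)/(a^j-1)$ by its leading power of $a$ gives $N_i=a^{i(m+n-i)}C_i$, where $C_i=\bigl(\prod_{l=m-i+1}^m(1-a^{-l})\bigr)\bigl(\prod_{l=n-i+1}^n(1-a^{-l})\bigr)\big/\prod_{j=1}^i(1-a^{-j})$. The numerator of $C_i$ is a product of factors in $(0,1]$, while its denominator exceeds $1-a^{-1}-a^{-2}>0$ by Part (1)---this is exactly where the hypothesis $a>\tfrac{1+\sqrt5}{2}$ is used, to make $a^2-a-1>0$ so that the bound is positive---so $C_i<1/(1-a^{-1}-a^{-2})=a^2/(a^2-a-1)$ for every $i$. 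Finally I would carry out the exponent bookkeeping. Put $s=m+n-2h\geq0$ (using $h\leq\min\{m,n\}$) and $t=h-i$; then $i(m+n-i)-h(m+n-h)=-t(s+t)$, so $a^{-h(m+n-h)}\sum_{i=0}^hN_i=\sum_{t=0}^h a^{-t(s+t)}C_{h-t}$. The term $t=0$ is $C_h<a^2/(a^2-a-1)$, matching the second summand on the right. For $t\geq1$ I would bound $a^{-t(s+t)}\leq a^{-t(s+1)}$ and sum the geometric series $\sum_{t\geq1}a^{-t(s+1)}=1/(a^{s+1}-1)\leq1/((a-1)a^s)$, the last step amounting to $a^s\geq1$; multiplying by $a^2/(a^2-a-1)$ produces exactly $1/((a-1)(a^2-a-1)a^{s-2})$, the first summand. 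Combining the two gives the stated bound; the only delicate point is tracking strictness, which is supplied by the $t\geq2$ terms and the finite-versus-infinite geometric comparison even in the boundary case $s=0$.
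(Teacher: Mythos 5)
Your proposal is correct, and the two parts should be judged separately. For Part (1) you take a genuinely different route from the paper: the paper's argument is a short elementary one, applying the Weierstrass inequality to the factors with $i\geqslant 3$, namely $\prod_{i=3}^{n}(1-a^{-i})\geqslant 1-\sum_{i=3}^{n}a^{-i}>1-\frac{a^{-2}}{a-1}$, and then checking the exact identity $(1-a^{-1})(1-a^{-2})\left(1-\frac{a^{-2}}{a-1}\right)=1-a^{-1}-a^{-2}+a^{-5}$. You instead pass to the infinite product, invoke Euler's pentagonal number theorem, estimate the tail by the alternating-series bound $x^{7}(1-x^{5}-x^{8})$, and resolve the degenerate regime near $x=1$ by the case split at $x=3/4$ together with the factorization $1-x-x^2+x^5=(1-x)(1-x^2-x^3-x^4)$; all of these steps check out (the generalized pentagonal exponents $1,2,5,7,12,15,\dots$ do make the grouped terms strictly decrease, and $g(3/4)=-77/256<0$ is right). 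Your version needs noticeably heavier machinery plus a case analysis, but it buys something the paper's computation hides: it explains that $1-a^{-1}-a^{-2}+a^{-5}$ is precisely a truncation of the pentagonal number series, so the paper's bound is the natural one. For Part (2) your proof is essentially the paper's: both reduce each summand to $a^{i(m+n-i)}$ times a ratio that Part (1) bounds by $\frac{a^{2}}{a^{2}-a-1}$ (you package numerator factors in $(0,1]$ against a denominator exceeding $1-a^{-1}-a^{-2}$ in a single quantity $C_i$, while the paper bounds $\delta_{a}(m,i)\leqslant a^{mi}$ and $\bin_{a}(n,i)$ separately), and both then kill the terms with $i<h$ by a geometric series; your reindexing $t=h-i$ with $\sum_{t\geqslant1}a^{-t(s+1)}\leqslant\frac{1}{(a-1)a^{s}}$ is marginally tighter than the paper's comparison of the distinct exponents $i(m+n-i)$ with the full sum $\sum_{j=0}^{(h-1)(m+n-h+1)}a^{j}$, but the difference is immaterial and the resulting bound is identical.
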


\begin{proof}
{\bf{(1)}}\,\,Without loss of generality, suppose that $n\geqslant2$. From the Weierstrass inequality (see \cite{15}), we deduce that
$$\prod_{i=3}^{n}(1-a^{-i})\geqslant1-\left(\sum_{i=3}^{n}a^{-i}\right)>1-\frac{a^{-2}}{a-1},$$
which further implies that
$$\prod_{i=1}^{n}(1-a^{-i})>(1-a^{-1})(1-a^{-2})\left(1-\frac{a^{-2}}{a-1}\right)=1-a^{-1}-a^{-2}+a^{-5},$$
as desired.

{\bf{(2)}}\,\,Since $a>\frac{1+\sqrt{5}}{2}$, we have $1-a^{-1}-a^{-2}>0$. Consider $t\in\{0,1,\dots,h\}$. From (2.4), (2.5) and (1), we deduce that $0<\delta_{a}(m,t)\leqslant a^{mt}$ and
$$\bin_{a}(n,t)\leqslant\prod_{i=1}^{t}\frac{a^{i+n-t}}{a^{i}-1}=a^{t(n-t)}\left(\prod_{i=1}^{t}(1-a^{-i})\right)^{-1}<a^{t(n-t)}\cdot(1-a^{-1}-a^{-2})^{-1}.$$
It then follows that
\begin{equation}\sum_{i=0}^{h}\delta_{a}(m,i)\bin_{a}(n,i)<\sum_{i=0}^{h}a^{mi}\cdot a^{i(n-i)}\cdot(1-a^{-1}-a^{-2})^{-1}=\left(\sum_{i=0}^{h} a^{i(m+n-i)}\right)\cdot\frac{a^{2}}{a^{2}-a-1}.\end{equation}
Noticing that $i(m+n-i)<j(m+n-j)$ for all $i,j\in\mathbb{N}$ with $i<j\leqslant\min\{m,n\}$, we have
$$\sum_{i=0}^{h-1} a^{i(m+n-i)}\leqslant\sum_{s=0}^{(h-1)(m+n-h+1)} a^{s}<\frac{a^{(h-1)(m+n-h+1)+1}}{a-1},$$
which, together with (2.6), implies that
$$\sum_{i=0}^{h}\delta_{a}(m,i)\bin_{a}(n,i)<\left(\frac{a^{(h-1)(m+n-h+1)+1}}{a-1}+a^{h(m+n-h)}\right)\cdot\frac{a^{2}}{a^{2}-a-1},$$
which further establishes (2) via some straightforward computation which we omit.
\end{proof}

\section{General $r$-minimal codes supported on lattices}
\setlength{\parindent}{2em}
Throughout this section, let $(Y,\preccurlyeq)$ be a lattice with the least element $0_{Y}$. Necessarily, $(Y,\preccurlyeq)$ is a poset, and for any $u,v\in Y$, $\{u,v\}$ has a supremum in $(Y,\preccurlyeq)$, which we denote by $u\curlyvee v$ (see [20, Definitions 2.1 and 2.4]). A \textit{chain} in $(Y,\preccurlyeq)$ is a subset $I\subseteq Y$ such that for any $u,v\in I$, either $u\preccurlyeq v$ or $v\preccurlyeq u$ holds. We suppose that $(Y,\preccurlyeq)$ \textit{has length $n\in\mathbb{N}$}, i.e., all the chains in $(Y,\preccurlyeq)$ are finite and has cardinality less than or equal to $n+1$, and the largest cardinality of a chain in $(Y,\preccurlyeq)$ is equal to $n+1$ (see [20, Definition 2.37]). Let $\rk:Y\longrightarrow\mathbb{N}$ be the length function of $(Y,\preccurlyeq)$, i.e.,
$$\forall~y\in Y:\rk(y)=\max\{|I|-1\mid \text{$I$ is a chain in $(Y,\preccurlyeq)$ containing $y$ as its greatest element}\}.$$
Let $R$ be a division ring, and let $X$ be an $n$-dimensional left $R$-module. Any $R$-submodule of $X$ is referred to as a \textit{code}. Let $\sigma:X\longrightarrow Y$ satisfy the following three conditions:
\begin{equation}\sigma(0)=0_{Y},\end{equation}
\begin{equation}\forall~\alpha\in X,\forall~\beta\in X:\sigma(\alpha+\beta)\preccurlyeq\sigma(\alpha)\curlyvee\sigma(\beta),\end{equation}
\begin{equation}\forall~c\in R,\forall~\beta\in X:\sigma(c\beta)\preccurlyeq\sigma(\beta).\end{equation}
Define $\mu:Y\longrightarrow2^{X}$ as
\begin{equation}\forall~y\in Y:\mu(y)=\{\alpha\in X\mid\sigma(\alpha)\preccurlyeq y\}.\end{equation}
It follows from (3.1)--(3.3) that $\mu(y)\leqslant_{R}X$ for all $y\in Y$. We further assume that
\begin{equation}\forall~y\in Y:\dim_{R}(\mu(y))=\rk(y).\end{equation}
Since $(Y,\preccurlyeq)$ has finite length $n$, any subset of $Y$ has a supremum in $(Y,\preccurlyeq)$. Now, we define $\lambda:2^{X}\longrightarrow Y$ as
\begin{equation}\forall~A\subseteq X:\lambda(A)=\text{the supremum of $\{\sigma(\alpha)\mid\alpha\in A\}$ in $(Y,\preccurlyeq)$}.\end{equation}

\setlength{\parindent}{2em}
Now we are ready to define $r$-minimal codes with respect to the ``support'' $\sigma$.

\setlength{\parindent}{0em}
\begin{definition}
{\bf{(1)}}\,\,Let $C\leqslant_{R}X$. A submodule $D\leqslant_{R}C$ is said to be $\sigma$-minimal in $C$ if for any $B\leqslant_{R}C$ such that $\dim_{R}(B)=\dim_{R}(D)$, $\lambda(B)\preccurlyeq \lambda(D)$, it holds that $\lambda(B)=\lambda(D)$. Moreover, for $r\in\mathbb{N}$, $C$ is said to be $(\sigma,r)$-minimal if every $r$-dimensional $R$-submodule of $C$ is $\sigma$-minimal in $C$.

{\bf{(2)}}\,\,Let $C\leqslant_{R}X$. A submodule $D\leqslant_{R}C$ is said to be $\sigma$-maximal in $C$ if for any $B\leqslant_{R}C$ such that $\dim_{R}(B)=\dim_{R}(D)$, $\lambda(D)\preccurlyeq \lambda(B)$, it holds that $B=D$.
\end{definition}

\setlength{\parindent}{2em}
Definition 3.1 is general in the sense that with $\sigma$ appropriately set, it includes $r$-minimal codes (in particular, minimal codes) with respect to various metrics as special cases; moreover, the term ``$\lambda(B)=\lambda(D)$'' in (1) of Definition 3.1 can be replaced by ``$B=D$'' (see Section 3.1 and Theorem 3.2 for more details). We also note that $\sigma$-maximal subcodes is inspired by the notion of maximal codewords proposed in [2, Remark 1.4].

\subsection{Examples}
\setlength{\parindent}{2em}
In this subsection, by setting $\sigma$ appropriately, we consider $r$-minimal codes with respect to various settings and discuss some of their basic properties. We begin with rank metric codes.

\begin{example}
Let $\mathbb{E}/\mathbb{F}$ be a finite dimensional field extension, and fix $n\in\mathbb{Z}^{+}$. Set $R=\mathbb{E}$, $X=\mathbb{E}^{n}$, $Y=\{U\mid U\leqslant_{\mathbb{F}}\mathbb{F}^{n}\}$; moreover, set $\sigma(\alpha)=\rsupp(\alpha)$, $\lambda(A)=\chi(A)$ as in Section 2.1, respectively, and set $\mu(U)=\langle U\rangle_{\mathbb{E}}$ for all $U\in Y$. Then, $(Y,\subseteq)$ is a lattice of length $n$, $\rk(U)=\dim_{\mathbb{F}}(U)$ for all $U\in Y$, and $\sigma$, $\mu$, $\lambda$ satisfy (3.1)--(3.6). Now for $C\leqslant_{\mathbb{E}}\mathbb{E}^{n}$ and $D\leqslant_{\mathbb{E}}C$, one can check that $D$ is $\sigma$-minimal in $C$ if and only if $D$ is rank minimal in $C$; moreover, for any $r\in\mathbb{N}$, $C$ is $(\sigma,r)$-minimal if and only if $C$ is $r$-minimal with respect to rank metric. Furthermore, in Section 4, we will characterize rank minimal subcodes and $r$-minimal rank metric codes in terms of generator matrices and cutting $r$-blocking sets (see Theorem 4.3 for more details).
\end{example}

\setlength{\parindent}{2em}
Next, we consider $r$-minimal Hamming metric codes.

\begin{example}
Suppose that $R$ is a finite field, and fix $n\in\mathbb{Z}^{+}$. Set $X=R^{n}$, $Y=2^{\{1,\dots,n\}}$; moreover, set $\sigma$, $\mu$ and $\lambda$ as $\sigma(\alpha)=\{i\in\{1,\dots,n\}\mid\alpha_i\neq0\}$, $\mu(I)=\{\alpha\in X\mid\sigma(\alpha)\subseteq I\}$ and $\lambda(A)=\bigcup_{\alpha\in A}\sigma(\alpha)$, respectively. Then, $(Y,\subseteq)$ is a lattice of length $n$, $\rk(I)=|I|$ for all $I\in Y$, and $\sigma$, $\mu$, $\lambda$ satisfy (3.1)--(3.6).

Now consider $C\leqslant_{R}X$ with $\dim_{R}(C)=k\geqslant1$. For any $D\leqslant_{R}C$, we say that $D$ is Hamming minimal (\textit{resp.}, Hamming maximal) in $C$ if $D$ is $\sigma$-minimal (\textit{resp.}, $\sigma$-maximal) in $C$. Moreover, for any $r\in\mathbb{N}$, we say that $C$ is $r$-minimal with respect to Hamming metric if $C$ is $(\sigma,r)$-minimal. When $r=1$, these definitions recover the definitions of minimal codewords, maximal codewords and minimal codes with respect to Hamming metric (see, e.g., [2, Definition 1.3, Remark 1.4]). The $r$-minimality of $C$ can be characterized in terms of its generator matrix. Indeed, let $G\in\Mat_{k,n}(R)$ such that $C=\langle\row(G)\rangle_{R}$, and let $S=\col(G)$. Then, for $B\leqslant_{R}R^{k}$ and $D\triangleq\{\gamma G\mid\gamma\in B\}$, $D$ is Hamming minimal in $C$ if and only if $\langle{B^{\ddagger}}\cap S\rangle_{R}=B^{\ddagger}$; moreover, for $r\in\{0,1,\dots,k\}$, $C$ is $r$-minimal if and only if $S$ is a cutting $r$-blocking set of $R^{[k]}$. These two facts, whose proofs are omitted, generalize the well known equivalence between minimal Hamming metric codes and cutting blocking sets (see [1, Theorem 3.4], [46, Theorem 3.2]).
\end{example}

\setlength{\parindent}{2em}
In the following example, we consider $r$-minimal codes with respect to sum-rank metric (see \cite{13,34,38,44}), a metric that includes both Hamming metric and rank metric as special cases.

\begin{example}
Let $\mathbb{E}/\mathbb{F}$ be a finite dimensional field extension, $t\in\mathbb{Z}^{+}$, and $(w_1,\dots,w_t)\in(\mathbb{Z}^{+})^{t}$. For $i\in\{1,\dots,t\}$, let $Q_i=\{U\mid U\leqslant_{\mathbb{F}}\mathbb{F}^{w_i}\}$, and define $\rsupp(\alpha)$ and $\chi(A)$ as in Section 2.1 for all $\alpha\in\mathbb{E}^{w_i}$ and $A\subseteq\mathbb{E}^{w_i}$. Set $R=\mathbb{E}$, $X=\prod_{i=1}^{t}\mathbb{E}^{w_i}$, $Y=\prod_{i=1}^{t}Q_i$, $n=\sum_{i=1}^{t}w_{i}$. Let $\preccurlyeq$ denote the partial order on $Y$ such that $(L_1,\dots,L_t)\preccurlyeq(M_1,\dots,M_t)$ if and only if $L_i\subseteq M_{i}$ for all $i\in\{1,\dots,t\}$. Moreover, set $\sigma$, $\mu$ and $\lambda$ as $\sigma(\theta)=(\rsupp(\theta_1),\dots,\rsupp(\theta_t))$, $\mu(U_1,\dots,U_{t})=\prod_{i=1}^{t}\langle U_{i}\rangle_{\mathbb{E}}$ and $\lambda(B)=(\chi(\{\theta_1\mid\theta\in B\}),\dots,\chi(\{\theta_t\mid\theta\in B\}))$, respectively. Then, $(Y,\preccurlyeq)$ is a lattice of length $n$, $\rk(U_1,\dots,U_{t})=\sum_{i=1}^{t}\dim_{\mathbb{F}}(U_{i})$ for all $(U_1,\dots,U_{t})\in Y$, and $\sigma$, $\mu$, $\lambda$ satisfy (3.1)--(3.6).

Now consider $C\leqslant_{\mathbb{E}}X$ with $\dim_{\mathbb{E}}(C)=k$. $C$ is referred to as a $[(w_1,\dots,w_t),k]$ sum-rank metric code. For any $D\leqslant_{\mathbb{E}}C$, we say that $D$ is sum-rank minimal in $C$ if $D$ is $\sigma$-minimal in $C$. Moreover, for any $r\in\mathbb{N}$, we say that $C$ is $r$-minimal with respect to sum-rank metric if $C$ is $(\sigma,r)$-minimal. With $r$ set to be $1$, these definitions recover the definitions of minimal codewords and minimal codes with respect to sum-rank metric (see [13, Definition 2.1], [44, Definition 9.21]). We will show in Section 4 that similar to rank metric and Hamming metric, the $r$-minimality of $C$ can be characterized in terms of matrices and cutting $r$-blocking sets (see Remark 4.3 for more details).
\end{example}

\setlength{\parindent}{2em}
As in Examples 3.1 and 3.3, sum-rank metric may be roughly regarded as the Cartesian product of rank metric. Such an observation leads to the following more general construction.

\begin{example}
Let $\Lambda$ be a nonempty finite set. For any $i\in\Lambda$, let $(Q_i,\lessdot_{i})$ be a lattice of length $w_i\in\mathbb{N}$ with the length function $\rho_i:Q_i\longrightarrow\mathbb{N}$, $P_{i}$ be an $w_i$-dimensional left $R$-module, and let $\xi_{i}:P_{i}\longrightarrow Q_{i}$ satisfy (3.1)--(3.3); moreover, define $\eta_{i}: Q_{i}\longrightarrow2^{P_{i}}$ and $\tau_{i}: 2^{P_{i}}\longrightarrow Q_{i}$ as in (3.4) and (3.6), with $X$, $Y$ and $\sigma$ replaced by $P_i$, $Q_i$ and $\xi_i$, respectively, and suppose that $\eta_{i}$ and $\rho_{i}$ satisfy (3.5). Set $Y=\prod_{i\in\Lambda}Q_i$, $n=\sum_{i\in\Lambda}w_{i}$, $X=\prod_{i\in\Lambda}P_i$, and let $\preccurlyeq$ denote the partial order on $Y$ such that $g\preccurlyeq h$ if and only if $g_i\lessdot_{i} h_i$ for all $i\in\Lambda$. Moreover, set $\sigma$, $\mu$ and $\lambda$ as $\sigma(\alpha)=(\xi_{i}(\alpha_i)\mid i\in\Lambda)$, $\mu(y)=\prod_{i\in\Lambda}\eta_{i}(y_i)$ and $\lambda(B)=(\tau_{i}(\{\beta_{i}\mid \beta\in B\})\mid i\in\Lambda)$, respectively. Then, $(Y,\preccurlyeq)$ is a lattice of length $n$, $\rk(g)=\sum_{i\in\Lambda}\rho_i(g_i)$ for all $g\in Y$, and $\sigma$, $\mu$, $\lambda$ satisfy (3.1)--(3.6).

Now let $C_{i}\leqslant_{R}P_{i}$ for all $i\in\Lambda$, and let $M\triangleq\prod_{i\in\Lambda}C_{i}$. Then, for $B\leqslant_{R}M$, we infer that $B$ is $\sigma$-minimal in $M$ if and only if $B=\prod_{i\in\Lambda}D_{i}$ for some $(D_{i}\mid i\in\Lambda)$, where $D_{i}\leqslant_{R}C_{i}$ and $D_{i}$ is $\xi_{i}$-minimal in $C_{i}$ for all $i\in\Lambda$. Moreover, for $r\in\{1,\dots,\dim_{R}(M)-1\}$, $M$ is $(\sigma,r)$-minimal if and only if there exists $l\in\Lambda$ such that $C_l$ is $(\xi_{l},r)$-minimal and $C_{i}=\{0\}$ for all $i\in\Lambda-\{l\}$. We omit the details of the proofs of these two facts.
\end{example}

\setlength{\parindent}{2em}
Next, we consider the notion of outer minimal codes, which has recently been introduced in \cite{4}.

\begin{example}
Suppose that $R$ is a field and $\mathbb{P}$ is an $m$-dimensional field extension of $R$. Let $t\in\mathbb{Z}^{+}$. Set $X=\mathbb{P}^{t}$, $Y=\{U\mid U\leqslant_{R}\mathbb{P}\}^{t}$ and $n=mt$. Moreover, set $\sigma$, $\mu$ and $\lambda$ as $\sigma(\alpha)=(R\alpha_1,\dots,R\alpha_t)$, $\mu(L_1,\dots,L_t)=\prod_{i=1}^{t}L_i$ and $\lambda(B)=(\langle\{\beta_{1}\mid \beta\in B\}\rangle_{R},\dots,\langle\{\beta_{t}\mid \beta\in B\}\rangle_{R})$, respectively. Then, $(Y,\preccurlyeq)$ is a lattice of length $n$, $\rk(L_1,\dots,L_t)=\sum_{i\in\Lambda}\dim_{R}(L_i)$ for all $(L_1,\dots,L_t)\in Y$, and $\sigma$, $\mu$, $\lambda$ satisfy (3.1)--(3.6).

Let $C\leqslant_{R}X$. For any $D\leqslant_{R}C$, we say that $D$ is outer minimal in $C$ if $D$ is $\sigma$-minimal in $C$; moreover, for any $r\in\mathbb{N}$, we say that $C$ is outer $r$-minimal if $C$ is $(\sigma,r)$-minimal. When $r=1$, these definitions recover the definitions of outer minimal codewords and outer minimal codes which were proposed for $\mathbb{P}$-subspaces of $X$ (see [4, Definitions 21 and 23]). Now suppose that $C$ is a $\mathbb{P}$-subspace of $X$ with $\dim_{\mathbb{P}}(C)=k\geqslant1$. Let $G\in\Mat_{k,t}(\mathbb{P})$ such that $C=\langle\row(G)\rangle_{\mathbb{P}}$, and let $S=\{b\beta\mid b\in\mathbb{P},\beta\in\col(G)\}$. Then, for $r\in\{0,1,\dots,k\}$, $C$ is outer $r$-minimal if and only if $S$ is a cutting $r$-blocking set of the $R$-vector space $\mathbb{P}^{[k]}$. This result, whose proof is omitted, recovers [4, Proposition 24] and [4, Theorem 20 (d)] when $r=1$.

Finally, if we identify $\mathbb{P}$ with $R^{m}$ as $R$-vector spaces, then this example can be alternatively derived from Example 3.3 by letting $\mathbb{E}=\mathbb{F}=R$ and $w_{i}=m$ for all $i\in\{1,\dots,t\}$. Hence outer $r$-minimal codes can be regarded as a special case of $r$-minimal sum-rank metric codes.
\end{example}

\setlength{\parindent}{2em}
We end this subsection by introducing $r$-minimal codes with respect to poset metric, a metric first introduced in \cite{14} that naturally generalizes Hamming metric (also see \cite{28,37}). The following example seems to be new in the theory of minimal codes.

\begin{example}
Suppose that $R$ is a finite field. Let $n\in\mathbb{Z}^{+}$, and let $\mathbf{P}=(\{1,\dots,n\},\preccurlyeq_{\mathbf{P}})$ be a poset. A subset $I\subseteq\{1,\dots,n\}$ is referred to as an ideal of $\mathbf{P}$ if for any $v\in I$ and $u\in\{1,\dots,n\}$ with $u\preccurlyeq_{\mathbf{P}}v$, it holds that $u\in I$. Set $X=R^{n}$, $Y=\{I\mid\text{$I$ is an ideal of $\mathbf{P}$}\}$; moreover, set $\sigma$, $\mu$ and $\lambda$ as $\sigma(\alpha)=\{i\mid\exists~j\in\{1,\dots,n\}~s.t.~\alpha_j\neq0,i\preccurlyeq_{\mathbf{P}}j\}$,
$\mu(I)=\{\alpha\in X\mid\sigma(\alpha)\subseteq I\}$ and $\lambda(A)=\bigcup_{\alpha\in A}\sigma(\alpha)$, respectively. It follows from [28, Proposition 1.1] that $(Y,\subseteq)$ is a lattice of length $n$, $\rk(I)=|I|$ for all $I\in Y$, and $\sigma$, $\mu$, $\lambda$ satisfy (3.1)--(3.6).

Let $C\leqslant_{R}X$ with $\dim_{R}(C)=k\geqslant1$. For any $D\leqslant_{R}C$, we say that $D$ is $\mathbf{P}$-minimal in $C$ if $D$ is $\sigma$-minimal in $C$. Moreover, for any $r\in\mathbb{N}$, we say that $C$ is $r$-minimal with respect to $\mathbf{P}$ if $C$ is $(\sigma,r)$-minimal. Now let $G\in\Mat_{k,n}(R)$ such that $C=\langle\row(G)\rangle_{R}$, and let $\beta(j)$ denote the $j$-th column of $G$. Then, for $B\leqslant_{R}R^{k}$ and $D\triangleq\{\gamma G\mid\gamma\in B\}$, $D$ is $\mathbf{P}$-minimal in $C$ if and only if
\begin{equation}\langle\{\beta(i)\mid i\in\lambda(C)~s.t.~(\forall~j\in\lambda(C):i\preccurlyeq_{\mathbf{P}}j\Longrightarrow\beta(j)\in B^{\ddagger})\}\rangle_{R}=B^{\ddagger};\end{equation}
moreover, for $r\in\{0,1,\dots,k\}$, $C$ is $r$-minimal with respect to $\mathbf{P}$ if and only if for any $L\leqslant_{R}R^{[k]}$ with $\dim_{R}(L)=k-r$, it holds that
\begin{equation}\langle\{\beta(i)\mid i\in\lambda(C)~s.t.~(\forall~j\in\lambda(C):i\preccurlyeq_{\mathbf{P}}j\Longrightarrow\beta(j)\in L)\}\rangle_{R}=L.\end{equation}
We omit the proofs of (3.7) and (3.8). Instead, we remark that if $\mathbf{P}$ is an anti-chain, then (3.8) holds true if and only if $\col(G)$ is a cutting $r$-blocking set of $R^{[k]}$, which further recovers the equivalence between $r$-minimal Hamming metric codes and cutting $r$-blocking sets established in Example 3.2.
\end{example}

\subsection{$\sigma$-Minimal subcodes and $(\sigma,r)$-minimal codes}

\setlength{\parindent}{2em}
In this subsection, we characterize $\sigma$-minimal subcodes and $(\sigma,r)$-minimal codes, and derive some of their basic properties. We begin with the following lemma, whose proof is straightforward and hence omitted.

\setlength{\parindent}{0em}
\begin{lemma}
{\bf{(1)}}\,\,For any $B\subseteq X$ and $A\subseteq B$, it holds that $\lambda(A)\preccurlyeq\lambda(B)$.

{\bf{(2)}}\,\,For any $y\in Y$, $A\subseteq X$, it holds that $\lambda(A)\preccurlyeq y\Longleftrightarrow A\subseteq\mu(y)$.

{\bf{(3)}}\,\,For any $A\subseteq X$, it holds that $\mu(\lambda(A))\leqslant_{R}X$, $\dim_{R}(\mu(\lambda(A)))=\rk(\lambda(A))$ and $A\subseteq\mu(\lambda(A))$.

{\bf{(4)}}\,\,For any $U\leqslant_{R}X$, $V\leqslant_{R}X$, it holds that $\lambda(U+V)=\lambda(U)\curlyvee\lambda(V)$.

{\bf{(5)}}\,\,For $C\leqslant_{R}X$, $C$ is both $(\sigma,0)$-minimal and $(\sigma,r)$-minimal for all $r\in\mathbb{N}$ with $r\geqslant\dim_{R}(C)$; moreover, for $W\leqslant_{R}C$ and $D\leqslant_{R}W$, if $D$ is $\sigma$-minimal in $C$, then $D$ is $\sigma$-minimal in $W$.
\end{lemma}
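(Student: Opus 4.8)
The plan is to establish each of the five parts of Lemma 3.1 directly from the axioms (3.1)--(3.6) and the definitions of $\mu$, $\lambda$, and the length function $\rk$. Most parts follow from the lattice structure of $(Y,\preccurlyeq)$ and the Galois-type adjunction that (2) encodes, so I would organize the proof to exploit that adjunction repeatedly rather than arguing each part from scratch.

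**For parts (1)--(3)**, I would proceed as follows. Part (1) is immediate: since $\lambda(A)$ is the supremum of $\{\sigma(\alpha)\mid\alpha\in A\}$ and $\lambda(B)$ is an upper bound for the larger set $\{\sigma(\beta)\mid\beta\in B\}\supseteq\{\sigma(\alpha)\mid\alpha\in A\}$, the defining property of suprema gives $\lambda(A)\preccurlyeq\lambda(B)$. Part (2) is the key adjunction. For the forward direction, if $\lambda(A)\preccurlyeq y$ then each $\sigma(\alpha)\preccurlyeq\lambda(A)\preccurlyeq y$, so by (3.4) every $\alpha\in A$ lies in $\mu(y)$; conversely, if $A\subseteq\mu(y)$ then $y$ is an upper bound for $\{\sigma(\alpha)\mid\alpha\in A\}$, whence $\lambda(A)\preccurlyeq y$ by the least-upper-bound property. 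For part (3), I would apply part (2) with $y=\lambda(A)$ and $A$ in place of the subset: taking $A\subseteq X$ and noting $\lambda(A)\preccurlyeq\lambda(A)$, the equivalence in (2) yields $A\subseteq\mu(\lambda(A))$. The facts $\mu(\lambda(A))\leqslant_{R}X$ and $\dim_{R}(\mu(\lambda(A)))=\rk(\lambda(A))$ then follow from the remark after (3.4) that $\mu(y)\leqslant_{R}X$ for all $y$, together with the standing assumption (3.5) applied to $y=\lambda(A)$.

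**For part (4)**, I would show $\lambda(U+V)=\lambda(U)\curlyvee\lambda(V)$ by antisymmetry. Since $U,V\subseteq U+V$, part (1) gives $\lambda(U)\preccurlyeq\lambda(U+V)$ and $\lambda(V)\preccurlyeq\lambda(U+V)$, so $\lambda(U)\curlyvee\lambda(V)\preccurlyeq\lambda(U+V)$ as the supremum is the least upper bound. For the reverse inequality, set $y=\lambda(U)\curlyvee\lambda(V)$; then $\lambda(U)\preccurlyeq y$ and $\lambda(V)\preccurlyeq y$, so by part (2) we have $U\subseteq\mu(y)$ and $V\subseteq\mu(y)$. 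Here is the point where the submodule property of $\mu(y)$ matters: since $\mu(y)\leqslant_{R}X$ is closed under addition, $U+V\subseteq\mu(y)$, and part (2) again gives $\lambda(U+V)\preccurlyeq y$. Finally, for part (5), the $(\sigma,0)$- and $(\sigma,r)$-minimality for $r\geqslant\dim_{R}(C)$ are vacuous (there are no $r$-dimensional submodules to test, or the only $0$-dimensional submodule is $\{0\}$ with $\lambda(\{0\})=0_{Y}$ by (3.1)), and the last claim that $\sigma$-minimality in $C$ descends to $\sigma$-minimality in $W$ is immediate from the definition, since any competitor $B\leqslant_{R}W$ is in particular a competitor $B\leqslant_{R}C$.

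**The main subtlety**, rather than a genuine obstacle, is ensuring that the supremum $\lambda(A)$ is well defined and that the adjunction in part (2) is used in the correct direction throughout; the whole argument is a clean exercise in Galois connections once (2) is in hand, which is why the author correctly judges the proof to be straightforward and omits it. The only place any axiom beyond the lattice order is invoked is in part (4), where the $R$-linearity of $\mu(y)$ (guaranteed by (3.1)--(3.3)) is essential, and in part (3), where (3.5) supplies the dimension formula.
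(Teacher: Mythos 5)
Your proof is correct and is precisely the routine verification the paper omits (the paper states Lemma 3.1 without proof, calling it straightforward): organizing everything around the adjunction in part (2) and invoking the submodule property of $\mu(y)$ for part (4) and assumption (3.5) for the dimension count in part (3) is exactly what the argument requires. The only imprecision is in part (5), where the case $r=\dim_{R}(C)$ is not vacuous; it follows instead by the same uniqueness reasoning as your $r=0$ case, since the only $r$-dimensional submodule of $C$ is then $C$ itself, so the only competitor in Definition 3.1 is $C$ and the minimality condition holds trivially.
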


\setlength{\parindent}{2em}
The following theorem is inspired by [47, Theorem 1], and is crucial for our discussion.

\setlength{\parindent}{0em}
\begin{theorem}
Let $V\leqslant_{R}X$, $V\neq\{0\}$. Then, there exists $U\leqslant_{R}V$ such that $\dim_{R}(U)=\dim_{R}(V)-1$, $\lambda(U)\neq\lambda(V)$.
\end{theorem}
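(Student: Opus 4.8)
The plan is to exhibit the required submodule explicitly as an intersection $U=\mu(z)\cap V$, where $z$ is chosen strictly below $y:=\lambda(V)$ in the lattice and of rank exactly one less than $y$. First I would record the basic facts about $y$: by Lemma 3.1(3) we have $V\subseteq\mu(y)$ together with $\dim_{R}(\mu(y))=\rk(y)$, whence $\rk(y)\geqslant\dim_{R}(V)\geqslant1$. In particular $y\neq0_{Y}$, and writing $k:=\dim_{R}(V)$ and $r:=\rk(y)$ we have $r\geqslant k\geqslant1$.

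Next I would produce the element $z$. Since $(Y,\preccurlyeq)$ has finite length and $\rk(y)=r$, there is a chain $y_{0}\prec y_{1}\prec\cdots\prec y_{r}=y$ consisting of $r+1$ elements with greatest element $y$. A short extension argument shows that $\rk(y_{i})=i$ for every $i$: the initial segment $y_{0}\prec\cdots\prec y_{i}$ gives $\rk(y_{i})\geqslant i$, while appending $y_{i+1},\dots,y_{r}$ to any chain $J$ with greatest element $y_{i}$ produces a chain with greatest element $y$ of size $|J|+(r-i)\leqslant r+1$, forcing $|J|\leqslant i+1$ and hence $\rk(y_{i})\leqslant i$. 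Taking $z:=y_{r-1}$ then yields $z\prec y$ and $\rk(z)=r-1$, so that $\mu(z)\subseteq\mu(y)$ and, by (3.5), $\dim_{R}(\mu(z))=r-1$.

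Finally I would set $U:=\mu(z)\cap V\leqslant_{R}V$ and verify the two required properties. Both $\mu(z)$ and $V$ lie inside $\mu(y)$, so $\mu(z)+V\subseteq\mu(y)$ and the dimension formula for submodules over a division ring gives $\dim_{R}(U)\geqslant\dim_{R}(\mu(z))+\dim_{R}(V)-\dim_{R}(\mu(y))=(r-1)+k-r=k-1$. If $\dim_{R}(U)=k$, then $U=V\subseteq\mu(z)$, so Lemma 3.1(2) yields $y=\lambda(V)\preccurlyeq z$, contradicting $z\prec y$; hence $\dim_{R}(U)=k-1=\dim_{R}(V)-1$. Moreover $U\subseteq\mu(z)$ forces $\lambda(U)\preccurlyeq z\prec y=\lambda(V)$ by Lemma 3.1(2), so $\lambda(U)\neq\lambda(V)$, as required. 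The only delicate point is the gradedness step $\rk(y_{i})=i$ along a maximal chain, which is what guarantees that a single unit of rank can be shaved off $y$ while staying strictly below it; once $z$ is in hand the remainder is routine dimension bookkeeping inside $\mu(y)$, and neither the monotonicity relation Lemma 3.1(1) nor the join identity Lemma 3.1(4) is needed.
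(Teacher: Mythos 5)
Your proof is correct and takes essentially the same route as the paper: both arguments pick an element $w\prec\lambda(V)$ with $\rk(w)=\rk(\lambda(V))-1$, set $U=V\cap\mu(w)$, and use (3.5) together with Lemma 3.1(2),(3) to see that the dimension drops by exactly one while $\lambda(U)\preccurlyeq w$ forces $\lambda(U)\neq\lambda(V)$. The only difference is that you explicitly verify, via the graded-chain argument, that an element of corank one below $\lambda(V)$ exists, a point the paper's proof simply asserts.
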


\begin{proof}
By (3) of Lemma 3.1, we have $\rk(\lambda(V))\geqslant1$. Hence we can choose $w\in Y$ such that $w\preccurlyeq\lambda(V)$, $\rk(w)=\rk(\lambda(V))-1$. Let $U\triangleq V\cap\mu(w)\leqslant_{R}V$. By (2) of Lemma 3.1, we have $\lambda(U)\preccurlyeq w$, which implies that $\lambda(U)\neq\lambda(V)$, as desired. Next, we show that $\dim_{R}(U)=\dim_{R}(V)-1$. Indeed, from $w\preccurlyeq\lambda(V)$ and (3.5), we deduce that $\mu(w)\leqslant_{R}\mu(\lambda(V))$ and $\dim_{R}(\mu(w))=\dim_{R}(\mu(\lambda(V)))-1$; moreover, from (2), (3) of Lemma 3.1 and $\lambda(V)\not\preccurlyeq w$, we deduce that $V\leqslant_{R}\mu(\lambda(V))$ and $V\nsubseteq\mu(w)$. It then follows that $\dim_{R}(V)-\dim_{R}(V\cap\mu(w))=1$, as desired.
\end{proof}

\begin{remark}
With respect to $\rk$ and $\sigma$, we can define generalized weights for a code. More precisely, let $C\leqslant_{R}X$ with $\dim_{R}(C)=k$. For any $r\in\{0,1,\dots,k\}$, define
$$\phi_{r}(C)=\min\{\rk(\lambda(D))\mid D\leqslant_{R}C,\dim_{R}(D)=r\}.$$
It follows from Theorem 3.1 that $\phi_{r}(C)<\phi_{r+1}(C)$ for all $r\in\{0,1,\dots,k-1\}$, which, with $\sigma$ appropriately set, further recovers [23, Theorem I.2], [47, Theorem 1], [37, Proposition 1] and a special case of [34, Lemma 4] which were established for generalized rank weights, generalized Hamming weights, generalized poset weights and generalized sum-rank weights, respectively.
\end{remark}

\setlength{\parindent}{2em}
Now we prove the two main results of this subsection. First, we give seven necessary and sufficient conditions for a subcode to be $\sigma$-minimal.

\setlength{\parindent}{0em}
\begin{theorem}
Let $C\leqslant_{R}X$, and let $D\leqslant_{R}C$. Then, the following eight statements are equivalent to each other:

{\bf{(1)}}\,\,$D$ is $\sigma$-minimal in $C$;

{\bf{(2)}}\,\,For any $W\leqslant_{R}C$ such that $D\subseteq W$, $\dim_{R}(W)=\dim_{R}(D)+1$, $D$ is $\sigma$-minimal in $W$;

{\bf{(3)}}\,\,For any $W\leqslant_{R}C$ such that $D\subseteq W$, $\dim_{R}(W)=\dim_{R}(D)+1$, it holds that $\lambda(D)\neq\lambda(W)$;

{\bf{(4)}}\,\,$C\cap\mu(\lambda(D))=D$;

{\bf{(5)}}\,\,$\dim_{R}(C)-\dim_{R}(D)=\dim_{R}(\mu(\lambda(D))+C)-\rk(\lambda(D))$;

{\bf{(6)}}\,\,For any $A\subseteq C$ with $\lambda(A)\preccurlyeq\lambda(D)$, it holds that $A\subseteq D$;

{\bf{(7)}}\,\,For any $B\leqslant_{R}C$ such that $\dim_{R}(B)=\dim_{R}(D)$, $\lambda(B)\preccurlyeq \lambda(D)$, it holds that $B=D$;

{\bf{(8)}}\,\,For any $z\in Y$ such that $z\preccurlyeq\lambda(D)$, $\dim_{R}(C\cap\mu(z))\geqslant\dim_{R}(D)$, it holds that $z=\lambda(D)$.
\end{theorem}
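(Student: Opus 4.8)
**

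The goal is to show that eight conditions on a submodule $D\leqslant_{R}C$ are equivalent. The plan is to establish the equivalences through a cycle of implications rather than proving all $\binom{8}{2}$ pairs directly, exploiting the abstract properties of $\lambda$, $\mu$, $\rk$ collected in Lemma 3.1 and the dimension-drop result of Theorem 3.1. First I would record the key structural facts I will lean on repeatedly: by (3) of Lemma 3.1, $D\subseteq\mu(\lambda(D))$ and $\dim_{R}(\mu(\lambda(D)))=\rk(\lambda(D))$; by (2) of Lemma 3.1, for $A\subseteq X$ and $y\in Y$ we have $\lambda(A)\preccurlyeq y\Longleftrightarrow A\subseteq\mu(y)$; and by (1) of Lemma 3.1, $A\subseteq B$ forces $\lambda(A)\preccurlyeq\lambda(B)$.

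I would organize the proof around the pivotal condition (4), $C\cap\mu(\lambda(D))=D$, since it is the cleanest reformulation and connects naturally to both the ``geometric'' conditions (1),(7),(8) and the ``dimension-counting'' condition (5). A convenient route is the cycle $(1)\Rightarrow(2)\Rightarrow(3)\Rightarrow(4)\Rightarrow(6)\Rightarrow(7)\Rightarrow(1)$, after which I would splice in $(4)\Leftrightarrow(5)$ and $(1)\Leftrightarrow(8)$. The implications $(1)\Rightarrow(2)$ and $(7)\Rightarrow(1)$ are essentially definitional once one notes that $\sigma$-minimality is the statement in (7) with the conclusion $\lambda(B)=\lambda(D)$ weakened from $B=D$; the equivalence of these two forms is exactly what the whole theorem upgrades, so I would derive $B=D$ inside (7) using $B\subseteq\mu(\lambda(D))$ together with whatever condition I am cycling from. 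For $(2)\Rightarrow(3)$ I would use Theorem 3.1 applied to $W$: if $\lambda(D)=\lambda(W)$ with $\dim_{R}(W)=\dim_{R}(D)+1$, then Theorem 3.1 produces a strictly smaller submodule of $W$ realizing a strictly smaller value of $\lambda$, contradicting $\sigma$-minimality of $D$ in $W$.

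The implication I expect to carry the most weight is $(3)\Rightarrow(4)$, together with its dimension-counting companion $(4)\Leftrightarrow(5)$. For $(3)\Rightarrow(4)$ I would argue contrapositively: if $C\cap\mu(\lambda(D))\supsetneq D$, pick $\beta\in(C\cap\mu(\lambda(D)))\setminus D$ and set $W=D+\langle\beta\rangle_{R}$; then $\dim_{R}(W)=\dim_{R}(D)+1$, and since $W\subseteq\mu(\lambda(D))$ we get $\lambda(W)\preccurlyeq\lambda(D)$ by (2) of Lemma 3.1, hence $\lambda(W)=\lambda(D)$ by Lemma 3.1(1), violating (3). For $(4)\Leftrightarrow(5)$ I would apply the dimension formula $\dim_{R}(\mu(\lambda(D))+C)=\dim_{R}(\mu(\lambda(D)))+\dim_{R}(C)-\dim_{R}(\mu(\lambda(D))\cap C)$; substituting $\dim_{R}(\mu(\lambda(D)))=\rk(\lambda(D))$ turns (5) into the statement $\dim_{R}(\mu(\lambda(D))\cap C)=\dim_{R}(D)$, which combined with the always-valid inclusion $D\subseteq C\cap\mu(\lambda(D))$ is equivalent to (4). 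The remaining pieces are routine: $(4)\Rightarrow(6)$ follows since $\lambda(A)\preccurlyeq\lambda(D)$ gives $A\subseteq\mu(\lambda(D))$, and $A\subseteq C$ then forces $A\subseteq C\cap\mu(\lambda(D))=D$; $(6)\Rightarrow(7)$ is immediate; and $(1)\Leftrightarrow(8)$ I would handle by observing that the submodules $B$ in (7) are precisely of the form $B=C\cap\mu(z)$ for $z\preccurlyeq\lambda(D)$, so (8) is just (7)/(1) rephrased in terms of the element $z\in Y$ rather than the submodule $B$. The main subtlety throughout is keeping the equivalence $\lambda(A)\preccurlyeq y\Leftrightarrow A\subseteq\mu(y)$ and the rank identity $\dim_{R}\mu(y)=\rk(y)$ in play at each step so that the abstract lattice data substitutes correctly for the concrete support machinery.
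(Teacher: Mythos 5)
Your proposal follows essentially the same route as the paper's proof: the chain $(1)\Rightarrow(2)\Rightarrow(3)\Rightarrow(4)\Rightarrow(6)\Rightarrow(7)\Rightarrow(1)$, with Theorem 3.1 carrying $(2)\Rightarrow(3)$, the dimension formula giving $(4)\Leftrightarrow(5)$, and Lemma 3.1 deployed exactly as in the paper; the only structural difference is that the paper attaches condition (8) via $(6)\Rightarrow(8)\Rightarrow(1)$ while you attach it via $(1)\Leftrightarrow(8)$ directly, which is a cosmetic variation.

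One caveat on that last piece: your justification of $(1)\Leftrightarrow(8)$ --- that the submodules $B$ in (7) are \emph{precisely} those of the form $C\cap\mu(z)$ with $z\preccurlyeq\lambda(D)$ --- is false as literally stated. A submodule $B\leqslant_{R}C$ with $\dim_{R}(B)=\dim_{R}(D)$ and $\lambda(B)\preccurlyeq\lambda(D)$ can be a proper submodule of $C\cap\mu(\lambda(B))$, and conversely $C\cap\mu(z)$ need not have dimension $\dim_{R}(D)$, so there is no exact correspondence between the two sets of objects. The two implications nevertheless go through with one-line patches: for $(8)\Rightarrow(1)$, argue by contradiction and, given a violating $B$, take $z=\lambda(B)$, noting that $B\subseteq C\cap\mu(z)$ forces $\dim_{R}(C\cap\mu(z))\geqslant\dim_{R}(D)$ (this is exactly the paper's argument); for $(1)\Rightarrow(8)$, given $z$ as in (8), pick \emph{any} submodule $B\leqslant_{R}C\cap\mu(z)$ with $\dim_{R}(B)=\dim_{R}(D)$, observe $\lambda(B)\preccurlyeq z\preccurlyeq\lambda(D)$, apply (1) to get $\lambda(B)=\lambda(D)$, and conclude $\lambda(D)\preccurlyeq z\preccurlyeq\lambda(D)$, i.e.\ $z=\lambda(D)$. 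So the defect is one of formulation, not of substance.
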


\begin{proof}
$(1)\Longrightarrow(2)$\,\,This immediately follows from (5) of Lemma 3.1.

$(2)\Longrightarrow(3)$\,\,Suppose by way of contradiction that $\lambda(D)=\lambda(W)$ for some $W\leqslant_{R}C$ such that $D\subseteq W$, $\dim_{R}(W)=\dim_{R}(D)+1$. By Theorem 3.1, we can choose $B\leqslant_{R}W$ such that $\dim_{R}(B)=\dim_{R}(W)-1$, $\lambda(B)\neq\lambda(W)$. It follows that $\dim_{R}(B)=\dim_{R}(D)$, $\lambda(B)\neq\lambda(D)$, $\lambda(B)\preccurlyeq\lambda(W)=\lambda(D)$, a contradiction to the fact that $D$ is $\sigma$-minimal in $W$, as desired.

$(3)\Longrightarrow(4)$\,\,Suppose by way of contradiction that $D\neq C\cap\mu(\lambda(D))$. Then, from $D\subseteq C\cap\mu(\lambda(D))$, we can choose $W\leqslant_{R}X$ such that $D\subseteq W\subseteq C\cap\mu(\lambda(D))$, $\dim_{R}(W)=\dim_{R}(D)+1$. By (1), (2) of Lemma 3.1, we have $\lambda(D)=\lambda(W)$, a contradiction to (3), as desired.

$(4)\Longleftrightarrow(5)$\,\,From $D\subseteq C\cap\mu(\lambda(D))$ and $\dim_{R}(\mu(\lambda(D)))=\rk(\lambda(D))$, we deduce that
\begin{eqnarray*}
\begin{split}
C\cap\mu(\lambda(D))=D&\Longleftrightarrow\dim_{R}(C\cap\mu(\lambda(D)))=\dim_{R}(D)\\
&\Longleftrightarrow\dim_{R}(C)+\rk(\lambda(D))-\dim_{R}(\mu(\lambda(D))+C)=\dim_{R}(D),
\end{split}
\end{eqnarray*}
which immediately establishes $(4)\Longleftrightarrow(5)$, as desired.

$(4)\Longrightarrow(6)$\,\,Let $A\subseteq C$ with $\lambda(A)\preccurlyeq\lambda(D)$. By (2) of Lemma 3.1, we have $A\subseteq\mu(\lambda(D))$, which, together with (4), implies that $A\subseteq C\cap\mu(\lambda(D))=D$, as desired.

$(6)\Longrightarrow(7)$\,\,Let $B\leqslant_{R}C$ such that $\dim_{R}(B)=\dim_{R}(D)$, $\lambda(B)\preccurlyeq \lambda(D)$. By (6), we have $B\subseteq D$, which, together with $\dim_{R}(B)=\dim_{R}(D)$, implies that $B=D$, as desired.

$(7)\Longrightarrow(1)$\,\,This immediately follows from Definition 3.1.

$(6)\Longrightarrow(8)$\,\,Let $z\in Y$ such that $z\preccurlyeq\lambda(D)$, $\dim_{R}(C\cap\mu(z))\geqslant\dim_{R}(D)$, and let $B=C\cap\mu(z)$. By (2) of Lemma 3.1, we have $\lambda(B)\preccurlyeq z\preccurlyeq\lambda(D)$. It then follows from (6) that $B\subseteq D$, which, together with $\dim_{R}(B)\geqslant\dim_{R}(D)$, implies that $B=D$, which further implies that $z=\lambda(D)$, as desired.

$(8)\Longrightarrow(1)$\,\,By way of contradiction, suppose that $D$ is not $\sigma$-minimal in $C$. Then, we can choose $B\leqslant_{R}C$ such that $\dim_{R}(B)=\dim_{R}(D)$, $\lambda(B)\preccurlyeq\lambda(D)$, $\lambda(B)\neq\lambda(D)$. From $B\subseteq C\cap\mu(\lambda(B))$, we deduce that $\dim_{R}(C\cap\mu(\lambda(B)))\geqslant\dim_{R}(B)=\dim_{R}(D)$, a contradiction to (8), as desired.
\end{proof}

\setlength{\parindent}{2em}
Next, we give three necessary and sufficient conditions for code to be $(\sigma,r)$-minimal. The following theorem is a direct corollary of Theorem 3.2 and (2) of Definition 3.1.

\setlength{\parindent}{0em}
\begin{theorem}
For $C\leqslant_{R}X$ and $r\in\mathbb{N}$, the following four statements are equivalent to each other:

{\bf{(1)}}\,\,$C$ is $(\sigma,r)$-minimal;

{\bf{(2)}}\,\,For any $D,B\leqslant_{R}C$ such that $\dim_{R}(D)=\dim_{R}(B)=r$, $\lambda(D)\preccurlyeq \lambda(B)$, it holds that $D=B$;

{\bf{(3)}}\,\,Every $r$-dimensional $R$-submodule of $C$ is $\sigma$-maximal in $C$;

{\bf{(4)}}\,\,For any $W\leqslant_{R}C$ with $\dim_{R}(W)=r+1$, and for any $D\leqslant_{R}W$ with $\dim_{R}(D)=r$, it holds that $\lambda(D)\neq\lambda(W)$.
\end{theorem}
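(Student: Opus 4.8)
The plan is to derive all three equivalences directly from the per-submodule characterizations in Theorem 3.2 together with Definition 3.1(2), by unfolding the universal quantifier over all $r$-dimensional submodules of $C$. Throughout, recall that by Definition 3.1(1) the code $C$ is $(\sigma,r)$-minimal precisely when every $r$-dimensional $D\leqslant_R C$ is $\sigma$-minimal in $C$, so the task reduces to translating the conditions attached to a single $D$ in Theorem 3.2 into statements quantified over all such $D$.

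First I would establish $(1)\Leftrightarrow(2)$ using the equivalence $(1)\Leftrightarrow(7)$ of Theorem 3.2. For a fixed $r$-dimensional $D$, condition $(7)$ reads: every $B\leqslant_R C$ with $\dim_R(B)=r$ and $\lambda(B)\preccurlyeq\lambda(D)$ satisfies $B=D$. Quantifying this over all $r$-dimensional $D$ yields the assertion that for all $r$-dimensional $D,B\leqslant_R C$, $\lambda(B)\preccurlyeq\lambda(D)$ forces $B=D$, which is exactly statement $(2)$ after interchanging the bound names $D$ and $B$. Next, for $(2)\Leftrightarrow(3)$, I would simply unfold Definition 3.1(2): $D$ is $\sigma$-maximal in $C$ iff every $B\leqslant_R C$ with $\dim_R(B)=r$ and $\lambda(D)\preccurlyeq\lambda(B)$ satisfies $B=D$; demanding this of every $r$-dimensional $D$ is verbatim statement $(2)$, giving $(2)\Leftrightarrow(3)$.

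Finally, $(1)\Leftrightarrow(4)$ would follow from the equivalence $(1)\Leftrightarrow(3)$ of Theorem 3.2. For a fixed $r$-dimensional $D$, that condition says every $W\leqslant_R C$ with $D\subseteq W$ and $\dim_R(W)=r+1$ satisfies $\lambda(D)\neq\lambda(W)$; quantifying over all $r$-dimensional $D$ gives that for all pairs $(D,W)$ with $\dim_R(D)=r$, $\dim_R(W)=r+1$ and $D\subseteq W$, one has $\lambda(D)\neq\lambda(W)$, which is statement $(4)$ with the two universal quantifiers written in the opposite order.

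The argument is essentially bookkeeping, so the only points requiring care, and hence the main (mild) obstacle, are the quantifier manipulations. For $(1)\Leftrightarrow(2)$ the key observation is that, once condition $(7)$ is quantified over all $r$-dimensional $D$, the resulting assertion ranges over all ordered pairs $(D,B)$, so renaming the bound variables legitimately turns ``$\lambda(B)\preccurlyeq\lambda(D)\Rightarrow B=D$'' into statement $(2)$; for $(1)\Leftrightarrow(4)$ one only swaps the order of two nested universal quantifiers ranging over the same collection of pairs $(D,W)$, which is always valid. I would also record that the degenerate cases $\dim_R(C)<r$ and $\dim_R(C)=r$ are handled uniformly: all four statements then hold vacuously or trivially, since no $(r+1)$-dimensional submodule, and in the former case no $r$-dimensional submodule, exists, in agreement with Lemma 3.1(5).
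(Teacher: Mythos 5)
Your proposal is correct and follows essentially the same route as the paper: the paper gives no separate proof, stating only that the theorem "is a direct corollary of Theorem 3.2 and (2) of Definition 3.1," and your argument fills in exactly that bookkeeping — using Theorem 3.2 $(1)\Leftrightarrow(7)$ for $(1)\Leftrightarrow(2)$, Definition 3.1(2) for $(2)\Leftrightarrow(3)$, and Theorem 3.2 $(1)\Leftrightarrow(3)$ for $(1)\Leftrightarrow(4)$, with the quantifier renaming/reordering handled correctly. The treatment of the degenerate cases $\dim_R(C)\leqslant r$ is also consistent with Lemma 3.1(5).
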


\setlength{\parindent}{2em}
Regarding $\rk(\lambda(D))$ as the ``$\sigma$-support weight'' of $D\subseteq X$, we now show that a code whose $r$-dimensional subcodes have constant $\sigma$-support weight is necessarily $(\sigma,r)$-minimal, as detailed in the following simple yet general result.

\setlength{\parindent}{0em}
\begin{proposition}
{\bf{(1)}}\,\,Let $C\leqslant_{R}X$, and let $D\leqslant_{R}C$. Suppose that for any $B\leqslant_{R}C$ with $\dim_{R}(B)=\dim_{R}(D)$, it holds that $\rk(\lambda(D))\leqslant\rk(\lambda(B))$. Then, $D$ is $\sigma$-minimal in $C$.

{\bf{(2)}}\,\,Let $C\leqslant_{R}X$, and let $r\in\mathbb{N}$. Suppose that for any $D,B\leqslant_{R}C$ such that $\dim_{R}(D)=\dim_{R}(B)=r$, it holds that $\rk(\lambda(D))=\rk(\lambda(B))$. Then, $C$ is $(\sigma,r)$-minimal.
\end{proposition}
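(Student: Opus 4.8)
The plan is to prove both parts directly from the characterizations in Theorem 3.2 and Theorem 3.3, observing that the hypotheses on constant or minimal $\rk(\lambda(\cdot))$ are exactly what is needed to rule out the failure of $\sigma$-minimality.

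For part (1), I would use the equivalence $(1)\Longleftrightarrow(7)$ of Theorem 3.2. Suppose $B\leqslant_{R}C$ satisfies $\dim_{R}(B)=\dim_{R}(D)$ and $\lambda(B)\preccurlyeq\lambda(D)$; I must show $B=D$. Since $\lambda(B)\preccurlyeq\lambda(D)$, the length function is order-preserving, so $\rk(\lambda(B))\leqslant\rk(\lambda(D))$. But the hypothesis (applied to this $B$) gives $\rk(\lambda(D))\leqslant\rk(\lambda(B))$, forcing $\rk(\lambda(B))=\rk(\lambda(D))$. Now $\lambda(B)\preccurlyeq\lambda(D)$ together with equal ranks forces $\lambda(B)=\lambda(D)$ (two comparable elements of the lattice with the same length must coincide, since any strict inequality $\lambda(B)\prec\lambda(D)$ would produce a chain witnessing a strictly larger rank at $\lambda(D)$). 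Having $\lambda(B)=\lambda(D)$, consider $B+D\leqslant_{R}C$: by (4) of Lemma 3.1, $\lambda(B+D)=\lambda(B)\curlyvee\lambda(D)=\lambda(D)$, so $B+D\subseteq\mu(\lambda(D))$ by (2) of Lemma 3.1, whence $\dim_{R}(B+D)\leqslant\rk(\lambda(D))=\dim_{R}(D)$; since $D\subseteq B+D$ and $\dim_{R}(D)=\dim_{R}(B)$, this yields $B+D=D$, i.e.\ $B=D$. This verifies condition (7), so $D$ is $\sigma$-minimal in $C$.

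For part (2), the hypothesis says all $r$-dimensional subcodes have the same value of $\rk(\lambda(\cdot))$; in particular, for any fixed $D$ of dimension $r$ and any other $B$ of dimension $r$, we have $\rk(\lambda(D))=\rk(\lambda(B))\leqslant\rk(\lambda(B))$, so the hypothesis of part (1) holds for every such $D$. Applying part (1), every $r$-dimensional $D\leqslant_{R}C$ is $\sigma$-minimal in $C$, which is precisely the definition of $C$ being $(\sigma,r)$-minimal. So part (2) follows immediately from part (1).

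The only slightly delicate point is the lattice-theoretic step asserting that $u\preccurlyeq v$ with $\rk(u)=\rk(v)$ implies $u=v$; this is where the structure of $(Y,\preccurlyeq)$ as a finite-length lattice with length function $\rk$ is used. If $u\prec v$ strictly, one can extend a maximal chain from $0_Y$ up through $u$ and then strictly past $u$ to $v$, producing a chain ending at $v$ that is strictly longer than the longest chain ending at $u$, contradicting $\rk(u)=\rk(v)$. Everything else is a routine unwinding of the already-established equivalences, so I expect no serious obstacle; the proof is short and the work is essentially bookkeeping once the monotonicity of $\rk$ and this anti-symmetry-via-rank observation are in hand.
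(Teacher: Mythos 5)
Your part (2) and the first half of your part (1) coincide with the paper's proof: the paper also derives $\lambda(B)=\lambda(D)$ directly from $\lambda(B)\preccurlyeq\lambda(D)$, the monotonicity of the length function, and the hypothesis $\rk(\lambda(D))\leqslant\rk(\lambda(B))$, and then deduces (2) from (1). The key observation you should make is that Definition 3.1(1) requires only $\lambda(B)=\lambda(D)$, not $B=D$; so once you have $\lambda(B)=\lambda(D)$ you are done, with no need to invoke Theorem 3.2 at all.

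The genuine error is in your detour through condition (7) of Theorem 3.2, where you try to upgrade $\lambda(B)=\lambda(D)$ to $B=D$. Your step $\dim_{R}(B+D)\leqslant\rk(\lambda(D))=\dim_{R}(D)$ asserts the equality $\rk(\lambda(D))=\dim_{R}(D)$, which is false in general: what (3.5) (equivalently (3) of Lemma 3.1) gives is $\dim_{R}(\mu(\lambda(D)))=\rk(\lambda(D))$, so you only obtain $\dim_{R}(B+D)\leqslant\rk(\lambda(D))$, and $\rk(\lambda(D))$ is the $\sigma$-support weight of $D$, which typically strictly exceeds $\dim_{R}(D)$. For instance, in Example 3.2 (Hamming metric), if $D$ is spanned by the all-one vector then $\dim_{R}(D)=1$ while $\rk(\lambda(D))=n$, so your inequality tells you nothing and $B+D=D$ does not follow. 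The conclusion $B=D$ is in fact true once $D$ is known to be $\sigma$-minimal, but that is precisely the implication $(1)\Longrightarrow(7)$ of Theorem 3.2, which the paper proves via the chain through $C\cap\mu(\lambda(D))=D$, not by a dimension count; and you cannot appeal to it before establishing $\sigma$-minimality without circularity. The repair is simple: delete the $B=D$ paragraph and conclude from Definition 3.1 the moment you have $\lambda(B)=\lambda(D)$ --- which is exactly the paper's (shorter) argument.
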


\begin{proof}
{\bf{(1)}}\,\,Let $B\leqslant_{R}C$ such that $\dim_{R}(B)=\dim_{R}(D)$, $\lambda(B)\preccurlyeq \lambda(D)$. It then follows from $\rk(\lambda(D))\leqslant\rk(\lambda(B))$ and $\lambda(B)\preccurlyeq \lambda(D)$ that $\lambda(B)=\lambda(D)$, as desired.

{\bf{(2)}}\,\,This immediately follows from (1) and (1) of Definition 3.1.
\end{proof}

\begin{remark}
With $r$ set to be $1$, (2) of Proposition 3.1 recovers [3, Theorem 6.1] and [44, Proposition 9.25] which have been established for rank metric codes and sum-rank metric codes, respectively. We note that a complete classification of constant weight rank metric codes over finite fields has been given in \cite{38,3}, which we will further generalize to codes over possibly infinite fields and arbitrary $r$ in Section 4.3; on the other hand, constant weight sum-rank metric codes form a very large family (see \cite{13,38} for more details). Moreover, it has been proven in \cite{15,32} that constant weight Hamming metric linear codes are exactly repetitions of the dual of Hamming codes; however, an application of (2) of Proposition 3.1 to poset metric seems to yield a new result.
\end{remark}

\setlength{\parindent}{2em}
Now we derive some corollaries of the results established in this subsection so far. We begin with the following Singleton-type bound based on $\sigma$-minimal subcodes, which includes several previously known Singleton bounds as special cases.

\setlength{\parindent}{0em}
\begin{corollary}
{\bf{(1)}}\,\,Let $C\leqslant_{R}X$, $D\leqslant_{R}C$ such that $D$ is $\sigma$-minimal in $C$. Then, it holds that
\begin{equation}\dim_{R}(C)-\dim_{R}(D)\leqslant\rk(\lambda(C))-\rk(\lambda(D)).\end{equation}
Moreover, equality holds in (3.9) if and only if $\mu(\lambda(D))+C=\mu(\lambda(C))$.

{\bf{(2)}}\,\,Let $C\leqslant_{R}X$ with $\dim_{R}(C)=k$, and let $r\in\{0,1,\dots,k\}$ such that $C$ is $(\sigma,r)$-minimal. Then, for any $D\leqslant_{R}C$ with $\dim_{R}(D)=r$, it holds that $\rk(\lambda(D))\leqslant\rk(\lambda(C))-k+r$.
\end{corollary}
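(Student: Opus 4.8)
The plan is to obtain both parts directly from the equivalence $(1)\Longleftrightarrow(5)$ in Theorem 3.2. That equivalence converts the hypothesis that $D$ is $\sigma$-minimal in $C$ into the exact dimension identity $\dim_{R}(C)-\dim_{R}(D)=\dim_{R}(\mu(\lambda(D))+C)-\rk(\lambda(D))$. With this identity in hand, part (1) reduces entirely to bounding the single quantity $\dim_{R}(\mu(\lambda(D))+C)$ from above by $\rk(\lambda(C))$.

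To produce that bound I would first record the monotonicity of $\mu$, which is immediate from its defining formula (3.4): if $y_{1}\preccurlyeq y_{2}$ then every $\alpha$ with $\sigma(\alpha)\preccurlyeq y_{1}$ also satisfies $\sigma(\alpha)\preccurlyeq y_{2}$, so $\mu(y_{1})\subseteq\mu(y_{2})$. Since $D\leqslant_{R}C$ gives $\lambda(D)\preccurlyeq\lambda(C)$ by (1) of Lemma 3.1, monotonicity yields $\mu(\lambda(D))\leqslant_{R}\mu(\lambda(C))$. Combining this with $C\leqslant_{R}\mu(\lambda(C))$ from (3) of Lemma 3.1 gives $\mu(\lambda(D))+C\leqslant_{R}\mu(\lambda(C))$, whence $\dim_{R}(\mu(\lambda(D))+C)\leqslant\dim_{R}(\mu(\lambda(C)))=\rk(\lambda(C))$, the last equality being (3.5). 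Substituting into the identity from Theorem 3.2 establishes (3.9). For the equality characterization, note that equality in (3.9) holds precisely when $\dim_{R}(\mu(\lambda(D))+C)=\dim_{R}(\mu(\lambda(C)))$; since $\mu(\lambda(D))+C\leqslant_{R}\mu(\lambda(C))$ are nested finite-dimensional submodules of $X$, equal dimensions force $\mu(\lambda(D))+C=\mu(\lambda(C))$, as claimed.

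For part (2), the hypothesis that $C$ is $(\sigma,r)$-minimal means, by Definition 3.1, that every $r$-dimensional $D\leqslant_{R}C$ is $\sigma$-minimal in $C$. Applying part (1) with $\dim_{R}(C)=k$ and $\dim_{R}(D)=r$ gives $k-r\leqslant\rk(\lambda(C))-\rk(\lambda(D))$, which rearranges immediately to the stated bound $\rk(\lambda(D))\leqslant\rk(\lambda(C))-k+r$.

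I do not expect a serious obstacle here, since the essential content is already packaged in the equivalence $(1)\Longleftrightarrow(5)$ of Theorem 3.2; the only genuine step is the containment $\mu(\lambda(D))+C\leqslant_{R}\mu(\lambda(C))$, which rests on the monotonicity of $\mu$ and the two basic facts of Lemma 3.1. The one mild subtlety worth stating explicitly is that, for nested submodules, equality of dimensions forces equality of the submodules, which uses the finite-dimensionality of $X$.
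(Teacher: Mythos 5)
Your proof is correct and follows exactly the paper's own route: the paper also derives (1) from the equivalence $(1)\Longleftrightarrow(5)$ of Theorem 3.2 together with the containment $\mu(\lambda(D))+C\leqslant_{R}\mu(\lambda(C))$ and the identity $\dim_{R}(\mu(\lambda(C)))=\rk(\lambda(C))$, and obtains (2) from (1) and Definition 3.1. The only difference is that you spell out the details (monotonicity of $\mu$, Lemma 3.1 (1) and (3), and the nested-submodule dimension argument) that the paper compresses into the word ``immediately.''
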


\begin{proof}
Part (1) immediately follows from ``$(1)\Longleftrightarrow(5)$'' of Theorem 3.2, together with the facts that $\mu(\lambda(D))+C\leqslant_{R}\mu(\lambda(C))$ and $\dim_{R}(\mu(\lambda(C)))=\rk(\lambda(C))$. Moreover, (2) immediately follows from (1) and Definition 3.1, as desired.
\end{proof}

\setlength{\parindent}{0em}
\begin{remark}
With $\sigma$ appropriately set, (1) of Corollary 3.1 recovers the generalized Singleton bounds in [23, Corollary II.4], [47, Corollary 1], [37, Proposition 2] and a special case of [34, Proposition 5] which were established for generalized rank weights, generalized Hamming weights, generalized poset weights and generalized sum-rank weights, respectively, and when $r=1$, it also recovers [2, Proposition 1.5] which was established for minimal codewords in Hamming metric. Moreover, when $r=1$, (2) of Corollary 3.1 recovers [3, Corollary 5.9] and [13, Theorem 2.4] which were established for minimal rank metric codes and minimal sum-rank metric codes, respectively.
\end{remark}

\setlength{\parindent}{2em}
Next, an application of Theorem 3.3 immediately yields the following result.

\setlength{\parindent}{0em}
\begin{corollary}
Let $(k,r)\in\mathbb{N}^{2}$ with $r+1\leqslant k\leqslant n$, and fix $t\in\{r+1,\dots,k\}$. Let $C\leqslant_{R}X$ with $\dim_{R}(C)=k$. Then, $C$ is $(\sigma,r)$-minimal if and only if every $t$-dimensional $R$-submodule of $C$ is $(\sigma,r)$-minimal.
\end{corollary}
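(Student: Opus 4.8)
The plan is to reduce both implications to the local characterization of $(\sigma,r)$-minimality furnished by part (4) of Theorem 3.3, namely that $C$ is $(\sigma,r)$-minimal if and only if $\lambda(D)\neq\lambda(W)$ for every $W\leqslant_{R}C$ with $\dim_{R}(W)=r+1$ and every $D\leqslant_{R}W$ with $\dim_{R}(D)=r$. Since this criterion refers only to submodules of dimension $r+1$ and their $r$-dimensional subspaces, $(\sigma,r)$-minimality can be tested inside any intermediate $t$-dimensional submodule, which is precisely the mechanism that drives the equivalence.

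For the forward implication I would first observe that every submodule of a $(\sigma,r)$-minimal code is itself $(\sigma,r)$-minimal. Indeed, assuming $C$ is $(\sigma,r)$-minimal, let $C'\leqslant_{R}C$ with $\dim_{R}(C')=t$ and take any $D\leqslant_{R}C'$ with $\dim_{R}(D)=r$. Then $D\leqslant_{R}C$, so $D$ is $\sigma$-minimal in $C$, and part (5) of Lemma 3.1 applied with $W=C'$ shows that $D$ is $\sigma$-minimal in $C'$ as well. Hence $C'$ is $(\sigma,r)$-minimal, and this direction needs nothing beyond $t\leqslant k$.

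For the converse I would invoke part (4) of Theorem 3.3: it suffices to verify that for every $W\leqslant_{R}C$ with $\dim_{R}(W)=r+1$ and every $D\leqslant_{R}W$ with $\dim_{R}(D)=r$ one has $\lambda(D)\neq\lambda(W)$. Fix such a $W$. Because $r+1\leqslant t\leqslant k=\dim_{R}(C)$, I can extend a basis of $W$ to a linearly independent set of size $t$ inside $C$, obtaining $C'\leqslant_{R}C$ with $W\subseteq C'$ and $\dim_{R}(C')=t$. By hypothesis $C'$ is $(\sigma,r)$-minimal, so part (4) of Theorem 3.3 applied to $C'$ gives $\lambda(D)\neq\lambda(W)$ for every $r$-dimensional $D\leqslant_{R}W$. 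As $W$ was arbitrary, the criterion of part (4) of Theorem 3.3 holds for $C$, and therefore $C$ is $(\sigma,r)$-minimal.

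The argument is essentially dimension bookkeeping, and I do not anticipate a serious obstacle. The single point requiring care is the inequality $r+1\leqslant t\leqslant k$: in the converse it is exactly what guarantees that every $(r+1)$-dimensional submodule $W$ of $C$ embeds into some $t$-dimensional submodule $C'\leqslant_{R}C$, so that the local test verified on the $t$-dimensional submodules can be transferred up to $C$. The conceptual crux is recognizing that part (4) of Theorem 3.3 provides a genuinely local characterization, which makes $(\sigma,r)$-minimality detectable at the level of $(r+1)$-dimensional submodules alone.
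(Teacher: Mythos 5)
Your proposal is correct and takes essentially the same route as the paper: the converse direction is exactly the paper's argument (extend each $(r+1)$-dimensional $W\leqslant_{R}C$ to a $t$-dimensional submodule containing it and apply the criterion ``$(1)\Longleftrightarrow(4)$'' of Theorem 3.3). Your forward direction via part (5) of Lemma 3.1 simply makes explicit the step the paper leaves implicit when it says it suffices to prove the ``if'' part.
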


\begin{proof}
It suffices to prove the ``if'' part. Let $W\leqslant_{R}C$ with $\dim_{R}(W)=r+1$, and let $D\leqslant_{R}W$ with $\dim_{R}(D)=r$. Then, we can choose $L\leqslant_{R}C$ such that $\dim_{R}(L)=t$, $W\subseteq L$. Since $L$ is $(\sigma,r)$-minimal, Theorem 3.3 implies that $\lambda(D)\neq\lambda(W)$, which further implies the desired result.
\end{proof}

\setlength{\parindent}{2em}
Finally, we show that $(\sigma,r)$-minimal codes are necessarily $(\sigma,s)$-minimal for $0\leqslant s\leqslant r$.

\setlength{\parindent}{0em}
\begin{corollary}
{\bf{(1)}}\,\,Let $C\leqslant_{R}X$, and let $D\leqslant_{R}C$ such that $D\subsetneqq C$, $D$ is $\sigma$-maximal in $C$. Then, for any $U\leqslant_{R}D$, it holds that $U$ is $\sigma$-maximal in $C$.

{\bf{(2)}}\,\,Let $C\leqslant_{R}X$, and let $r\in\{0,1,\dots,\dim_{R}(C)-1\}$ such that $C$ is $(\sigma,r)$-minimal. Then, for any $s\in\{0,1,\dots,r\}$, it holds that $C$ is $(\sigma,s)$-minimal.
\end{corollary}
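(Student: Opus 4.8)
The plan is to prove part (1) by contradiction and then to deduce part (2) from part (1) together with Theorem 3.3.

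For part (1), I would assume $U$ is \emph{not} $\sigma$-maximal in $C$, so that there is some $B\leqslant_{R}C$ with $\dim_{R}(B)=\dim_{R}(U)$, $\lambda(U)\preccurlyeq\lambda(B)$ and $B\neq U$, and my aim would be to manufacture from it a submodule $D'\leqslant_{R}C$ with $\dim_{R}(D')=\dim_{R}(D)$, $\lambda(D)\preccurlyeq\lambda(D')$ and $D'\neq D$, contradicting the $\sigma$-maximality of $D$. The starting observation is that for any complement $E$ of $U$ in $D$ (so $D=U\oplus E$), monotonicity of $\curlyvee$ together with (4) of Lemma 3.1 gives $\lambda(D)=\lambda(U)\curlyvee\lambda(E)\preccurlyeq\lambda(B)\curlyvee\lambda(E)=\lambda(B+E)$; hence \emph{any} submodule of $C$ containing $B+E$ already has support above $\lambda(D)$, and it only remains to enlarge $B+E$ to dimension $\dim_{R}(D)$ inside $C$ so that the result differs from $D$.

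The key step is to guarantee that this enlargement contains an element outside $D$, and this is where I expect the main difficulty to lie, since it is exactly here that the hypothesis $D\subsetneqq C$ must be used. Since $\dim_{R}(B)=\dim_{R}(U)$ and $B\neq U$, I can pick $w\in B\setminus U$, and I would split on whether $w\in D$. If $w\notin D$, then $B+E$ already contains $w\notin D$ and $\dim_{R}(B+E)\leqslant\dim_{R}(B)+\dim_{R}(E)=\dim_{R}(D)$, so any extension $D'\supseteq B+E$ of dimension $\dim_{R}(D)$ satisfies $D'\neq D$. If instead $w\in D\setminus U$, I would choose $E$ to be a complement of $U$ in $D$ that contains $w$ (possible because $\langle w\rangle\cap U=\{0\}$); then $w\in B\cap E$ forces $\dim_{R}(B+E)\leqslant\dim_{R}(D)-1$, which leaves room to adjoin some $f\in C\setminus D$ (such $f$ exists precisely because $D\subsetneqq C$) before extending to dimension $\dim_{R}(D)$, yielding $D'\ni f\notin D$. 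In either case $D'$ has the three required properties, giving the contradiction and hence $B=U$.

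For part (2) I would argue through $\sigma$-maximality. By the equivalence $(1)\Longleftrightarrow(3)$ of Theorem 3.3, the $(\sigma,r)$-minimality of $C$ says that every $r$-dimensional submodule of $C$ is $\sigma$-maximal in $C$, and to prove $(\sigma,s)$-minimality it suffices, by the same equivalence, to show every $s$-dimensional submodule is $\sigma$-maximal in $C$. Given $U\leqslant_{R}C$ with $\dim_{R}(U)=s$, I would extend $U$ to some $D\leqslant_{R}C$ with $\dim_{R}(D)=r$ (possible since $s\leqslant r\leqslant\dim_{R}(C)$); as $r\leqslant\dim_{R}(C)-1$ we have $D\subsetneqq C$, and $D$ is $\sigma$-maximal in $C$ by Theorem 3.3. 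Applying part (1) to the pair $U\leqslant_{R}D$ then shows that $U$ is $\sigma$-maximal in $C$, which completes the proof.
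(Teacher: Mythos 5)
Your proof is correct, and part (2) is essentially the paper's own argument: extend the $s$-dimensional submodule to an $r$-dimensional $D$, note $D\subsetneqq C$, and combine Theorem 3.3 with part (1). For part (1), however, your route is genuinely different in structure. The paper argues directly about an arbitrary competitor $V\leqslant_{R}C$ of $U$ (with $\dim_{R}(V)=\dim_{R}(U)$ and $\lambda(U)\preccurlyeq\lambda(V)$): it first establishes the intermediate claim that $V+W=D$ and $V\cap W=\{0\}$ for \emph{every} complement $W$ of $U$ in $D$ --- invoking the $\sigma$-maximality of $D$ twice inside this claim, and using $D\subsetneqq C$ to rule out $V+W\subsetneqq D$ --- and then deduces $V\subseteq D$ and finally $V=U$ by applying the claim to a complement chosen to contain a $1$-dimensional $I\leqslant_{R}V$ with $I\cap U=\{0\}$. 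You instead argue by contradiction, lifting a putative competitor $B\neq U$ of $U$ to a competitor $D'\neq D$ of $D$; the case split on whether the witness $w\in B\setminus U$ lies in $D$, together with the trick of routing the complement $E$ through $w$ in the second case so as to create the dimension slack into which some $f\in C\setminus D$ is inserted, is a valid replacement for the paper's intermediate claim. Both proofs rest on the same ingredients --- complements of $U$ in $D$, Lemma 3.1(4) giving $\lambda(D)\preccurlyeq\lambda(B+E)$, extension to dimension $\dim_{R}(D)$ inside $C$, and the hypothesis $D\subsetneqq C$ --- but yours invokes the maximality of $D$ only once and is shorter, while the paper's yields the stronger structural fact that every competitor of $U$ is a common complement in $D$ of all complements of $U$, from which $V=U$ is pinned down directly rather than refuted by contradiction.
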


\begin{proof}
{\bf{(1)}}\,\,Let $U\leqslant_{R}D$, and let $V\leqslant_{R}C$ such that $\dim_{R}(V)=\dim_{R}(U)$, $\lambda(U)\preccurlyeq \lambda(V)$. It suffices show that $U=V$. First, we show that for an arbitrary $W\leqslant_{R}D$ such that $U+W=D$, $U\cap W=\{0\}$, it holds that $V+W=D$, $V\cap W=\{0\}$. Indeed, by $\lambda(U)\preccurlyeq \lambda(V)$ and (4) of Lemma 3.1, we have $\lambda(D)=\lambda(U)\curlyvee\lambda(W)\preccurlyeq\lambda(V)\curlyvee\lambda(W)=\lambda(V+W)$. Moreover, we have
\begin{equation}\dim_{R}(D)=\dim_{R}(U)+\dim_{R}(W)=\dim_{R}(V)+\dim_{R}(W)\geqslant\dim_{R}(V+W).\end{equation}
By (3.10), we can choose $P\leqslant_{R}C$ such that $\dim_{R}(P)=\dim_{R}(D)$, $V+W\subseteq P$. Then, we have $\lambda(D)\preccurlyeq\lambda(V+W)\preccurlyeq\lambda(P)$, which, together with the fact that $D$ is $\sigma$-maximal in $C$, implies that $D=P$, which further implies that $V+W\subseteq D$. We claim that $V+W=D$. Indeed, by way of contradiction, suppose that $V+W\subsetneqq D$. Since $D\subsetneqq C$, we can choose $I\leqslant_{R}C$ such that $\dim_{R}(I)=1$, $I\nsubseteq D$. Noticing that $\dim_{R}(V+W+I)\leqslant\dim_{R}(D)$, we can choose $Q\leqslant_{R}C$ such that $\dim_{R}(Q)=\dim_{R}(D)$, $V+W+I\subseteq Q$. Then, we have $\lambda(D)\preccurlyeq \lambda(V+W)\preccurlyeq\lambda(Q)$, which, together with the fact that $D$ is $\sigma$-maximal in $C$, implies that $D=Q$, which further implies that $I\subseteq D$, a contradiction, as desired. It then follows from (3.10) that $\dim_{R}(V+W)=\dim_{R}(V)+\dim_{R}(W)$, which implies that $V\cap W=\{0\}$, as desired. Next, let $L\leqslant_{R}D$ such that $U+L=D$, $U\cap L=\{0\}$. From the first step, we deduce that $V+L=D$, which implies that $V\subseteq D$. Now by way of contradiction, suppose that $V\neq U$. By $\dim_{R}(V)=\dim_{R}(U)$, we have $V\nsubseteq U$. Hence we can choose $I\leqslant_{R}V$ such that $\dim_{R}(I)=1$, $I\cap U=\{0\}$. Since $I\leqslant_{R}D$, we can choose $M\leqslant_{R}D$ such that $U+M=D$, $U\cap M=\{0\}$, $I\subseteq M$. From the first step, we infer that $V\cap M=\{0\}$, a contradiction to $I\subseteq V\cap M$, as desired.

{\bf{(2)}}\,\,Let $B\leqslant_{R}C$ with $\dim_{R}(B)=s$. Then, we can choose $D\leqslant_{R}C$ such that $\dim_{R}(D)=r$, $B\subseteq D$. By Theorem 3.3, $D$ is $\sigma$-maximal in $C$, which, together with (1), implies that $B$ is $\sigma$-maximal in $C$. It then follows from Theorem 3.3 that $C$ is $(\sigma,s)$-minimal, as desired.
\end{proof}

\subsection{Existence results for $(\sigma,r)$-minimal codes}
\setlength{\parindent}{2em}
In this subsection, suppose that $R$ is a finite field with $|R|=q$. For any $(k,r)\in\mathbb{N}^{2}$, define
\begin{equation}\psi(k,r)=|\{C\leqslant_{R}X\mid\dim_{R}(C)=k,\text{$C$ is {\textbf{not}} $(\sigma,r)$-minimal}\}|.\end{equation}
We will derive two sufficient conditions for $k$-dimensional $(\sigma,r)$-minimal codes to exist for $(k,r)\in\mathbb{N}^{2}$. First, an application of Corollary 3.2 yields the following result.

\setlength{\parindent}{0em}
\begin{theorem}
Let $(k,r)\in\mathbb{N}^{2}$ with $r+1\leqslant k\leqslant n$, and let $t\in\{r+1,\dots,k\}$. Then, it holds that \begin{equation}\psi(k,r)\leqslant\psi(t,r)\bin_{q}(n-t,k-t).\end{equation}
Moreover, if
\begin{equation}\psi(t,r)<\prod_{i=k-t+1}^{k}\frac{q^{i+n-k}-1}{q^{i}-1},\end{equation}
then there exists $C\leqslant_{R}X$ such that $\dim_{R}(C)=k$, $C$ is $(\sigma,r)$-minimal.
\end{theorem}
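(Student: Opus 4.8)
The plan is to reduce everything to a double-counting argument driven by Corollary 3.2. Write $\mathcal{N}_{s}$ for the set of $s$-dimensional $R$-submodules of $X$ that fail to be $(\sigma,r)$-minimal, so that $\psi(s,r)=|\mathcal{N}_{s}|$ by definition (3.11). The point of departure is the contrapositive of Corollary 3.2: since the hypothesis forces $t\in\{r+1,\dots,k\}$ and $r+1\leqslant k\leqslant n$, a $k$-dimensional submodule $C$ lies in $\mathcal{N}_{k}$ \emph{if and only if} $C$ contains at least one $t$-dimensional submodule $W\in\mathcal{N}_{t}$.

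First I would establish (3.12). Consider the set of incident pairs
$$\Pi=\{(W,C)\mid W\in\mathcal{N}_{t},\ C\in\mathcal{N}_{k},\ W\subseteq C\}.$$
Projecting onto the second coordinate: the observation above guarantees that every $C\in\mathcal{N}_{k}$ arises as the second entry of at least one pair in $\Pi$, whence $\psi(k,r)=|\mathcal{N}_{k}|\leqslant|\Pi|$. Projecting onto the first coordinate and estimating the fibers crudely: for each fixed $W\in\mathcal{N}_{t}$ the number of $C\in\mathcal{N}_{k}$ with $W\subseteq C$ is at most the total number of $k$-dimensional submodules of $X$ containing $W$, which equals the number of $(k-t)$-dimensional subspaces of $X/W\cong R^{n-t}$, namely $\bin_{q}(n-t,k-t)$. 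Therefore $|\Pi|\leqslant\psi(t,r)\bin_{q}(n-t,k-t)$, and combining the two bounds yields (3.12).

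Next I would deduce the existence statement. The total number of $k$-dimensional submodules of $X$ is $\bin_{q}(n,k)$, so it suffices to prove the strict inequality $\psi(k,r)<\bin_{q}(n,k)$, as this forces at least one $k$-dimensional submodule to be $(\sigma,r)$-minimal. Cancelling the common factors $\prod_{i=1}^{k-t}\frac{q^{i+n-k}-1}{q^{i}-1}$ from the product formula (2.4) gives the identity
$$\frac{\bin_{q}(n,k)}{\bin_{q}(n-t,k-t)}=\prod_{i=k-t+1}^{k}\frac{q^{i+n-k}-1}{q^{i}-1},$$
so hypothesis (3.13) reads precisely $\psi(t,r)\bin_{q}(n-t,k-t)<\bin_{q}(n,k)$. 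Feeding this into (3.12) gives $\psi(k,r)\leqslant\psi(t,r)\bin_{q}(n-t,k-t)<\bin_{q}(n,k)$, which completes the argument.

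The argument is essentially routine once Corollary 3.2 is in hand, and I do not anticipate a serious obstacle. The only points demanding care are checking that $t\geqslant r+1$ is exactly what licenses the use of Corollary 3.2, and verifying the $q$-binomial ratio identity above; the crude fiber bound in the first step (replacing ``$C\in\mathcal{N}_{k}$ containing $W$'' by ``any $k$-dimensional $C$ containing $W$'') is what turns (3.12) into an inequality rather than an equality, but it is precisely strong enough to yield the existence conclusion. One may also note the boundary case $t=k$, where $\bin_{q}(n-k,0)=1$ renders (3.12) trivial and (3.13) collapses to $\psi(k,r)<\bin_{q}(n,k)$ directly, serving as a consistency check.
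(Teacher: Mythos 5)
Your proof is correct and takes essentially the same route as the paper: the paper counts the incidence relation $T$ of pairs $(A,B)$ with $A$ a non-$(\sigma,r)$-minimal $t$-dimensional submodule of a $k$-dimensional $B$, uses Corollary 3.2 to identify $\ran(T)$ with the non-minimal $k$-dimensional codes, bounds $\psi(k,r)=|\ran(T)|\leqslant|T|=\psi(t,r)\bin_{q}(n-t,k-t)$, and then derives existence from the same identity $\bin_{q}(n,k)=\bin_{q}(n-t,k-t)\prod_{i=k-t+1}^{k}\frac{q^{i+n-k}-1}{q^{i}-1}$. The only cosmetic difference is that your incidence set $\Pi$ additionally requires the $k$-dimensional member to be non-minimal, so the paper's exact count $|T|=\psi(t,r)\bin_{q}(n-t,k-t)$ becomes your inequality $|\Pi|\leqslant\psi(t,r)\bin_{q}(n-t,k-t)$, with an identical chain of estimates.
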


\begin{proof}
We use a combinatorial argument which follows the spirits in the proofs of [17, Theorem 3] and [3, Lemma 6.10]. More precisely, let $T$ denote the following binary relation:
$$T\triangleq\{(A,B)\mid \text{$A,B\leqslant_{R}X$, $A\subseteq B$, $\dim_{R}(A)=t$, $\dim_{R}(B)=k$, $A$ is not $(\sigma,r)$-minimal}\}.$$
From Corollary 3.2 and $r+1\leqslant t\leqslant k\leqslant n$, we deduce that $\dom(T)$ and $\ran(T)$ are equal to the sets of all the non-$(\sigma,r)$-minimal codes of dimension $t$ and dimension $k$, respectively; moreover, for any $A\in\dom(T)$, one can check that $|\{B\mid(A,B)\in T\}|=\bin_{q}(n-t,k-t)$. It then follows that $\psi(k,r)=|\ran(T)|\leqslant|T|=\psi(t,r)\bin_{q}(n-t,k-t)$, which establishes (3.12), as desired. Next, suppose that (3.13) holds. Then, it follows from (3.12) and (2.4) that
$$\psi(k,r)\leqslant\psi(t,r)\bin_{q}(n-t,k-t)<\left(\prod_{i=k-t+1}^{k}\frac{q^{i+n-k}-1}{q^{i}-1}\right)\bin_{q}(n-t,k-t)=\bin_{q}(n,k),$$
which immediately implies the desired result.
\end{proof}

\setlength{\parindent}{2em}
Next, we apply Theorems 3.3 and 3.4 and derive the following result.

\setlength{\parindent}{0em}
\begin{theorem}
Let $r\in\{0,1,\dots,n-1\}$. Then, it holds that
\begin{equation}\psi(r+1,r)\leqslant\frac{1}{q-1}\left(\sum_{s=r+1}^{n}|\{B\leqslant_{R}X\mid\dim_{R}(B)=r,\rk(\lambda(B))=s\}|(q^{s-r}-1)\right).\end{equation}
Moreover, let $k\in\{r+1,\dots,n\}$, and suppose that the right hand side of (3.14) is smaller than $\prod_{i=k-r}^{k}\frac{q^{i+n-k}-1}{q^{i}-1}$. Then, there exists $C\leqslant_{R}X$ such that $\dim_{R}(C)=k$, $C$ is $(\sigma,r)$-minimal.
\end{theorem}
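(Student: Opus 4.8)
The plan is to first establish the counting bound (3.14) through a double-counting argument over pairs consisting of an $r$-dimensional subcode and an $(r+1)$-dimensional overcode that share the same $\lambda$-value, and then to obtain the existence statement by feeding (3.14) into Theorem 3.4 with $t=r+1$.

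First I would identify the non-$(\sigma,r)$-minimal codes of dimension $r+1$. By the equivalence ``$(1)\Longleftrightarrow(4)$'' of Theorem 3.3, a code $C$ with $\dim_{R}(C)=r+1$ fails to be $(\sigma,r)$-minimal exactly when it contains some $D\leqslant_{R}C$ with $\dim_{R}(D)=r$ and $\lambda(D)=\lambda(C)$. Hence every code counted by $\psi(r+1,r)$ admits at least one such $r$-dimensional witness, so $\psi(r+1,r)$ is bounded above by the number of pairs $(B,C)$ with $\dim_{R}(B)=r$, $\dim_{R}(C)=r+1$, $B\subseteq C$, and $\lambda(C)=\lambda(B)$. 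Summing over $B$, it then remains to count, for each fixed $B$, the number $N(B)$ of admissible overcodes $C$.

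The crux is to realize $N(B)$ as a count of lines in a quotient. Fix $B$ and set $s=\rk(\lambda(B))$. I would show that an $(r+1)$-dimensional $C\supseteq B$ satisfies $\lambda(C)=\lambda(B)$ if and only if $B\subseteq C\subseteq\mu(\lambda(B))$: writing $C=B+Rv$, parts (2) and (4) of Lemma 3.1 give $\lambda(C)=\lambda(B)\curlyvee\sigma(v)$, so $\lambda(C)=\lambda(B)$ forces $\sigma(v)\preccurlyeq\lambda(B)$, i.e.\ $v\in\mu(\lambda(B))$; conversely any such $C$ has $\lambda(B)\preccurlyeq\lambda(C)\preccurlyeq\lambda(\mu(\lambda(B)))=\lambda(B)$ by monotonicity (part (1) of Lemma 3.1). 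Since part (3) of Lemma 3.1 yields $B\subseteq\mu(\lambda(B))$ and $\dim_{R}(\mu(\lambda(B)))=\rk(\lambda(B))=s$, the admissible $C$ are precisely the $(r+1)$-dimensional subspaces strictly between $B$ and $\mu(\lambda(B))$, which correspond bijectively to the $1$-dimensional subspaces of the $(s-r)$-dimensional quotient $\mu(\lambda(B))/B$; there are $(q^{s-r}-1)/(q-1)$ of these. Grouping $\sum_{B}N(B)$ by the value $s=\rk(\lambda(B))$---noting $\rk(\lambda(B))\geqslant\dim_{R}(B)=r$ so that the $s=r$ terms vanish, and $\rk(\lambda(B))\leqslant n$---recovers exactly the right-hand side of (3.14).

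For the existence statement I would invoke Theorem 3.4 with $t=r+1$: its threshold $\prod_{i=k-t+1}^{k}(q^{i+n-k}-1)/(q^{i}-1)$ becomes $\prod_{i=k-r}^{k}(q^{i+n-k}-1)/(q^{i}-1)$, and combining the hypothesis that the right-hand side of (3.14) lies below this threshold with the bound (3.14) just established gives $\psi(r+1,r)<\prod_{i=k-r}^{k}(q^{i+n-k}-1)/(q^{i}-1)$, whereupon Theorem 3.4 produces a $k$-dimensional $(\sigma,r)$-minimal code. I expect no serious obstacle: the only point deserving care is that a single bad $C$ may carry several witnesses $B$, but this overcounting merely inflates the pair count and so can only strengthen the upper bound, leaving (3.14) intact.
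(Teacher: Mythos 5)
Your proposal is correct and takes essentially the same approach as the paper: the paper defines the relation $T=\{(B,C)\mid B,C\leqslant_{R}X,\ \dim_{R}(B)=r,\ \dim_{R}(C)=r+1,\ B\subseteq C\subseteq\mu(\lambda(B))\}$, identifies $\ran(T)$ with the non-$(\sigma,r)$-minimal codes of dimension $r+1$ via ``$(1)\Longleftrightarrow(4)$'' of Theorem 3.3 and Lemma 3.1, computes $|T|$ exactly as you do by counting lines in the $(s-r)$-dimensional quotient $\mu(\lambda(B))/B$, and then applies Theorem 3.4 with $t=r+1$. Your reformulation of the condition $C\subseteq\mu(\lambda(B))$ as the equivalence $\lambda(C)=\lambda(B)$ is precisely the verification the paper leaves to the reader, and your double-counting bound $\psi(r+1,r)\leqslant|T|$ is the paper's step $|\ran(T)|\leqslant|T|$.
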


\begin{proof}
Let $T\triangleq\{(B,C)\mid B,C\leqslant_{R}X,\dim_{R}(B)=r,\dim_{R}(C)=r+1,B\subseteq C\subseteq\mu(\lambda(B))\}$. From ``$(1)\Longleftrightarrow(4)$'' of Theorem 3.3 and (1), (2) of Lemma 3.1, one can check that $\ran(T)$ is equal to the set of all the non-$(\sigma,r)$-minimal codes of dimension $r+1$. Moreover, by (3) of Lemma 3.1, we have
\begin{eqnarray*}
\begin{split}
|T|&=\sum_{(B\leqslant_{R}X,\dim_{R}(B)=r)}\frac{q^{\rk(\lambda(B))-r}-1}{q-1}\\
&=\frac{1}{q-1}\left(\sum_{s=r+1}^{n}|\{B\leqslant_{R}X\mid\dim_{R}(B)=r,\rk(\lambda(B))=s\}|(q^{s-r}-1)\right),
\end{split}
\end{eqnarray*}
which, together with $|\ran(T)|\leqslant|T|$, immediately implies (3.14), as desired. Finally, with (3.14), an application of Theorem 3.4 to $t=r+1$ immediately implies the ``moreover'' part.
\end{proof}

\begin{remark}
With $\sigma$ appropriately set, Theorem 3.5 recovers [3, Lemma 6.10] and [13, Theorem 2.12] which have been established for rank metric codes and sum-rank metric codes, respectively. Moreover, in Section 5.3, we will derive the exact value of $\psi(r+1,r)$ for rank metric codes, and then apply Theorem 3.4 to $t=r+1$ to derive an upper bound for the minimal length of $k$-dimensional $r$-minimal rank metric codes (see Theorem 5.4 for more details).
\end{remark}

\section{Cutting $r$-blocking sets and $r$-minimal rank metric codes}
\setlength{\parindent}{2em}
Throughout this section, let $\mathbb{E}/\mathbb{F}$ be a finite dimensional field extension with $\dim_{\mathbb{F}}(\mathbb{E})=m$. We will extensively use the notations introduced in Sections 2.1 and 2.2 without further mentioning.

\subsection{An extension of Lemma 2.2}

\setlength{\parindent}{2em}
In this subsection, we derive the following extension of Lemma 2.2, which is the main result of this subsection and will be crucial for our discussion.

\setlength{\parindent}{0em}
\begin{theorem}
{\bf{(1)}}\,\,Let $Y$ be a finite dimensional $\mathbb{E}$-vector space with $\dim_{\mathbb{E}}(Y)=k$, and let $t\in\{0,1,\dots,k\}$. Then, for any $A\leqslant_{\mathbb{F}} Y$ with $\dim_{\mathbb{F}}(A)\leqslant mt$, there exists $V\leqslant_{\mathbb{E}}Y$ such that $\dim_{\mathbb{E}}(V)=k-t$, $A\cap V=\{0\}$. Moreover, for any $B\leqslant_{\mathbb{F}} Y$ with $\dim_{\mathbb{F}}(B)\geqslant mt$, there exists $W\leqslant_{\mathbb{E}}Y$ such that $\dim_{\mathbb{E}}(W)=k-t$, $B+W=Y$.

{\bf{(2)}}\,\,Let $(n,k)\in\mathbb{N}^{2}$ with $n\geqslant k\geqslant1$, and let $C\leqslant_{\mathbb{E}}\mathbb{E}^{n}$ be an $[n,k]$ code. Then, for $s\in\{0,1,\dots,k\}$, it holds that $\max\{\wt(D)\mid D\leqslant_{\mathbb{E}}C,\dim_{\mathbb{E}}(D)=s\}=\min\{ms,\wt(C)\}$.
\end{theorem}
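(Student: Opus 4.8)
The plan is to deduce Part (2) from Part (1) by a dimension count inside the dictionary provided by Lemma 2.1, and to reduce Part (1) itself to a single statement about $\mathbb{E}$-linear surjections, proved by induction on $t$ with Lemma 2.2 entering only at the base case $t=1$. Concretely, I would reduce both halves of Part (1) to the following claim $(\star)$: if $S\leqslant_{\mathbb{F}}Y$ with $\dim_{\mathbb{F}}(S)=mt$, then there is an $\mathbb{E}$-linear surjection $\psi:Y\longrightarrow\mathbb{E}^{t}$ whose restriction to $S$ is an $\mathbb{F}$-linear isomorphism onto $\mathbb{E}^{t}$. For the first half, enlarge $A$ to an $\mathbb{F}$-subspace $S$ of dimension exactly $mt$ (possible since $\dim_{\mathbb{F}}(A)\leqslant mt\leqslant mk$), apply $(\star)$, and set $V=\ker\psi$; then $\dim_{\mathbb{E}}(V)=k-t$ and $A\cap V\subseteq S\cap\ker(\psi|_{S})=\{0\}$. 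For the second half, shrink $B$ to some $S\subseteq B$ with $\dim_{\mathbb{F}}(S)=mt$ and set $W=\ker\psi$; then $\psi(S)=\psi(Y)$ forces $S+W=Y$, hence $B+W=Y$. Thus the two apparently different conclusions are two readings of the same surjection.

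For the base case $t=1$ of $(\star)$ it suffices, identifying $Y$ with $\mathbb{E}^{[k]}$, to find $c\in\mathbb{E}^{k}$ such that $\psi=c\cdot(-)$ is injective on an $m$-dimensional $S$. Writing an $\mathbb{F}$-basis $s_{1},\dots,s_{m}$ of $S$ as the rows of $P\in\Mat_{m,k}(\mathbb{E})$, injectivity of $\psi|_{S}$ says precisely that the $m$ entries of $Pc$ are $\mathbb{F}$-linearly independent. Consider the length-$m$ code $C'=\langle\row(P^{T})\rangle_{\mathbb{E}}\leqslant_{\mathbb{E}}\mathbb{E}^{m}$, whose codewords are exactly the vectors $Pc$; by Lemma 2.1 applied to $P^{T}$ we get $\wt(C')=\dim_{\mathbb{F}}\langle s_{1},\dots,s_{m}\rangle_{\mathbb{F}}=m$, so $\chi(C')=\mathbb{F}^{m}$. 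Since the length $m$ satisfies $m\leqslant m$, Lemma 2.2 produces a codeword $Pc$ with $\rsupp(Pc)=\chi(C')=\mathbb{F}^{m}$, i.e. with all entries $\mathbb{F}$-independent, which is the required $c$.

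For the inductive step I would first apply the case $t=1$ to any $m$-dimensional $\mathbb{F}$-subspace of $S$ to obtain a functional $\psi_{1}:Y\longrightarrow\mathbb{E}$ with $\psi_{1}(S)=\mathbb{E}$; then $V_{1}=\ker\psi_{1}$ has $\dim_{\mathbb{E}}(V_{1})=k-1$ and $\dim_{\mathbb{F}}(S\cap V_{1})=m(t-1)$. The induction hypothesis inside $V_{1}$ yields $\psi':V_{1}\longrightarrow\mathbb{E}^{t-1}$ restricting to an isomorphism on $S\cap V_{1}$; extending $\psi'$ to $Y$ and setting $\psi=(\psi_{1},\psi')$ gives $(\star)$, since $\psi(x)=0$ with $x\in S$ forces $x\in S\cap V_{1}$ and then $x=0$. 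Finally I deduce Part (2): the bound $\wt(D)\leqslant\min\{ms,\wt(C)\}$ is immediate from $\wt(D)\leqslant\dim_{\mathbb{F}}(D)=ms$ (Corollary 2.1) and $\chi(D)\subseteq\chi(C)$. For the reverse inequality, fix a full-rank generator matrix $G$ of $C$ with $U=\langle\col(G)\rangle_{\mathbb{F}}$, so $\dim_{\mathbb{F}}(U)=\wt(C)$; by Lemma 2.1 a subcode $D=\{\gamma G\mid\gamma\in B\}$ with $\dim_{\mathbb{E}}(B)=s$ satisfies $\wt(D)=\wt(C)-\dim_{\mathbb{F}}(B^{\ddagger}\cap U)$, where $B^{\ddagger}$ ranges over all $(k-s)$-dimensional $\mathbb{E}$-subspaces of $\mathbb{E}^{[k]}$. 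Applying the first half of Part (1) (with $t=s$) when $\wt(C)\leqslant ms$, and the second half when $\wt(C)\geqslant ms$, produces such a subspace with $\dim_{\mathbb{F}}(B^{\ddagger}\cap U)=\max\{\wt(C)-ms,0\}$, whence the maximal value of $\wt(D)$ equals $\min\{ms,\wt(C)\}$.

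The main obstacle is the base case $t=1$: the inductive step and the passage to Part (2) are routine linear algebra and bookkeeping through Lemma 2.1, whereas the heart of the matter is recognizing that avoiding (or spanning with) an $\mathbb{E}$-subspace is governed by the existence of an $\mathbb{E}$-linear functional injective on a prescribed $m$-dimensional $\mathbb{F}$-space, and packaging this so that the auxiliary code has length exactly $m$ with full rank support $\mathbb{F}^{m}$ --- exactly the regime in which Lemma 2.2, the tool that keeps the argument valid for infinite $\mathbb{E}$, applies.
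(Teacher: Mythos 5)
Your proposal is correct and takes essentially the same route as the paper: in both arguments the heart is applying Lemma 2.2 (through the dictionary of Lemma 2.1) to a length-$m$ code built from an $\mathbb{F}$-basis of an $m$-dimensional subspace, producing a codeword of full rank support and hence an $\mathbb{E}$-hyperplane meeting that subspace trivially, followed by an induction on $t$ and the identical two-case dimension count for Part (2). Your packaging via surjections $\psi:Y\longrightarrow\mathbb{E}^{t}$ and their kernels (base case first, then induction inside $\ker\psi_{1}$) is a cleaner mirror image of the paper's induction, which inducts first and then applies the hyperplane-avoidance claim inside $\langle B\cap M\rangle_{\mathbb{E}}$, but the mathematical content is the same.
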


\begin{proof}
{\bf{(1)}}\,\,First, let $w\in\mathbb{Z}^{+}$, and let $A\leqslant_{\mathbb{F}}\mathbb{E}^{[w]}$ such that $\dim_{\mathbb{F}}(A)\leqslant m$, $\langle A\rangle_{\mathbb{E}}=\mathbb{E}^{[w]}$. We show that there exists $I\leqslant_{\mathbb{E}}\mathbb{E}^{[w]}$ such that $\dim_{\mathbb{E}}(I)=w-1$, $A\cap I=\{0\}$. Indeed, we can choose $G\in\Mat_{w,m}(\mathbb{E})$ such that $\langle\col(G)\rangle_{\mathbb{F}}=A$; moreover, let $C\triangleq\langle \row(G)\rangle_{\mathbb{E}}$. It follows from $\langle A\rangle_{\mathbb{E}}=\mathbb{E}^{[w]}$ that $C$ is an $[m,w]$ code. By Lemma 2.2, we can choose $Q\leqslant_{\mathbb{E}}C$ such that $\dim_{\mathbb{E}}(Q)=1$, $\chi(Q)=\chi(C)$. Let $P\leqslant_{\mathbb{E}}\mathbb{E}^{w}$ such that $Q=\{\gamma G\mid\gamma\in P\}$. By (2) of Lemma 2.1, we have ${P^{\ddagger}}\cap A=\{0\}$; moreover, it follows from $\dim_{\mathbb{E}}(P)=1$ that $\dim_{\mathbb{E}}({P^{\ddagger}})=w-1$, as desired.

\hspace*{4mm}\,\,Now we prove the first assertion via induction on $t$. If $t=0$, then the result trivially holds. Hence we suppose that  $t\geqslant1$, and let $A\leqslant_{\mathbb{F}} Y$ with $\dim_{\mathbb{F}}(A)\leqslant mt$. Then, we can choose $B\leqslant_{\mathbb{F}} Y$ such that $\dim_{\mathbb{F}}(B)=mt$, $A\subseteq B$. Let $H\leqslant_{\mathbb{F}}B$ with $\dim_{\mathbb{F}}(H)=m(t-1)$. By induction, we can choose $M\leqslant_{\mathbb{E}}Y$ such that $\dim_{\mathbb{E}}(M)=k-t+1$, $H\cap M=\{0\}$. This, together with $\dim_{\mathbb{F}}(H)+\dim_{\mathbb{F}}(M)=\dim_{\mathbb{F}}(Y)$, implies that $H+M=Y$. It follows that $B+M=Y$, which further implies that $\dim_{\mathbb{F}}(B\cap M)=m$. Let $X\triangleq\langle B\cap M\rangle_{\mathbb{E}}$. By the discussion in the previous paragraph, we can choose $I\leqslant_{\mathbb{E}}X$ such that $\dim_{\mathbb{E}}(I)=\dim_{\mathbb{E}}(X)-1$, $(B\cap M)\cap I=\{0\}$. Moreover, we can choose $W\leqslant_{\mathbb{E}}Y$ such that $W\cap X=\{0\}$, $W+X=Y$. We infer that $\dim_{\mathbb{E}}(I+W)=k-1$ and $(B\cap M)\cap(I+W)=\{0\}$, which, together with $A\leqslant_{\mathbb{F}}B$, implies that $A\cap(M\cap (I+W))=\{0\}$. Moreover, it is straightforward to verify that $\dim_{\mathbb{E}}(M\cap (I+W))=k-t$, as desired.

\hspace*{4mm}\,\,Finally, we prove the ``moreover'' part. Let $B\leqslant_{\mathbb{F}} Y$ with $\dim_{\mathbb{F}}(B)\geqslant mt$. Then, we can choose $A\leqslant_{\mathbb{F}}B$ with $\dim_{\mathbb{F}}(A)=mt$; moreover, by the first assertion, we can choose $W\leqslant_{\mathbb{E}}Y$ such that $\dim_{\mathbb{E}}(W)=k-t$, $A\cap W=\{0\}$. This, together with $\dim_{\mathbb{F}}(A)+\dim_{\mathbb{F}}(W)=\dim_{\mathbb{F}}(Y)$, implies that $A+W=Y$, which further implies that $B+W=Y$, as desired.

{\bf{(2)}}\,\,Let $G\in\Mat_{k,n}(\mathbb{E})$ be a generator matrix of $C$, and let $U=\langle \col(G)\rangle_{\mathbb{F}}$. By Corollary 2.1, we have $\max\{\wt(D)\mid D\leqslant_{\mathbb{E}}C,\dim_{\mathbb{E}}(D)=s\}\leqslant\min\{ms,\wt(C)\}$. Now we discuss in two cases. First, suppose that $ms\leqslant\dim_{\mathbb{F}}(U)$. By (1), we can choose $V\leqslant_{\mathbb{E}}\mathbb{E}^{[k]}$ such that $\dim_{\mathbb{E}}(V)=k-s$, $U+V=\mathbb{E}^{[k]}$. Let $B\leqslant_{\mathbb{E}}\mathbb{E}^{k}$ such that $V=B^{\ddagger}$, and let $D=\{\gamma G\mid \gamma\in B\}$. By Lemma 2.1, one can check that $\dim_{\mathbb{E}}(D)=\dim_{\mathbb{E}}(B)=s$ and $\wt(D)=\dim_{\mathbb{F}}(\mathbb{E}^{[k]})-\dim_{\mathbb{F}}(V)=ms$, as desired. Next, suppose that $ms\geqslant\dim_{\mathbb{F}}(U)$. By (1), we can choose $W\leqslant_{\mathbb{E}}\mathbb{E}^{[k]}$ such that $\dim_{\mathbb{E}}(W)=k-s$, $U\cap W=\{0\}$. Let $P\leqslant_{\mathbb{E}}\mathbb{E}^{k}$ such that $W=P^{\ddagger}$, and let $Q=\{\gamma G\mid \gamma\in P\}$. By Lemma 2.1, one can check that $\dim_{\mathbb{E}}(Q)=\dim_{\mathbb{E}}(P)=s$ and $\wt(Q)=\wt(C)$, as desired.
\end{proof}

\begin{remark}
Part (2) of Theorem 4.1 recovers Lemma 2.2 and [3, Proposition 3.11] when $s=1$.
\end{remark}

\setlength{\parindent}{2em}
The following proposition improves (1) of Theorem 4.1 for vector spaces over finite fields.

\begin{proposition}
Let $\mathbb{P}$ be a finite field with $|\mathbb{P}|=q$, $X$ be a finite dimensional $\mathbb{P}$-vector space with $\dim_{\mathbb{P}}(X)=k$, $H\subseteq X$ with $0\in H$, and $w=|\{a\in\mathbb{P}-\{0\}\mid aH=H\}|$, where $aH\triangleq\{ah\mid h\in H\}$. Then, for any $t\in\{0,1,\dots,k\}$ such that $|H|\leqslant w(q^{k+1-t}-1)/(q-1)$, there exists $L\leqslant_{\mathbb{P}}X$ such that $\dim_{\mathbb{P}}(L)=t$, $H\cap L=\{0\}$.
\end{proposition}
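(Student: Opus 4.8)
The plan is to reduce the statement to a purely projective ``line-avoidance'' problem and then settle the latter by an inductive quotient argument of Bose--Burton type. Write $H^{*}=H-\{0\}$, and note that, since $L$ is a subspace, $L\cap H=\{0\}$ is equivalent to $L$ containing no line $\mathbb{P}h$ with $h\in H^{*}$. First I would record that $G_{0}\triangleq\{a\in\mathbb{P}-\{0\}\mid aH=H\}$ is a subgroup of the multiplicative group $\mathbb{P}-\{0\}$ of order $w$, and that it acts \emph{freely} on $H^{*}$: if $ah=h$ with $h\neq0$, then $a=1$. Consequently, for each $h\in H^{*}$ the $w$ vectors $\{ah\mid a\in G_{0}\}$ are distinct, lie in $H^{*}$, and all lie on the single line $\mathbb{P}h$ through the origin. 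Since the lines through the origin partition $X-\{0\}$, the set $\mathcal{B}\triangleq\{\mathbb{P}h\mid h\in H^{*}\}$ of ``bad'' lines satisfies
$$|H^{*}|=\sum_{\ell\in\mathcal{B}}|\ell\cap H^{*}|\geqslant w|\mathcal{B}|,$$
so $|\mathcal{B}|\leqslant(|H|-1)/w$. Combined with the hypothesis $|H|\leqslant w(q^{k+1-t}-1)/(q-1)$, this gives $|\mathcal{B}|\leqslant(q^{k+1-t}-1)/(q-1)-1/w$; as $|\mathcal{B}|$ is an integer, we conclude $|\mathcal{B}|<(q^{k+1-t}-1)/(q-1)$.

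It therefore suffices to prove the following claim: if $\mathcal{B}$ is a set of lines through the origin in a $k$-dimensional $\mathbb{P}$-space $X$ with $|\mathcal{B}|<(q^{k+1-t}-1)/(q-1)$, then some $t$-dimensional subspace $L$ contains no line of $\mathcal{B}$ (whence $L\cap H^{*}=\emptyset$, as required, since any $h\in L\cap H^{*}$ would give a bad line $\mathbb{P}h\subseteq L$). I would prove this by induction on $t$. The case $t=0$ is trivial with $L=\{0\}$. For the inductive step, observe that $(q^{k+1-t}-1)/(q-1)\leqslant(q^{k+2-t}-1)/(q-1)$, so the induction hypothesis already yields a $(t-1)$-dimensional subspace $L'$ containing no line of $\mathcal{B}$. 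Passing to the quotient map $\pi:X\longrightarrow X/L'$, whose target has dimension $k-t+1$, every $\ell\in\mathcal{B}$ satisfies $\ell\not\subseteq L'$ (as $L'$ contains no bad line) and hence maps to a genuine line $\pi(\ell)$; moreover, for $v\notin L'$ one checks that $\ell\subseteq L'+\mathbb{P}v$ if and only if $\pi(\ell)=\mathbb{P}\pi(v)$. Thus the $t$-dimensional extensions $L'+\mathbb{P}v$ of $L'$ that contain some bad line correspond exactly to the directions in $\{\pi(\ell)\mid\ell\in\mathcal{B}\}$, a set of at most $|\mathcal{B}|$ directions. Since the number of $1$-dimensional subspaces of $X/L'$ equals $(q^{k-t+1}-1)/(q-1)>|\mathcal{B}|$, there is an admissible direction $\mathbb{P}\pi(v)$, and $L\triangleq L'+\mathbb{P}v$ is the desired $t$-dimensional subspace.

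The reduction in the first paragraph is routine once the group action is in place; the one genuinely useful idea there is that multiplication by $G_{0}$ forces each bad line to absorb at least $w$ points of $H^{*}$, which is precisely what turns the crude count $|\mathcal{B}|\leqslant|H|-1$ into the sharp bound $|\mathcal{B}|<(q^{k+1-t}-1)/(q-1)$. The main work is the inductive claim, and the only point requiring care there is the equivalence $\ell\subseteq L'+\mathbb{P}v\Longleftrightarrow\pi(\ell)=\mathbb{P}\pi(v)$, which relies on having first arranged $L'$ to contain no bad line so that each $\pi(\ell)$ is a line rather than $\{0\}$. I expect no further obstacle, since the dimension bookkeeping, namely $\dim_{\mathbb{P}}(X/L')=k-t+1$ with exactly $(q^{k-t+1}-1)/(q-1)$ available directions, matches the threshold in the hypothesis on the nose.
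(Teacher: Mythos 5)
Your proof is correct, and it takes a genuinely different route from the paper's. The paper proves the proposition directly by induction on $t$: at each step it chooses $z\notin\bigcup_{a\in\mathbb{P}-\{0\}}aH$ via the count $\left|\bigcup_{a}aH\right|\leqslant\frac{q-1}{w}|H|\leqslant q^{k}-1$, sets $I=\mathbb{P}z$, and then applies the induction hypothesis to the \emph{enlarged} set $H+I$, checking that its stabilizer order $v$ satisfies $v\geqslant w$ and that $|H+I|\leqslant q|H|$, so the hypothesis for $t-1$ is preserved; the desired subspace is assembled as $I+J$ at the end. You instead perform the $w$-counting once and for all --- the free action of the stabilizer subgroup $G_{0}$ forces every line through the origin meeting $H^{*}$ to contain at least $w$ points of $H^{*}$, so the set $\mathcal{B}$ of bad lines has size strictly below $(q^{k+1-t}-1)/(q-1)$ --- and then prove a self-contained line-avoidance lemma (essentially the contrapositive of the Bose--Burton bound for blocking sets) by induction on $t$, extending $L'$ by a direction in the quotient $X/L'$ that misses the at most $|\mathcal{B}|$ projected bad directions. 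The two countings carry the same content (the union $\bigcup_{a}aH$ is precisely the union of the bad lines), but your decomposition keeps $\mathcal{B}$ fixed throughout the induction instead of modifying $H$, which makes the role of $w$ more transparent and isolates a reusable projective-geometry statement; the paper's version avoids quotient spaces and projective language at the cost of tracking how the stabilizer and the cardinality of $H+I$ evolve. Two minor remarks: your appeal to the integrality of $|\mathcal{B}|$ is unnecessary, since $-1/w<0$ already yields the strict inequality; and your observation that the equivalence $\ell\subseteq L'+\mathbb{P}v\Longleftrightarrow\pi(\ell)=\mathbb{P}\pi(v)$ needs $\ell\cap L'=\{0\}$ (guaranteed by $L'$ avoiding $\mathcal{B}$) is exactly the point requiring care, and you handled it correctly.
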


\begin{proof}
We proceed via induction on $t$. If $t=0$, then the result trivially holds. Hence we suppose that $t\geqslant1$. Then, from $|H|\leqslant w(q^{k}-1)/(q-1)$, we deduce that
$$\left|\bigcup_{a\in \mathbb{P}-\{0\}}aH\right|\leqslant\frac{q-1}{w}|H|\leqslant\frac{q-1}{w}\cdot\frac{w(q^{k}-1)}{q-1}=q^{k}-1=|X|-1.$$
Hence we can choose $z\in X$ such that $z\not\in aH$ for all $a\in\mathbb{P}-\{0\}$. From $0\in H$, we infer that $z\neq0$; moreover, one can check that $I\triangleq\{bz\mid b\in\mathbb{P}\}$ is a $1$-dimensional $\mathbb{P}$-subspace of $X$ with $H\cap I=\{0\}$. Now consider $H+I\subseteq X$, $v\triangleq|\{a\in\mathbb{P}-\{0\}\mid a(H+I)=H+I\}|$ and $t-1$. Since $0\in H$, we have $0\in H+I$; moreover, for any $a\in\mathbb{P}-\{0\}$ with $aH=H$, noticing that $aI=I$, we have $a(H+I)=aH+aI=H+I$. It then follows that $w\leqslant v$, which further implies that
$$|H+I|\leqslant q|H|\leqslant q\cdot\frac{w(q^{k+1-t}-1)}{q-1}\leqslant q\cdot\frac{v(q^{k+1-t}-1)}{q-1}=\frac{v(q^{k+2-t}-q)}{q-1}\leqslant\frac{v(q^{k+1-(t-1)}-1)}{q-1}.$$
By induction, we can choose $J\leqslant_{\mathbb{P}}X$ such that $\dim_{\mathbb{P}}(J)=t-1$, $(H+I)\cap J=\{0\}$. Since $0\in H$, we have $I\subseteq H+I$, which implies that $I\cap J=\{0\}$, which further implies that $\dim_{\mathbb{P}}(I+J)=t$. Moreover, it follows from $H\cap I=\{0\}$ and $(H+I)\cap J=\{0\}$ that $H\cap(I+J)=\{0\}$, as desired.
\end{proof}

\begin{remark}
Proposition 4.1 does not require $H$ to be an additive subgroup of $X$, and hence generalizes (1) of Theorem 4.1 when $\mathbb{E}$ is finite. Thus if $\mathbb{E}$ is finite, then Theorem 4.1 can be alternatively established by using Proposition 4.1. Moreover, if $t=1$, then the restriction $|H|\leqslant w(q^{k}-1)/(q-1)$ is tight. Indeed, by selecting a non-zero element from each $1$-dimensional $\mathbb{P}$-subspace of $X$, together with $0$, one obtains a subset $H\subseteq X$ such that $|H|-1=(q^{k}-1)/(q-1)$.
\end{remark}

\subsection{Necessary and sufficient conditions for linear cutting $r$-blocking sets}

\setlength{\parindent}{2em}
Throughout this subsection, let $Y$ be a finite dimensional $\mathbb{E}$-vector space with $\dim_{\mathbb{E}}(Y)=k\geqslant1$. We will study \textit{linear cutting $r$-blocking sets}, i.e., cutting $r$-blocking sets of $_{\mathbb{E}}Y$ which are $\mathbb{F}$-subspaces. Following [3, Section 6.4], for any $A\leqslant_{\mathbb{F}}Y$, the \textit{linearity index} of $A$, denoted by $\mathcal{L}(A)$, is defined as
\begin{equation}\mathcal{L}(A)\triangleq\max\{\dim_{\mathbb{E}}(V)\mid V\leqslant_{\mathbb{E}}Y,V\subseteq A\}.\end{equation}

The following theorem is the main result of this subsection, in which we characterize linear cutting $r$-blocking sets in terms of evasive subspaces. When $\mathbb{E}$ is finite, the counterpart result has been established in [8, Theorem 3.3] for $r=1$ and in [31, Theorem 2.3] for arbitrary $r$ by using a counting argument.

\setlength{\parindent}{0em}
\begin{theorem}
Let $A\leqslant_{\mathbb{F}}Y$, and let $r\in\{0,1,\dots,k-1\}$. Then, the following three statements are equivalent to each other:

{\bf{(1)}}\,\,$A$ is a cutting $r$-blocking set of $Y$;

{\bf{(2)}}\,\,For any $W\leqslant_{\mathbb{E}} Y$ with $\dim_{\mathbb{E}}(W)=k-r-1$, it holds that $\dim_{\mathbb{F}}(A+W)\geqslant(k-1)m+1$;

{\bf{(3)}}\,\,$A$ is $(k-r-1,\dim_{\mathbb{F}}(A)-mr-1)$-evasive in $Y$.
\end{theorem}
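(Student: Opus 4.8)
The plan is to prove the two biconditionals $(1)\Leftrightarrow(2)$ and $(2)\Leftrightarrow(3)$, concentrating all the genuine content in one auxiliary fact about how $\mathbb{F}$-subspaces meet $\mathbb{E}$-lines. The key sub-lemma I would isolate is: for any $B\leqslant_{\mathbb{F}}Y$, the subspace $B$ meets every $1$-dimensional $\mathbb{E}$-subspace of $Y$ nontrivially if and only if $\dim_{\mathbb{F}}(B)\geqslant(k-1)m+1$. The ``if'' direction is a Grassmann count: for any $\mathbb{E}$-line $I$ one has $\dim_{\mathbb{F}}(B\cap I)=\dim_{\mathbb{F}}(B)+m-\dim_{\mathbb{F}}(B+I)\geqslant\dim_{\mathbb{F}}(B)+m-km\geqslant1$. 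The ``only if'' direction is exactly the contrapositive of the first assertion of Theorem 4.1 with $t=k-1$: if $\dim_{\mathbb{F}}(B)\leqslant(k-1)m$, then there exists $V\leqslant_{\mathbb{E}}Y$ with $\dim_{\mathbb{E}}(V)=1$ and $B\cap V=\{0\}$.

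For $(1)\Leftrightarrow(2)$ I would invoke Proposition 2.1, which applies since $0\in A$: the subspace $A$ is a cutting $r$-blocking set of $Y$ precisely when, for every $W\leqslant_{\mathbb{E}}Y$ with $\dim_{\mathbb{E}}(W)=k-r-1$ and every $1$-dimensional $I\leqslant_{\mathbb{E}}Y$, one has $(A+W)\cap I\neq\{0\}$. For a fixed $W$ this is exactly the statement that the $\mathbb{F}$-subspace $A+W$ meets every $\mathbb{E}$-line of $Y$, so the sub-lemma (applied to $B=A+W$ in the $k$-dimensional space $Y$) converts it into $\dim_{\mathbb{F}}(A+W)\geqslant(k-1)m+1$. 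Quantifying over all admissible $W$ yields statement (2) verbatim.

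For $(2)\Leftrightarrow(3)$ I would pass to intersections via the identity $\dim_{\mathbb{F}}(A+W)=\dim_{\mathbb{F}}(A)+(k-r-1)m-\dim_{\mathbb{F}}(A\cap W)$, under which the bound $\dim_{\mathbb{F}}(A+W)\geqslant(k-1)m+1$ becomes $\dim_{\mathbb{F}}(A\cap W)\leqslant\dim_{\mathbb{F}}(A)-mr-1$, i.e. the intersection condition defining $(k-r-1,\dim_{\mathbb{F}}(A)-mr-1)$-evasiveness. What remains is to reconcile the spanning requirement $\langle A\rangle_{\mathbb{E}}=Y$ built into Definition 2.2. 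This is forced by (2): if $\langle A\rangle_{\mathbb{E}}\subsetneq Y$, then $A$ lies in some $\mathbb{E}$-hyperplane $H$ of $\mathbb{E}$-dimension $k-1$, and choosing any $W\leqslant_{\mathbb{E}}H$ with $\dim_{\mathbb{E}}(W)=k-r-1$ gives $A+W\subseteq H$, so $\dim_{\mathbb{F}}(A+W)\leqslant(k-1)m$, contradicting (2). Conversely (3) supplies the evasive inequality, which returns (2) through the same identity, so the loop closes.

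The only nontrivial ingredient is the ``only if'' half of the sub-lemma, namely producing an $\mathbb{E}$-line disjoint from a prescribed $\mathbb{F}$-subspace of dimension at most $(k-1)m$. Over a finite $\mathbb{E}$ this would be a counting argument, but since $\mathbb{E}$ may be infinite here, I route it through Theorem 4.1 (which ultimately rests on Lemma 2.2). I therefore expect this single appeal to Theorem 4.1 to be the main obstacle; the rest is Grassmann bookkeeping and the reformulation of cutting $r$-blocking sets from Proposition 2.1.
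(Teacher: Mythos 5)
Your proposal is correct and follows essentially the same route as the paper: both establish $(1)\Leftrightarrow(2)$ by combining Proposition 2.1 with Theorem 4.1(1) in one direction and a Grassmann dimension count ($\dim_{\mathbb{F}}(A+W)+\dim_{\mathbb{F}}(I)>\dim_{\mathbb{F}}(Y)$ forces $(A+W)\cap I\neq\{0\}$) in the other, and both reduce the equivalence with (3) to the identity $\dim_{\mathbb{F}}(A+W)=\dim_{\mathbb{F}}(A)+m(k-r-1)-\dim_{\mathbb{F}}(A\cap W)$. The only cosmetic difference is that you derive the spanning requirement $\langle A\rangle_{\mathbb{E}}=Y$ directly from (2) via a hyperplane argument, whereas the paper obtains it from (1) and accordingly proves $((1)\wedge(2))\Longrightarrow(3)$ instead of $(2)\Longrightarrow(3)$; the logical loop closes either way.
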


\begin{proof}
Since (1) implies that $\langle A\rangle_{\mathbb{E}}=Y$, with Definition 2.3 and some straightforward computation, we deudce that $(3)\Longrightarrow(2)$ and $((1)\wedge(2))\Longrightarrow(3)$. Therefore it suffices to establish $(1)\Longleftrightarrow(2)$. First, suppose that (1) holds. Let $W\leqslant_{\mathbb{E}} Y$ with $\dim_{\mathbb{E}}(W)=k-r-1$. From Proposition 2.1, we deduce that $(A+W)\cap I\neq\{0\}$ for all $I\leqslant_{\mathbb{E}} Y$ with $\dim_{\mathbb{E}}(I)=1$. This, together with (1) of Theorem 4.1, implies that $\dim_{\mathbb{F}}(A+W)\geqslant(k-1)m+1$, which further establishes (2), as desired. Second, suppose that (2) holds. Let $W,I\leqslant_{\mathbb{E}} Y$ such that $\dim_{\mathbb{E}}(W)=k-r-1$, $\dim_{\mathbb{E}}(I)=1$. By (2), we have $\dim_{\mathbb{F}}(A+W)+\dim_{\mathbb{F}}(I)\geqslant\dim_{\mathbb{F}}(Y)+1$, which implies that $(A+W)\cap I\neq\{0\}$. It then follows from Proposition 2.1 that $A$ is a cutting $r$-blocking set of $Y$, as desired.
\end{proof}

\setlength{\parindent}{2em}
Next, we give a lower bound for the linearity index. When $\mathbb{E}$ is finite, the following result has been established in [3, Proposition 6.15] by using different methods.

\setlength{\parindent}{0em}
\begin{proposition}
Let $s\in\{0,1,\dots,k\}$, and let $B\leqslant_{\mathbb{F}}Y$ with $\dim_{\mathbb{F}}(B)\geqslant k(m-1)+s$. Then, it holds that $\mathcal{L}(B)\geqslant s$.
\end{proposition}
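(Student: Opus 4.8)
The plan is to exhibit, directly and without any counting, an $\mathbb{E}$-subspace of $B$ of $\mathbb{E}$-dimension at least $s$, so that the argument goes through verbatim for infinite $\mathbb{E}$. The device I would use is the $\mathbb{F}$-linear map
\[\Psi:Y\longrightarrow\mathrm{Hom}_{\mathbb{F}}(\mathbb{E},Y/B),\qquad \Psi(v)=\bigl(c\longmapsto cv+B\bigr),\]
where $Y/B$ is viewed as an $\mathbb{F}$-vector space. One checks immediately that $\Psi$ is $\mathbb{F}$-linear and that $\ker(\Psi)=\{v\in Y\mid \mathbb{E}v\subseteq B\}$, i.e. $\ker(\Psi)$ consists of exactly those $v$ whose entire $\mathbb{E}$-line lies in $B$ (equivalently $\ker(\Psi)=\bigcap_{a\in\mathbb{E}\setminus\{0\}}aB$).

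The heart of the proof is the observation that, although $\Psi$ is only $\mathbb{F}$-linear, its kernel is an honest $\mathbb{E}$-subspace of $Y$. Indeed, if $\mathbb{E}v\subseteq B$ and $a\in\mathbb{E}$, then for every $c\in\mathbb{E}$ we have $c(av)=(ca)v\in B$, so $av\in\ker(\Psi)$; and $\ker(\Psi)$ is already $\mathbb{F}$-closed, being a kernel. Taking $c=1$ shows $\ker(\Psi)\subseteq B$. Hence $\ker(\Psi)$ is an $\mathbb{E}$-subspace of $Y$ contained in $B$, and therefore $\mathcal{L}(B)\geqslant\dim_{\mathbb{E}}(\ker(\Psi))$; in fact $\ker(\Psi)$ is the largest $\mathbb{E}$-subspace contained in $B$.

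It then remains to bound $\dim_{\mathbb{E}}(\ker(\Psi))$ from below by a rank--nullity count. Since $\dim_{\mathbb{F}}(\mathbb{E})=m$ and $Y/B$ is finite dimensional over $\mathbb{F}$, the codomain satisfies $\dim_{\mathbb{F}}(\mathrm{Hom}_{\mathbb{F}}(\mathbb{E},Y/B))=m\,(km-\dim_{\mathbb{F}}(B))$, which stays finite even when $\mathbb{E}$ is infinite. Rank--nullity over $\mathbb{F}$ then gives
\[\dim_{\mathbb{F}}(\ker(\Psi))\geqslant km-m\,(km-\dim_{\mathbb{F}}(B)),\]
and, because $\ker(\Psi)$ is an $\mathbb{E}$-subspace, dividing by $m$ yields $\dim_{\mathbb{E}}(\ker(\Psi))\geqslant k-(km-\dim_{\mathbb{F}}(B))=\dim_{\mathbb{F}}(B)-k(m-1)$. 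Invoking the hypothesis $\dim_{\mathbb{F}}(B)\geqslant k(m-1)+s$ then gives $\mathcal{L}(B)\geqslant\dim_{\mathbb{E}}(\ker(\Psi))\geqslant s$, as desired.

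The step I expect to require the most care is the interface between the two field structures: rank--nullity applied to the $\mathbb{F}$-linear map $\Psi$ only produces an $\mathbb{F}$-dimension estimate, and it is precisely the fact that $\ker(\Psi)$ is stable under multiplication by $\mathbb{E}$ that simultaneously licenses dividing the $\mathbb{F}$-dimension by $m$ and guarantees that what we have found inside $B$ is a genuine $\mathbb{E}$-subspace rather than merely an $\mathbb{F}$-subspace. If one preferred to avoid this one-shot count, the same map yields a proof by induction on $s$: for $\mathbb{F}$-codimension at most $k-1$ the count produces a single line $\mathbb{E}v\subseteq B$, after which I would pass to the quotient $Y/\mathbb{E}v$ of $\mathbb{E}$-dimension $k-1$, in which the image of $B$ has $\mathbb{F}$-dimension at least $(k-1)(m-1)+(s-1)$, and apply the inductive hypothesis; the direct argument above simply performs all these reductions simultaneously.
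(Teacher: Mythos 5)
Your proof is correct, and it takes a genuinely different route from the paper's. The paper argues by duality: it fixes a non-degenerate symmetric $\mathbb{E}$-bilinear form $g$ on $Y$ and a non-zero $\mathbb{F}$-linear map $\tau:\mathbb{E}\longrightarrow\mathbb{F}$, so that $\tau\circ g$ is a non-degenerate $\mathbb{F}$-bilinear form, and then works with orthogonal complements $A^{\star}$: setting $J\triangleq\langle B^{\star}\rangle_{\mathbb{E}}$, it bounds $\dim_{\mathbb{E}}(J)\leqslant\dim_{\mathbb{F}}(B^{\star})=km-\dim_{\mathbb{F}}(B)\leqslant k-s$ and observes that $J^{\star}$ is an $\mathbb{E}$-subspace contained in $B$ of $\mathbb{E}$-dimension $k-\dim_{\mathbb{E}}(J)\geqslant s$. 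You instead construct the relevant subspace directly, as $\ker(\Psi)=\{v\in Y\mid\mathbb{E}v\subseteq B\}$, and get the dimension bound from rank--nullity applied to the $\mathbb{F}$-linear map $\Psi:Y\longrightarrow\Hom_{\mathbb{F}}(\mathbb{E},Y/B)$. In fact the two constructions produce the same object: any $\mathbb{E}$-subspace $V\subseteq B$ satisfies $B^{\star}\subseteq V^{\star}$, hence $\langle B^{\star}\rangle_{\mathbb{E}}\subseteq V^{\star}$ and $V\subseteq J^{\star}$, so the paper's $J^{\star}$ is also the largest $\mathbb{E}$-subspace inside $B$, i.e.\ your $\ker(\Psi)$. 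What your argument buys is economy of means: no choice of bilinear form or trace functional is needed, and the verification that the subspace found is $\mathbb{E}$-linear is a one-line computation rather than a property of orthogonal complements of $\mathbb{E}$-subspaces under $\tau\circ g$. What the paper's argument buys is that it exercises the standard trace-dual machinery of rank-metric coding theory, where the identity $\dim_{\mathbb{E}}(J^{\star})=k-\dim_{\mathbb{E}}(J)$ does the bookkeeping that your explicit count $\dim_{\mathbb{F}}(\ker(\Psi))\geqslant km-m(km-\dim_{\mathbb{F}}(B))$ performs by hand. Both arguments are characteristic-free and, as you note, valid for infinite $\mathbb{E}$ since everything in sight is finite-dimensional over $\mathbb{F}$; your closing inductive variant is also correct and matches the bound exactly.
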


\begin{proof}
Let $g:Y\times Y\longrightarrow \mathbb{E}$ be a non-degenerated symmetric $\mathbb{E}$-bilinear map, $\tau:\mathbb{E}\longrightarrow \mathbb{F}$ be a non-zero $\mathbb{F}$-linear map. Then, $\tau\circ g:Y\times Y\longrightarrow \mathbb{F}$ is a non-degenerated symmetric $\mathbb{F}$-bilinear map. For any $A\subseteq Y$, set $A^{\star}\triangleq\{v\in Y\mid \text{$(\tau\circ g)(u,v)=0$ for all $u\in A$}\}$. Now for $J\triangleq\langle B^{\star}\rangle_{\mathbb{E}}$, we have
\begin{equation}\dim_{\mathbb{E}}(J)\leqslant\dim_{\mathbb{F}}(B^{\star})=\dim_{\mathbb{F}}(Y)-\dim_{\mathbb{F}}(B)\leqslant km-(k(m-1)+s)=k-s.\end{equation}
By $J\leqslant_{\mathbb{E}}Y$, $B^{\star}\subseteq J$, we have $J^{\star}\leqslant_{\mathbb{E}}Y$, $J^{\star}\subseteq B$; moreover, it follows from (4.2) that $\dim_{\mathbb{E}}(J^{\star})=k-\dim_{\mathbb{E}}(J)\geqslant s$, which further establishes the desired result.
\end{proof}

\setlength{\parindent}{2em}
Now we give two corollaries of Theorem 4.2 and Proposition 4.2. The following result will be used frequently in Section 5 to establish lower bounds for the dimensions of cutting $r$-blocking sets.

\setlength{\parindent}{0em}
\begin{corollary}
Fix $r\in\{0,1,\dots,k-1\}$. Let $b\in\mathbb{N}$ with $b\leqslant km$. Suppose that for any $J\leqslant_{\mathbb{F}}Y$ such that $J$ is $(k-r-1,b-mr-1)$-evasive in $Y$, it holds that $\dim_{\mathbb{F}}(J)\leqslant b-1$. Then, for any $A\leqslant_{\mathbb{F}}Y$ such that $A$ is a cutting $r$-blocking set of $Y$, it hold that $\dim_{\mathbb{F}}(A)\geqslant b+1$.
\end{corollary}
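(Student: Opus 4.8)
The plan is to argue by contraposition: I would assume that some cutting $r$-blocking set $A$ of $Y$ has $\dim_{\mathbb{F}}(A)\leqslant b$ and derive a contradiction with the hypothesis. The entry point is Theorem 4.2, whose equivalence $(1)\Longleftrightarrow(3)$ converts the blocking condition on $A$ into the assertion that $A$ is $(k-r-1,\dim_{\mathbb{F}}(A)-mr-1)$-evasive in $Y$; in particular $\langle A\rangle_{\mathbb{E}}=Y$.

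The key point is that applying the hypothesis directly to $A$ would only give $\dim_{\mathbb{F}}(A)\leqslant b-1$, which does not contradict $\dim_{\mathbb{F}}(A)\leqslant b$; to force a contradiction I need a $(k-r-1,b-mr-1)$-evasive subspace of dimension exactly $b$. I would therefore enlarge $A$: since $\dim_{\mathbb{F}}(A)\leqslant b\leqslant km=\dim_{\mathbb{F}}(Y)$, I can choose an $\mathbb{F}$-subspace $A'$ with $A\subseteq A'\leqslant_{\mathbb{F}}Y$ and $\dim_{\mathbb{F}}(A')=b$. The crucial estimate is that enlarging an $\mathbb{F}$-subspace raises each intersection dimension by at most the number of added coordinates: for any $M\leqslant_{\mathbb{E}}Y$ with $\dim_{\mathbb{E}}(M)=k-r-1$, the natural injection of $(A'\cap M)/(A\cap M)$ into $A'/A$ gives
$$\dim_{\mathbb{F}}(A'\cap M)\leqslant\dim_{\mathbb{F}}(A\cap M)+\bigl(\dim_{\mathbb{F}}(A')-\dim_{\mathbb{F}}(A)\bigr)\leqslant\bigl(\dim_{\mathbb{F}}(A)-mr-1\bigr)+\bigl(b-\dim_{\mathbb{F}}(A)\bigr)=b-mr-1.$$
Together with $\langle A'\rangle_{\mathbb{E}}\supseteq\langle A\rangle_{\mathbb{E}}=Y$, this shows that $A'$ is $(k-r-1,b-mr-1)$-evasive in $Y$.

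It then remains to apply the hypothesis with $J=A'$, which yields $\dim_{\mathbb{F}}(A')\leqslant b-1$ and contradicts $\dim_{\mathbb{F}}(A')=b$. Hence no cutting $r$-blocking set of $Y$ can have $\mathbb{F}$-dimension at most $b$, so every such $A$ satisfies $\dim_{\mathbb{F}}(A)\geqslant b+1$. I do not expect a serious obstacle here; the only delicate point is the displayed monotonicity estimate, where the two contributions must cancel the unknown quantity $\dim_{\mathbb{F}}(A)$ exactly, so that the bound lands at $b-mr-1$ independently of the actual value of $\dim_{\mathbb{F}}(A)$.
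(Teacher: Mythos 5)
Your proof is correct and follows essentially the same route as the paper: argue by contradiction, enlarge $A$ to an $\mathbb{F}$-subspace of dimension exactly $b$ (possible since $b\leqslant km$), and apply the hypothesis to the enlarged subspace to contradict its dimension being $b$. The one difference is the order of operations: the paper first observes that any superset $B$ of $A$ with $\dim_{\mathbb{F}}(B)=b$ is again a cutting $r$-blocking set of $Y$ (immediate from Definition 2.4, since $\langle B\cap V\rangle_{\mathbb{E}}\supseteq\langle A\cap V\rangle_{\mathbb{E}}=V$), and only then invokes Theorem 4.2 on $B$ to obtain $(k-r-1,b-mr-1)$-evasiveness; you instead invoke Theorem 4.2 on $A$ itself and then transfer evasiveness to the enlargement $A'$ via the quotient-injection estimate $\dim_{\mathbb{F}}(A'\cap M)\leqslant\dim_{\mathbb{F}}(A\cap M)+\dim_{\mathbb{F}}(A')-\dim_{\mathbb{F}}(A)$, arranged so that the unknown $\dim_{\mathbb{F}}(A)$ cancels. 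Both transfers are valid; the paper's is marginally shorter because monotonicity of the cutting property needs no computation, while yours has the minor virtue of making explicit the general fact that evasiveness degrades by at most the codimension of the enlargement, keeping the whole argument inside the evasive-subspace framework.
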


\begin{proof}
By way of contradiction, suppose that $A\leqslant_{\mathbb{F}}Y$ is a cutting $r$-blocking set of $Y$ with $\dim_{\mathbb{F}}(A)\leqslant b$. Since $b\leqslant km$, we can choose $B\leqslant_{\mathbb{F}}Y$ such that $A\subseteq B$, $\dim_{\mathbb{F}}(B)=b$. Then, $B$ is a cutting $r$-blocking set of $Y$, which, together with Theorem 4.2, implies that $B$ is $(k-r-1,b-mr-1)$-evasive in $Y$, which further implies that $\dim_{\mathbb{F}}(B)\leqslant b-1$, a contradiction, as desired.
\end{proof}

\setlength{\parindent}{0em}
\begin{corollary}
{\bf{(1)}}\,\,Let $A\leqslant_{\mathbb{F}}Y$. Then, $A$ is a cutting $(k-1)$-blocking set of $Y$ if and only if $\dim_{\mathbb{F}}(A)\geqslant(k-1)m+1$, if and only if $A$ is a cutting $r$-blocking set of $Y$ for all $r\in\{0,1,\dots,k-1\}$.

{\bf{(2)}}\,\,Let $A\leqslant_{\mathbb{F}}Y$ with $\dim_{\mathbb{F}}(A)=(k-1)m$, and let $r\in\{0,1,\dots,k-1\}$. Then, $A$ is a cutting $r$-blocking set of $Y$ if and only if $\mathcal{L}(A)\leqslant k-r-2$.

{\bf{(3)}}\,\,Let $r\in\{0,1,\dots,k-1\}$, and let $A\leqslant_{\mathbb{F}}Y$ be a cutting $r$-blocking set of $Y$. Then, for $s\triangleq\min\{k-r-1,\mathcal{L}(A)\}$, it holds that $\dim_{\mathbb{F}}(A)\geqslant(m-1)(r+s)+k$.

{\bf{(4)}}\,\,Suppose that $k\geqslant m$. Let $r\in\mathbb{N}$ such that $m-1\leqslant r\leqslant k-1$, and let $A\leqslant_{\mathbb{F}}Y$ be a cutting $r$-blocking set of $Y$. Then, it holds that $\dim_{\mathbb{F}}(A)\geqslant (k-1)m+1$.
\end{corollary}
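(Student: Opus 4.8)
The plan is to derive all four parts from Theorem 4.2, Proposition 4.2 and Lemma 2.3, treating the statements in order since the later parts invoke the earlier ones. For Part (1) I would specialize Theorem 4.2 to $r=k-1$: the only $\mathbb{E}$-subspace $W$ with $\dim_{\mathbb{E}}(W)=k-r-1=0$ is $W=\{0\}$, so condition (2) of Theorem 4.2 reads $\dim_{\mathbb{F}}(A)\geqslant(k-1)m+1$, giving at once the equivalence ``$A$ is a cutting $(k-1)$-blocking set $\Longleftrightarrow\dim_{\mathbb{F}}(A)\geqslant(k-1)m+1$''. For the third equivalent condition, if $\dim_{\mathbb{F}}(A)\geqslant(k-1)m+1$ then for every $r\in\{0,\dots,k-1\}$ and every $W$ with $\dim_{\mathbb{E}}(W)=k-r-1$ one has $\dim_{\mathbb{F}}(A+W)\geqslant\dim_{\mathbb{F}}(A)\geqslant(k-1)m+1$, so $A$ is a cutting $r$-blocking set by Theorem 4.2; conversely, being a cutting $r$-blocking set for all $r$ includes $r=k-1$, closing the cycle.

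For Part (2) I would use the equivalence $(1)\Longleftrightarrow(3)$ of Theorem 4.2. Since $\dim_{\mathbb{F}}(A)=(k-1)m$, the relevant parameter is $\dim_{\mathbb{F}}(A)-mr-1=m(k-r-1)-1$, so $A$ is a cutting $r$-blocking set iff $A$ is $(k-r-1,\,m(k-r-1)-1)$-evasive. The key observation is that an $\mathbb{E}$-subspace $M$ of $\mathbb{E}$-dimension $k-r-1$ satisfies $\dim_{\mathbb{F}}(A\cap M)=m(k-r-1)$ precisely when $M\subseteq A$; hence the evasive bound $\dim_{\mathbb{F}}(A\cap M)\leqslant m(k-r-1)-1$ rules out exactly those $M\subseteq A$, which is the condition $\mathcal{L}(A)\leqslant k-r-2$. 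The one extra point is to verify the defining requirement $\langle A\rangle_{\mathbb{E}}=Y$ of evasiveness under $\mathcal{L}(A)\leqslant k-r-2$: if $\langle A\rangle_{\mathbb{E}}$ were proper, comparing $\mathbb{F}$-dimensions forces $A=\langle A\rangle_{\mathbb{E}}$ to be a hyperplane, whence $\mathcal{L}(A)=k-1>k-r-2$, a contradiction.

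For Part (3) the idea is to pass to a quotient and apply the monotonicity inequality of Lemma 2.3(1). Writing $d=\dim_{\mathbb{F}}(A)$, Theorem 4.2 gives that $A$ is $(k-r-1,\,d-mr-1)$-evasive, and Lemma 2.3(1) already yields $d-mr-1\geqslant k-r-1\geqslant0$. Choosing an $\mathbb{E}$-subspace $V\subseteq A$ with $\dim_{\mathbb{E}}(V)=s=\min\{k-r-1,\mathcal{L}(A)\}$ (possible as $s\leqslant\mathcal{L}(A)$) and noting $\dim_{\mathbb{F}}(A\cap V)=ms$, Lemma 2.3(3) shows that $(A+V)/V$ is $(k-r-1-s,\,d-mr-1-ms)$-evasive in $Y/V$. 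Applying Lemma 2.3(1) to this quotient gives $k-r-1-s\leqslant d-mr-1-ms$, which rearranges to $d\geqslant(m-1)(r+s)+k$; this single computation covers both $s=\mathcal{L}(A)$ and $s=k-r-1$ uniformly.

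For Part (4), which I expect to be the main obstacle, I would argue by contradiction, combining Part (2) with Proposition 4.2. Suppose $\dim_{\mathbb{F}}(A)\leqslant(k-1)m$; enlarging $A$ to an $\mathbb{F}$-subspace $B$ with $A\subseteq B$ and $\dim_{\mathbb{F}}(B)=(k-1)m$ preserves the cutting $r$-blocking property, since $\langle B\cap V\rangle_{\mathbb{E}}\supseteq\langle A\cap V\rangle_{\mathbb{E}}=V$ for every $(k-r)$-dimensional $V$. Part (2) then forces $\mathcal{L}(B)\leqslant k-r-2$. On the other hand, because $k\geqslant m$ one has $s=k-m\in\{0,\dots,k\}$ and $\dim_{\mathbb{F}}(B)=(k-1)m=k(m-1)+(k-m)$, so Proposition 4.2 yields $\mathcal{L}(B)\geqslant k-m$. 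Combining the two bounds gives $k-m\leqslant k-r-2$, i.e.\ $r\leqslant m-2$, contradicting the hypothesis $r\geqslant m-1$; hence $\dim_{\mathbb{F}}(A)\geqslant(k-1)m+1$. The delicate points are checking that the enlargement preserves the blocking property and recognizing that the two hypotheses $k\geqslant m$ and $r\geqslant m-1$ are exactly what make the $s=k-m$ instance of Proposition 4.2 admissible and what produce the final contradiction.
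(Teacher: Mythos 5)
Your proposal is correct, and parts (1), (2) and (4) follow essentially the paper's own argument: (1) is read off from the equivalence $(1)\Longleftrightarrow(2)$ of Theorem 4.2, (2) is Theorem 4.2 plus the observation that a $(k-r-1)$-dimensional $\mathbb{E}$-subspace violates the bound exactly when it lies inside $A$, and (4) is the identical contradiction argument: enlarge $A$ to $B$ with $\dim_{\mathbb{F}}(B)=(k-1)m$, apply part (2) to get $\mathcal{L}(B)\leqslant k-r-2$, and contradict the linearity-index lower bound $\mathcal{L}(B)\geqslant k-m$ (the paper cites this as ``Proposition 4.3'', which is an off-by-one typo for the proposition you correctly invoke). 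Your extra verifications --- the spanning condition $\langle A\rangle_{\mathbb{E}}=Y$ in (2) via the hyperplane argument, and the preservation of the cutting property under enlargement in (4) --- are details the paper leaves implicit, and both are right.

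Part (3) is where you genuinely diverge. The paper argues inside $Y$: it takes $J\leqslant_{\mathbb{E}}Y$ with $J\subseteq A$, $\dim_{\mathbb{E}}(J)=s$, extends it by a set $S\subseteq A$ of $k-r-1-s$ vectors that are $\mathbb{E}$-independent modulo $J$ (possible since $\langle A\rangle_{\mathbb{E}}=Y$), applies the evasiveness bound of Theorem 4.2 to $W=\langle S\rangle_{\mathbb{E}}+J$, and lower-bounds $\dim_{\mathbb{F}}(A\cap W)$ by $\dim_{\mathbb{F}}(J+\langle S\rangle_{\mathbb{F}})=(m-1)s+k-r-1$. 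You instead quotient by $V\subseteq A$ and compose Lemma 2.3(3) with Lemma 2.3(1), getting $k-r-1-s\leqslant\dim_{\mathbb{F}}(A)-mr-1-ms$ directly. The two computations produce the same inequality; your route is more modular (the basis-extension work and the spanning hypothesis are absorbed into the already-proved quotient lemma, and you correctly note that Lemma 2.3(1) is what guarantees the nonnegativity $\dim_{\mathbb{F}}(A)-mr-1\geqslant k-r-1$ needed to apply Lemma 2.3(3)), while the paper's is self-contained and makes the extremal configuration explicit. Both are complete proofs.
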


\begin{proof}
{\bf{(1)}}\,\,This immediately follows from ``$(1)\Longleftrightarrow(2)$'' of Theorem 4.2.

{\bf{(2)}}\,\,From Theorem 4.2, we deduce that $A$ is a cutting $r$-blocking set of $Y$ if and only if $W\nsubseteq A$ for all $W\leqslant_{\mathbb{E}} Y$ with $\dim_{\mathbb{E}}(W)=k-r-1$, if and only if $\mathcal{L}(A)\leqslant k-r-2$, as desired.

{\bf{(3)}}\,\,Let $J\leqslant_{\mathbb{E}}Y$ such that $\dim_{\mathbb{E}}(J)=s$, $J\subseteq A$. Since $\langle A\rangle_{\mathbb{E}}=Y$, $s\leqslant k-r-1\leqslant k$, we can choose $S\subseteq A$ such that $|S|=k-r-1-s$, $S$ is linear independent over $\mathbb{E}$, and $\langle S\rangle_{\mathbb{E}}\cap J=\{0\}$. Then, for $W\triangleq\langle S\rangle_{\mathbb{E}}+J$, we have $\dim_{\mathbb{E}}(W)=|S|+s=k-r-1$, which, together with Theorem 4.2, implies that $\dim_{\mathbb{F}}(A\cap W)\leqslant \dim_{\mathbb{F}}(A)-mr-1$. This, together with $J+\langle S\rangle_{\mathbb{F}}\subseteq A\cap W$, further implies that $\dim_{\mathbb{F}}(J+\langle S\rangle_{\mathbb{F}})\leqslant \dim_{\mathbb{F}}(A)-mr-1$. Noticing that $\dim_{\mathbb{F}}(J+\langle S\rangle_{\mathbb{F}})=ms+|S|=(m-1)s+k-r-1$, the desired result follows from some straightforward verification which we omit.

{\bf{(4)}}\,\,By way of contradiction, suppose that $\dim_{\mathbb{F}}(A)\leqslant (k-1)m$. Then, we can choose $B\leqslant_{\mathbb{F}}Y$ such that $\dim_{\mathbb{F}}(B)=(k-1)m$, $A\subseteq B$. Noticing that $B$ is a cutting $r$-blocking set of $Y$, from (2), we deudce that $\mathcal{L}(B)\leqslant k-r-2\leqslant k-m-1$. However, Proposition 4.3 implies that $\mathcal{L}(B)\geqslant k-m$, a contradiction, as desired.
\end{proof}

\subsection{Necessary and sufficient conditions for $r$-minimal codes}
\setlength{\parindent}{2em}
Throughout this subsection, let $(n,k)\in\mathbb{N}^{2}$ with $n\geqslant k\geqslant1$. We will give several necessary and sufficient conditions for an $[n,k]$ code to be $r$-minimal$^1$. The following theorem is the first main result of this subsection, in which we characterize rank minimal subcodes and $r$-minimal codes in terms of generator matrices and cutting $r$-blocking sets.

\renewcommand{\thefootnote}{\fnsymbol{footnote}}

\footnotetext{\hspace*{-6mm} \begin{tabular}{@{}r@{}p{16cm}@{}}
$^1$ When it is clear from the context, we will simply say ``$r$-minimal'' and omit the term ``with respect to rank metric''.
\end{tabular}}

\setlength{\parindent}{0em}
\begin{theorem}
Let $C\leqslant_{\mathbb{E}}\mathbb{E}^{n}$ be an $[n,k]$ code, $G\in\Mat_{k,n}(\mathbb{E})$ be a generator matrix of $C$, and let $U=\langle\col(G)\rangle_{\mathbb{F}}\leqslant_{\mathbb{F}}\mathbb{E}^{[k]}$. Then, for $B\leqslant_{\mathbb{E}}\mathbb{E}^{k}$ and $D\triangleq\{\gamma G\mid\gamma\in B\}$, $D$ is rank minimal in $C$ if and only if $\langle{B^{\ddagger}}\cap U\rangle_{\mathbb{E}}=B^{\ddagger}$. Moreover, for $r\in\{0,1,\dots,k\}$, $C$ is $r$-minimal if and only if $U$ is a cutting $r$-blocking set of $\mathbb{E}^{[k]}$.
\end{theorem}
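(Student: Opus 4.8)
Theorem 4.3 — plan for the proof.

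The plan is to reduce the whole statement to the dual-side dictionary furnished by Lemma 2.1(2). Since $C$ is an $[n,k]$ code, its generator matrix $G$ has full row rank $k$, so $\gamma\mapsto\gamma G$ is an $\mathbb{E}$-isomorphism $\mathbb{E}^{k}\to C$; hence every subcode of $C$ is uniquely of the form $Q=\{\gamma G\mid\gamma\in P\}$ for some $P\leqslant_{\mathbb{E}}\mathbb{E}^{k}$, and $P\mapsto P^{\ddagger}$ is an inclusion-reversing bijection from subspaces of $\mathbb{E}^{k}$ onto subspaces of $\mathbb{E}^{[k]}$ with $\dim_{\mathbb{E}}(P^{\ddagger})=k-\dim_{\mathbb{E}}(P)$. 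I will lean on two facts: first, $\langle U\rangle_{\mathbb{E}}=\mathbb{E}^{[k]}$, because $\col(G)$ spans $\mathbb{E}^{[k]}$ over $\mathbb{E}$; second, by Lemma 2.1(2) one has $\chi(Q)\subseteq\chi(D)\Longleftrightarrow B^{\ddagger}\cap U\subseteq P^{\ddagger}$, which after intersecting the right-hand side with $U$ reads $\chi(Q)\subseteq\chi(D)\Longleftrightarrow B^{\ddagger}\cap U\subseteq P^{\ddagger}\cap U$, and therefore $\chi(Q)=\chi(D)\Longleftrightarrow B^{\ddagger}\cap U=P^{\ddagger}\cap U$.

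For the first assertion, the ``if'' direction is a short dimension count. Assuming $\langle B^{\ddagger}\cap U\rangle_{\mathbb{E}}=B^{\ddagger}$, any competitor $Q$ with $\dim_{\mathbb{E}}(Q)=\dim_{\mathbb{E}}(D)$ and $\chi(Q)\subseteq\chi(D)$ gives $B^{\ddagger}\cap U\subseteq P^{\ddagger}$; taking $\mathbb{E}$-spans and using the hypothesis yields $B^{\ddagger}\subseteq P^{\ddagger}$, and since $\dim_{\mathbb{E}}(P^{\ddagger})=\dim_{\mathbb{E}}(B^{\ddagger})$ we conclude $P^{\ddagger}=B^{\ddagger}$, whence $\chi(Q)=\chi(D)$. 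Thus $D$ is rank minimal in $C$.

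The ``only if'' direction is the step I expect to be the main obstacle, and I would argue by contraposition. Suppose $W\triangleq\langle B^{\ddagger}\cap U\rangle_{\mathbb{E}}\subsetneqq B^{\ddagger}$. Since $\langle U\rangle_{\mathbb{E}}=\mathbb{E}^{[k]}$ while $B^{\ddagger}\neq\mathbb{E}^{[k]}$ (the strict inclusion $W\subsetneqq B^{\ddagger}$ precludes $B^{\ddagger}=\mathbb{E}^{[k]}$), we have $U\nsubseteq B^{\ddagger}$, so I may pick $u\in U\setminus B^{\ddagger}$. Then $u\notin W$, so $V_{0}\triangleq W+\langle u\rangle_{\mathbb{E}}$ has dimension $\dim_{\mathbb{E}}(W)+1\leqslant\dim_{\mathbb{E}}(B^{\ddagger})$ and satisfies $B^{\ddagger}\cap U\subsetneqq V_{0}\cap U$, since $u$ lies in $V_{0}\cap U$ but not in $B^{\ddagger}\cap U$. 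Extending $V_{0}$ arbitrarily to an $\mathbb{E}$-subspace $V$ of $\mathbb{E}^{[k]}$ with $\dim_{\mathbb{E}}(V)=\dim_{\mathbb{E}}(B^{\ddagger})$ preserves $B^{\ddagger}\cap U\subsetneqq V\cap U$; choosing $P$ with $P^{\ddagger}=V$, the subcode $Q=\{\gamma G\mid\gamma\in P\}$ has $\dim_{\mathbb{E}}(Q)=\dim_{\mathbb{E}}(D)$. By the dictionary above, $B^{\ddagger}\cap U\subseteq P^{\ddagger}\cap U$ forces $\chi(Q)\subseteq\chi(D)$ while $B^{\ddagger}\cap U\neq P^{\ddagger}\cap U$ forces $\chi(Q)\neq\chi(D)$, exhibiting a witness that $D$ is not rank minimal in $C$.

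Finally, the second assertion is immediate from the first. Here $C$ is $r$-minimal iff every $r$-dimensional $D=\{\gamma G\mid\gamma\in B\}$ (with $\dim_{\mathbb{E}}(B)=r$) is rank minimal, i.e. iff $\langle B^{\ddagger}\cap U\rangle_{\mathbb{E}}=B^{\ddagger}$ for all such $B$. As $B$ ranges over the $r$-dimensional subspaces of $\mathbb{E}^{k}$, the subspace $V=B^{\ddagger}$ ranges over all $(k-r)$-dimensional $\mathbb{E}$-subspaces of $\mathbb{E}^{[k]}$, so the condition becomes ``$\langle V\cap U\rangle_{\mathbb{E}}=V$ for every $V\leqslant_{\mathbb{E}}\mathbb{E}^{[k]}$ with $\dim_{\mathbb{E}}(V)=k-r$'', which is exactly Definition 2.4 of $U$ being a cutting $r$-blocking set of $\mathbb{E}^{[k]}$. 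I would close by checking that the boundary cases $r=0$ and $r=k$ degenerate to the trivial conditions $\langle U\rangle_{\mathbb{E}}=\mathbb{E}^{[k]}$ and $\langle U\cap\{0\}\rangle_{\mathbb{E}}=\{0\}$, both of which hold.
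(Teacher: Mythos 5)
Your proof is correct, and at its core it takes the same route as the paper: both rest on Lemma 2.1(2) as the dictionary $\chi(Q)\subseteq\chi(D)\Longleftrightarrow B^{\ddagger}\cap U\subseteq P^{\ddagger}$, both prove the ``only if'' direction by contraposition, and both deduce the second assertion from the first via the inclusion-reversing bijection $B\mapsto B^{\ddagger}$. The one genuine difference is the witness in the contrapositive step. The paper takes an arbitrary line $I\leqslant_{\mathbb{E}}\mathbb{E}^{[k]}$ with $I\nsubseteq B^{\ddagger}$ and any $W\supseteq\langle B^{\ddagger}\cap U\rangle_{\mathbb{E}}+I$ with $\dim_{\mathbb{E}}(W)=\dim_{\mathbb{E}}(B^{\ddagger})$; since $I$ need not meet $U$, this only yields a competitor $Q\neq D$ with $\chi(Q)\subseteq\chi(D)$, possibly with $\chi(Q)=\chi(D)$, so the paper must appeal to Theorem 3.2 (the equivalence of rank minimality with its condition (7), that any competitor with dominated support equals $D$) to conclude non-minimality. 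You instead choose the new vector $u$ inside $U\setminus B^{\ddagger}$, which guarantees $B^{\ddagger}\cap U\subsetneqq P^{\ddagger}\cap U$ and hence, applying the dictionary in both directions, $\chi(Q)\subsetneqq\chi(D)$; this contradicts Definition 2.1 directly. Your variant is therefore self-contained within Section 2 (Lemma 2.1 plus the definitions) and produces the stronger witness of strictly smaller rank support, whereas the paper's version accepts a weaker witness in exchange for leaning on the general lattice-theoretic characterization of Section 3. Both are valid; your boundary checks for $r=0$ and $r=k$ are harmless but not needed, since the correspondence $B\mapsto B^{\ddagger}$ already covers all $r\in\{0,1,\dots,k\}$.
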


\begin{proof}
First, suppose that $\langle{B^{\ddagger}}\cap U\rangle_{\mathbb{E}}\subsetneqq B^{\ddagger}$. Then, from $\langle U\rangle_{\mathbb{E}}=\mathbb{E}^{[k]}$, we deduce that $B^{\ddagger}\subsetneqq\mathbb{E}^{[k]}$. Hence we can choose $I\leqslant_{\mathbb{E}}\mathbb{E}^{[k]}$ such that $\dim_{\mathbb{E}}(I)=1$, $I\nsubseteq B^{\ddagger}$. Moreover, we can choose $W\leqslant_{\mathbb{E}}\mathbb{E}^{[k]}$ such that $\dim_{\mathbb{E}}(W)=\dim_{\mathbb{E}}(B^{\ddagger})$, $\langle{B^{\ddagger}}\cap U\rangle_{\mathbb{E}}+I\subseteq W$. Let $P\leqslant_{\mathbb{E}}\mathbb{E}^{k}$ such that $W=P^{\ddagger}$, and let $Q=\{\gamma G\mid\gamma\in P\}$. Then, we have $\dim_{\mathbb{E}}(Q)=\dim_{\mathbb{E}}(P)=\dim_{\mathbb{E}}(B)=\dim_{\mathbb{E}}(D)$; moreover, by Lemma 2.1, we have $\chi(Q)\subseteq\chi(D)$. In addition, by $B^{\ddagger}\neq W=P^{\ddagger}$, we have $B\neq P$, which implies that $D\neq Q$. It then follows from Theorem 3.2 that $D$ is not rank minimal in $C$, as desired.

\hspace*{4mm}\,\,Second, suppose that $\langle{B^{\ddagger}}\cap U\rangle_{\mathbb{E}}=B^{\ddagger}$. Consider $Q\leqslant_{\mathbb{E}}C$ such that $\dim_{\mathbb{E}}(Q)=\dim_{\mathbb{E}}(D)$, $\chi(Q)\subseteq\chi(D)$. Let $P\leqslant_{\mathbb{E}}\mathbb{E}^{k}$ such that $Q=\{\gamma G\mid\gamma\in P\}$. By Lemma 2.1, we have ${B^{\ddagger}}\cap U\subseteq P^{\ddagger}$, which implies that $B^{\ddagger}=\langle{B^{\ddagger}}\cap U\rangle_{\mathbb{E}}\subseteq P^{\ddagger}$. It follows that $P\subseteq B$, which implies that $Q\subseteq D$. This, together with $\dim_{\mathbb{E}}(Q)=\dim_{\mathbb{E}}(D)$, further implies that $D=Q$, as desired.

\hspace*{4mm}\,\,Finally, from the first assertion, we deduce that $C$ is $r$-minimal if and only if $\langle{A^{\ddagger}}\cap U\rangle_{\mathbb{E}}=A^{\ddagger}$ for all $A\leqslant_{\mathbb{E}}\mathbb{E}^{k}$ with $\dim_{\mathbb{E}}(A)=r$, if and only if $\langle W\cap U\rangle_{\mathbb{E}}=W$ for all $W\leqslant_{\mathbb{E}}\mathbb{E}^{[k]}$ with $\dim_{\mathbb{E}}(W)=k-r$, if and only if $U$ is a cutting $r$-blocking set of $\mathbb{E}^{[k]}$, as desired.
\end{proof}

\setlength{\parindent}{2em}
Theorem 4.3 naturally extends to sum-rank metric, as detailed in the following remark.

\begin{remark}
Following the notations in Example 3.3, let $C\leqslant_{\mathbb{E}}\prod_{i=1}^{t}\mathbb{E}^{w_{i}}$ with $\dim_{\mathbb{E}}(C)=k\geqslant1$, and let $(H_1,\dots,H_t)\in\prod_{s=1}^{t}\Mat_{k,w_s}(\mathbb{E})$ such that $C=\{(\gamma H_1,\dots,\gamma H_t)\mid\gamma\in \mathbb{E}^{k}\}$; moreover, for any $s\in\{1,\dots,t\}$, let $U_{s}\triangleq\langle\col(H_s)\rangle_{\mathbb{F}}$. Then, for $B\leqslant_{\mathbb{E}}\mathbb{E}^{k}$ and $D\triangleq\{(\gamma H_1,\dots,\gamma H_t)\mid\gamma\in B\}$, with the help of Lemma 2.1 and Theorem 3.2, one can show that $D$ is sum-rank minimal in $C$ if and only if $\langle{B^{\ddagger}}\cap (\bigcup_{s=1}^{t}U_s)\rangle_{\mathbb{E}}=B^{\ddagger}$; moreover, for $r\in\{0,1,\dots,k\}$, $C$ is $r$-minimal with respect to sum-rank metric if and only if $\bigcup_{s=1}^{t}U_s$ is a cutting $r$-blocking set of $\mathbb{E}^{[k]}$. This result recovers Theorem 4.3 when $t=1$, and recovers [44, Corollary 9.24] when $r=1$ and $\mathbb{E}$ is finite.
\end{remark}

\setlength{\parindent}{2em}
Now we state and prove the second main result of this subsection. In the following, we characterize $r$-minimal codes in terms of generalized rank weights of the codes via two proofs, where the first one relies on Theorems 4.2 and 4.3, and the second one relies on Theorems 3.3 and 4.1 and does not involve cutting $r$-blocking sets.

\setlength{\parindent}{0em}
\begin{theorem}
Let $C\leqslant_{\mathbb{E}}\mathbb{E}^{n}$ be an $[n,k]$ code, and let $r\in\{0,1,\dots,k-1\}$. Then, $C$ is $r$-minimal if and only if $\mathbf{d}_{r+1}(C)\geqslant mr+1$.
\end{theorem}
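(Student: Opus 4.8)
The plan is to establish the equivalence by two independent routes, as announced before the statement; I find the self-contained route cleaner, so I would lead with it. \emph{Route A (via Theorems 3.3 and 4.1, no blocking sets).} By Example 3.1, rank metric is the instance of the general framework with $\lambda=\chi$ and $\rk=\dim_{\mathbb{F}}$, so $C$ being $r$-minimal is exactly $C$ being $(\sigma,r)$-minimal. I would then invoke the equivalence $(1)\Longleftrightarrow(4)$ of Theorem 3.3: $C$ is $r$-minimal if and only if for every $W\leqslant_{\mathbb{E}}C$ with $\dim_{\mathbb{E}}(W)=r+1$ and every $D\leqslant_{\mathbb{E}}W$ with $\dim_{\mathbb{E}}(D)=r$ one has $\chi(D)\neq\chi(W)$. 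Since $D\subseteq W$ forces $\chi(D)\subseteq\chi(W)$, the condition $\chi(D)\neq\chi(W)$ reads $\wt(D)<\wt(W)$, so the requirement is equivalent to $\max\{\wt(D)\mid D\leqslant_{\mathbb{E}}W,\dim_{\mathbb{E}}(D)=r\}<\wt(W)$ for every such $W$. Applying part (2) of Theorem 4.1 to the $[n,r+1]$ code $W$ with $s=r$, the left-hand maximum equals $\min\{mr,\wt(W)\}$, and $\min\{mr,\wt(W)\}<\wt(W)$ holds precisely when $\wt(W)>mr$, i.e. $\wt(W)\geqslant mr+1$. Quantifying over all $(r+1)$-dimensional $W\leqslant_{\mathbb{E}}C$ gives exactly $\mathbf{d}_{r+1}(C)\geqslant mr+1$.

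\emph{Route B (via Theorems 4.2 and 4.3).} Fix a generator matrix $G$ of $C$ and set $U=\langle\col(G)\rangle_{\mathbb{F}}$. By Theorem 4.3, $C$ is $r$-minimal if and only if $U$ is a cutting $r$-blocking set of $\mathbb{E}^{[k]}$. I would then convert both sides into a statement about $\dim_{\mathbb{F}}(U+W)$. On the code side, writing $D=\{\gamma G\mid\gamma\in B\}$ with $\dim_{\mathbb{E}}(B)=r+1$ (so $\dim_{\mathbb{E}}(B^{\ddagger})=k-r-1$), Lemma 2.1 gives $\wt(D)=\dim_{\mathbb{F}}(U)-\dim_{\mathbb{F}}(B^{\ddagger}\cap U)$; combining with the modular identity $\dim_{\mathbb{F}}(U\cap W)=\dim_{\mathbb{F}}(U)+(k-r-1)m-\dim_{\mathbb{F}}(U+W)$ and letting $W=B^{\ddagger}$ range over all $(k-r-1)$-dimensional $\mathbb{E}$-subspaces of $\mathbb{E}^{[k]}$, minimization yields $\mathbf{d}_{r+1}(C)=\min_{W}\dim_{\mathbb{F}}(U+W)-(k-r-1)m$. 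On the blocking-set side, the equivalence $(1)\Longleftrightarrow(2)$ of Theorem 4.2 says $U$ is a cutting $r$-blocking set if and only if $\min_{W}\dim_{\mathbb{F}}(U+W)\geqslant(k-1)m+1$. Substituting, this is $\mathbf{d}_{r+1}(C)\geqslant(k-1)m+1-(k-r-1)m=mr+1$.

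The computations in both routes are routine once the cited theorems are in place; the real content is already packaged in those results. The step I expect to carry the weight is the appeal to part (2) of Theorem 4.1 (equivalently, to Theorem 4.2 in Route B): it is precisely the extremal weight formula $\min\{mr,\wt(W)\}$ that turns the infinitely many conditions ``$\wt(D)<\wt(W)$ for all $r$-dimensional $D\leqslant_{\mathbb{E}}W$'' into the single threshold $\wt(W)>mr$. Two small points to check carefully are that the bijection $B\mapsto B^{\ddagger}$ makes the minimum over $(r+1)$-dimensional subcodes match the minimum over $(k-r-1)$-dimensional $W$ (using $\rk(G)=k$), and that the strict versus non-strict inequalities align, so that $\min\{mr,\wt(W)\}<\wt(W)$ is genuinely equivalent to $\wt(W)\geqslant mr+1$.
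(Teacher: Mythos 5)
Your proposal is correct and matches the paper's own treatment, which likewise gives two proofs: your Route A is essentially the paper's second proof (Theorem 3.3 ``$(1)\Longleftrightarrow(4)$'' combined with part (2) of Theorem 4.1, organized as a single equivalence chain rather than two separate directions), and your Route B is the paper's first proof (Lemma 2.1 giving $\mathbf{d}_{r+1}(C)$ as a minimum over $(k-r-1)$-dimensional $\mathbb{E}$-subspaces, then Theorems 4.2 and 4.3). The two points you flag for care --- the bijectivity of $B\mapsto B^{\ddagger}$ under $\rank(G)=k$ and the strict-versus-nonstrict inequality alignment --- are exactly the details that make both arguments go through.
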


\begin{proof}[First proof]
Let $G\in\Mat_{k,n}(\mathbb{E})$ be a generator matrix of $C$, and let $U=\langle \col(G)\rangle_{\mathbb{F}}$. From either [39, Theorems 2 and 4] or Lemma 2.1, we deduce that
$$\mathbf{d}_{r+1}(C)=\min\{\dim_{\mathbb{F}}(U)-\dim_{\mathbb{F}}(M\cap U)\mid M\leqslant_{\mathbb{E}}\mathbb{E}^{[k]},\dim_{\mathbb{E}}(M)=k-r-1\}.$$
Hence from Theorems 4.2 and 4.3, we deduce that $\mathbf{d}_{r+1}(C)\geqslant mr+1$ if and only if $U$ is a cutting $r$-blocking set of $\mathbb{E}^{[k]}$, if and only if $C$ is $r$-minimal, as desired.
\end{proof}

\begin{proof}[Second proof]
First, suppose that $\mathbf{d}_{r+1}(C)\geqslant mr+1$. Then, for any $D,P\leqslant_{\mathbb{E}}C$ such that $\dim_{\mathbb{E}}(P)=r$, $\dim_{\mathbb{E}}(D)=r+1$, we have $\wt(P)\leqslant mr<\mathbf{d}_{r+1}(C)\leqslant\wt(D)$. This, together with ``$(1)\Longleftrightarrow(4)$'' of Theorem 3.3, immediately implies that $C$ is $r$-minimal, as desired. Next, suppose that $C$ is $r$-minimal. It suffices to show that $\wt(D)\geqslant mr+1$ for an arbitrary $D\leqslant_{\mathbb{E}}C$ with $\dim_{\mathbb{E}}(D)=r+1$. Indeed, if $\wt(D)\leqslant mr$, then by (2) of Theorem 4.1, there exists $Q\leqslant_{\mathbb{E}}D$ such that $\dim_{\mathbb{E}}(Q)=r$, $\chi(Q)=\chi(D)$, a contradiction to ``$(1)\Longleftrightarrow(4)$'' of Theorem 3.3, as desired.
\end{proof}

\setlength{\parindent}{2em}
Theorems 4.3 and 4.4, along with Lemma 2.1, immediately imply the following coding-theoretic counterpart of Corollary 4.2.

\setlength{\parindent}{0em}
\begin{corollary}
Let $C\leqslant_{\mathbb{E}}\mathbb{E}^{n}$ be an $[n,k]$ code, $G\in\Mat_{k,n}(\mathbb{E})$ be a generator matrix of $C$, $U=\langle \col(G)\rangle_{\mathbb{F}}$, and $r\in\{0,1,\dots,k-1\}$. Then, the following four statements hold:

{\bf{(1)}}\,\,If $\wt(C)\geqslant(k-1)m+1$, then $C$ is $r$-minimal;

{\bf{(2)}}\,\,If $\wt(C)=(k-1)m$, then $C$ is $r$-minimal if and only if $\mathcal{L}(U)\leqslant k-r-2$;

{\bf{(3)}}\,\,If $C$ is $r$-minimal, then for $s\triangleq\min\{k-r-1,\mathcal{L}(U)\}$, it holds that $\wt(C)\geqslant (m-1)(r+s)+k$;

{\bf{(4)}}\,\,If $r\geqslant m-1$ and $C$ is $r$-minimal, then it holds that $\wt(C)\geqslant (k-1)m+1$.
\end{corollary}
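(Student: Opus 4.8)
The plan is to set up a two-step dictionary that transports each part of Corollary 4.2 (stated there for a cutting $r$-blocking set $A$ inside a $k$-dimensional $\mathbb{E}$-space $Y$) into the corresponding assertion about $C$. I would apply Corollary 4.2 with $Y=\mathbb{E}^{[k]}$ and $A=U$, using the two translations: first, Theorem 4.3, which says that $C$ is $r$-minimal if and only if $U$ is a cutting $r$-blocking set of $\mathbb{E}^{[k]}$; and second, part (2) of Lemma 2.1, which gives $\wt(C)=\dim_{\mathbb{F}}(U)$. Under this dictionary the clause ``$C$ is $r$-minimal'' becomes ``$U$ is a cutting $r$-blocking set'', every weight condition of the form $\wt(C)\geqslant\ast$ or $\wt(C)=\ast$ becomes the corresponding dimension condition on $U$, and the linearity index $\mathcal{L}(U)$ is carried over verbatim.

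With this in hand each of the four parts is a one-line specialization. For (1), I would note that $\wt(C)\geqslant(k-1)m+1$ is the same as $\dim_{\mathbb{F}}(U)\geqslant(k-1)m+1$, which by part (1) of Corollary 4.2 makes $U$ a cutting $r$-blocking set for every $r\in\{0,1,\dots,k-1\}$, hence $C$ is $r$-minimal by Theorem 4.3. For (2), assuming $\wt(C)=\dim_{\mathbb{F}}(U)=(k-1)m$, Theorem 4.3 turns ``$C$ is $r$-minimal'' into ``$U$ is a cutting $r$-blocking set'', which by part (2) of Corollary 4.2 is equivalent to $\mathcal{L}(U)\leqslant k-r-2$. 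For (3), $r$-minimality of $C$ gives via Theorem 4.3 that $U$ is a cutting $r$-blocking set, so part (3) of Corollary 4.2 yields $\dim_{\mathbb{F}}(U)\geqslant(m-1)(r+s)+k$ with $s=\min\{k-r-1,\mathcal{L}(U)\}$, and $\wt(C)=\dim_{\mathbb{F}}(U)$ finishes it. For (4), I would first observe that the standing range $r\in\{0,1,\dots,k-1\}$ together with the hypothesis $r\geqslant m-1$ forces $m-1\leqslant r\leqslant k-1$, hence $k\geqslant m$, so the hypotheses of part (4) of Corollary 4.2 are met; the conclusion $\dim_{\mathbb{F}}(U)\geqslant(k-1)m+1$ then reads $\wt(C)\geqslant(k-1)m+1$.

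There is no genuine obstacle here: once the dictionary is fixed, the mathematical content is carried entirely by Corollary 4.2, and the only points requiring care are bookkeeping, namely confirming that the index ranges are preserved under the translation and, for part (4), that the hypothesis $k\geqslant m$ needed by Corollary 4.2(4) is automatically implied rather than separately assumed. I note that Theorem 4.4 offers an alternative route for parts (1) and (4), rephrasing the weight conditions through the generalized rank weight inequality $\mathbf{d}_{r+1}(C)\geqslant mr+1$, but the cutting-blocking-set dictionary built from Theorem 4.3 and Lemma 2.1 already delivers all four statements directly.
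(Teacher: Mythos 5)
Your proposal is correct and matches the paper's own (uncredited one-line) proof: the paper states this corollary as an immediate consequence of Theorem 4.3 and Lemma 2.1, viewing it as the coding-theoretic counterpart of Corollary 4.2, which is exactly your dictionary $Y=\mathbb{E}^{[k]}$, $A=U$, $\wt(C)=\dim_{\mathbb{F}}(U)$. Your bookkeeping observation for part (4) --- that $r\leqslant k-1$ and $r\geqslant m-1$ together force $k\geqslant m$, so the hypothesis of Corollary 4.2(4) is automatic --- is the only point needing care, and you handle it correctly; the paper also cites Theorem 4.4, but as you note it is not actually needed for any of the four parts.
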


\setlength{\parindent}{2em}
Now we characterize $r$-minimal codes in terms of generalized rank weights of the dual codes.

\setlength{\parindent}{0em}
\begin{theorem}
Suppose that $m\geqslant2$. Let $C\leqslant_{\mathbb{E}}\mathbb{E}^{n}$ be an $[n,k]$ code, and let $r\in\{1,\dots,k-1\}$ such that $n\geqslant(m-1)r+k$. Then, $C$ is $r$-minimal if and only if $\mathbf{d}_{n-(m-1)r-k+1}(C^{\bot})\geqslant n-mr+1$.
\end{theorem}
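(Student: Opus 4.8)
The plan is to build everything on the characterization from Theorem 4.4, namely that $C$ is $r$-minimal if and only if $\mathbf{d}_{r+1}(C)\geqslant mr+1$, and then to transport this condition on $\mathbf{d}_{r+1}(C)$ to a condition on a generalized rank weight of $C^{\bot}$ by means of a duality for the ``intersection profile'' of a code. First I would record a subspace description of the generalized rank weights that is valid for possibly infinite $\mathbb{E}$. Using $\wt(D)=\dim_{\mathbb{F}}(\chi(D))$ together with the elementary equivalence $\chi(D)\subseteq V\Longleftrightarrow D\subseteq\langle V\rangle_{\mathbb{E}}$, valid for $D\leqslant_{\mathbb{E}}C$ and $V\leqslant_{\mathbb{F}}\mathbb{F}^{n}$ (each $\alpha$ satisfies $\rsupp(\alpha)\subseteq V$ exactly when $\alpha\in\langle V\rangle_{\mathbb{E}}$), I would prove
$$\mathbf{d}_{r}(C)=\min\{\dim_{\mathbb{F}}(V)\mid V\leqslant_{\mathbb{F}}\mathbb{F}^{n},\ \dim_{\mathbb{E}}(C\cap\langle V\rangle_{\mathbb{E}})\geqslant r\}.$$
Equivalently, setting $N_{C}(d)\triangleq\max\{\dim_{\mathbb{E}}(C\cap\langle V\rangle_{\mathbb{E}})\mid V\leqslant_{\mathbb{F}}\mathbb{F}^{n},\ \dim_{\mathbb{F}}(V)=d\}$, which is non-decreasing in $d$, one has $\mathbf{d}_{r}(C)=\min\{d\mid N_{C}(d)\geqslant r\}$.

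The crucial step is a duality between $N_{C}$ and $N_{C^{\bot}}$. I would first verify that $\langle V\rangle_{\mathbb{E}}^{\bot}=\langle V^{\bot}\rangle_{\mathbb{E}}$ for every $V\leqslant_{\mathbb{F}}\mathbb{F}^{n}$: an $\mathbb{F}$-basis of $V$ is $\mathbb{E}$-linearly independent, so $\dim_{\mathbb{E}}\langle V\rangle_{\mathbb{E}}=\dim_{\mathbb{F}}(V)$, whence both sides of the claimed identity have $\mathbb{E}$-dimension $n-\dim_{\mathbb{F}}(V)$, and the evident inclusion $\langle V^{\bot}\rangle_{\mathbb{E}}\subseteq\langle V\rangle_{\mathbb{E}}^{\bot}$ is an equality. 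Then, writing $d=\dim_{\mathbb{F}}(V)$ and using $(C+\langle V\rangle_{\mathbb{E}})^{\bot}=C^{\bot}\cap\langle V^{\bot}\rangle_{\mathbb{E}}$, a Grassmann count gives
$$\dim_{\mathbb{E}}(C\cap\langle V\rangle_{\mathbb{E}})=k+d-n+\dim_{\mathbb{E}}(C^{\bot}\cap\langle V^{\bot}\rangle_{\mathbb{E}}).$$
Since $V\mapsto V^{\bot}$ is a bijection between the $d$-dimensional and the $(n-d)$-dimensional $\mathbb{F}$-subspaces of $\mathbb{F}^{n}$, maximizing both sides yields the profile duality $N_{C}(d)=k+d-n+N_{C^{\bot}}(n-d)$.

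Finally I would assemble the equivalence by monotonicity and index arithmetic. By Theorem 4.4 and the monotonicity of $N_{C}$, the code $C$ is $r$-minimal if and only if $N_{C}(mr)\leqslant r$; here $mr<n$ because $n\geqslant(m-1)r+k$ and $k\geqslant r+1$, so $N_{C}(mr)$ is defined. Applying the profile duality at $d=mr$ converts $N_{C}(mr)\leqslant r$ into $N_{C^{\bot}}(n-mr)\leqslant n-mr+r-k$, and the monotonicity of $N_{C^{\bot}}$ rewrites the latter as $\mathbf{d}_{s}(C^{\bot})\geqslant n-mr+1$, where $s-1=n-mr+r-k$, i.e.\ $s=n-(m-1)r-k+1$; the hypotheses $m\geqslant2$, $1\leqslant r\leqslant k-1$, $n\geqslant(m-1)r+k$ are precisely what guarantee $1\leqslant s\leqslant n-k$, so that $\mathbf{d}_{s}(C^{\bot})$ is well defined. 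The hard part will be establishing the profile duality over a possibly infinite $\mathbb{E}$—namely the identity $\langle V\rangle_{\mathbb{E}}^{\bot}=\langle V^{\bot}\rangle_{\mathbb{E}}$ and the accompanying dimension count (a Wei-type duality for the intersection profile)—after which the passage between the two generalized-weight conditions is pure bookkeeping with the monotone functions $N_{C}$ and $N_{C^{\bot}}$.
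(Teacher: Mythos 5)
Your proof is correct, and at the top level it takes the same route as the paper: both reduce $r$-minimality to the condition $\mathbf{d}_{r+1}(C)\geqslant mr+1$ via Theorem 4.4 and then pass to the dual code by a Wei-type duality, with identical index arithmetic. The difference is in how that duality is handled. The paper does not prove it: it quotes the fact that for $\max\{0,s-k\}\leqslant c\leqslant\min\{n-k,s\}$ one has $\mathbf{d}_{c}(C^{\bot})\geqslant s+1\Longleftrightarrow\mathbf{d}_{c+k-s}(C)\geqslant n-s+1$ (attributed to [33, Theorem 3.3], derivable from the Wei-type duality of [23, Theorem I.3]) and applies it at $(c,s)=(n-(m-1)r-k+1,\,n-mr)$. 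You instead reconstruct this duality from first principles: the subspace description $\mathbf{d}_{r}(C)=\min\{d\mid N_{C}(d)\geqslant r\}$, the identity $\langle V\rangle_{\mathbb{E}}^{\bot}=\langle V^{\bot}\rangle_{\mathbb{E}}$ (proved by a dimension count), and Grassmann's formula give the profile identity $N_{C}(d)=k+d-n+N_{C^{\bot}}(n-d)$, which is exactly the content of the cited fact, and your monotonicity bookkeeping then yields the same equivalence of weight conditions. What your route buys is self-containedness in precisely the generality the paper claims: Section 4 allows infinite $\mathbb{E}$, whereas the references invoked in the paper's one-line proof are stated for finite extensions $\mathbb{F}_{q^m}/\mathbb{F}_q$, so your counting-free argument (everything is a dimension count plus non-degeneracy of the standard bilinear form) removes any doubt on that score. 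What the paper's route buys is brevity---one sentence where you need roughly a page. Your side verifications are all sound: the equivalence $\rsupp(\alpha)\subseteq V\Longleftrightarrow\alpha\in\langle V\rangle_{\mathbb{E}}$ holds over any extension, $mr<n$ follows from $n\geqslant(m-1)r+k$ and $k\geqslant r+1$, and the constraints $1\leqslant n-(m-1)r-k+1\leqslant n-k$ are exactly guaranteed by $m\geqslant2$, $1\leqslant r\leqslant k-1$ and the hypothesis on $n$.
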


\begin{proof}
We recall a general fact which follows from [33, Theorem 3.3] and can be derived by using Wei-type duality (see [23, Theorem I.3]). More precisely, for $(c,s)\in\mathbb{N}^{2}$ such that $\max\{0,s-k\}\leqslant c\leqslant\min\{n-k,s\}$, it holds that $\mathbf{d}_{c}(C^{\bot})\geqslant s+1$ if and only if $\mathbf{d}_{c+k-s}(C)\geqslant n-s+1$. An application of this fact to $(n-(m-1)r-k+1,n-mr)$ implies that $\mathbf{d}_{n-(m-1)r-k+1}(C^{\bot})\geqslant n-mr+1$ if and only if $\mathbf{d}_{r+1}(C)\geqslant mr+1$, which, together with Theorem 4.4, further implies the desired result.
\end{proof}

\setlength{\parindent}{2em}
Finally, we characterize codes whose $r$-dimensional subcodes have constant rank support weight (such codes are necessarily $r$-minimal by Proposition 3.1). When $r=1$ and $\mathbb{E}$ is finite, these codes were classified in \cite{39,3} under the name of Hadamard code and simplex code, respectively. Now we further extend the results in \cite{39,3} to arbitrary $r$ and possibly infinite $\mathbb{E}$ via a different approach.

\setlength{\parindent}{0em}
\begin{theorem}
Let $C$ be an $[n,k]$ code, $G\in\Mat_{k,n}(\mathbb{E})$ be a generator matrix of $C$, $U=\langle \col(G)\rangle_{\mathbb{F}}$, and $r\in\{1,\dots,k-1\}$. Then, the following three statements are equivalent to each other:

{\bf{(1)}}\,\,For any $D,L\leqslant_{\mathbb{E}}C$ with $\dim_{\mathbb{E}}(D)=\dim_{\mathbb{E}}(L)=r$, it holds that $\wt(D)=\wt(L)$;

{\bf{(2)}}\,\,$U=\mathbb{E}^{[k]}$;

{\bf{(3)}}\,\,$\wt(C)=\dim_{\mathbb{F}}(C)$.
\end{theorem}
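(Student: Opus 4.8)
The plan is to prove the cycle of implications by first disposing of the two easy directions and then concentrating on the one substantial step. The equivalence $(2)\Longleftrightarrow(3)$ is immediate from Corollary 2.1, which asserts $\wt(C)=\dim_{\mathbb{F}}(C)$ if and only if $U=\mathbb{E}^{[k]}$. For $(2)\Longrightarrow(1)$ I would again invoke Corollary 2.1: if $U=\mathbb{E}^{[k]}$, then its ``moreover'' part gives $\wt(D)=\dim_{\mathbb{F}}(D)=mr$ for every $D\leqslant_{\mathbb{E}}C$ with $\dim_{\mathbb{E}}(D)=r$, so all $r$-dimensional subcodes share the common weight $mr$ and $(1)$ holds. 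Everything therefore reduces to the contrapositive of $(1)\Longrightarrow(2)$: assuming $U\neq\mathbb{E}^{[k]}$, I must exhibit two $r$-dimensional subcodes of different rank support weight.

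To set this up, write $w\triangleq\wt(C)=\dim_{\mathbb{F}}(U)$; since $\langle U\rangle_{\mathbb{E}}=\mathbb{E}^{[k]}$ we have $k\leqslant w$, while $U\neq\mathbb{E}^{[k]}$ gives $w<mk$. By Lemma 2.1, for $B\leqslant_{\mathbb{E}}\mathbb{E}^{k}$ with $\dim_{\mathbb{E}}(B)=r$ and $D\triangleq\{\gamma G\mid\gamma\in B\}$ one has $\wt(D)=w-\dim_{\mathbb{F}}(B^{\ddagger}\cap U)$, and as $B$ ranges over all $r$-dimensional $\mathbb{E}$-subspaces of $\mathbb{E}^{k}$ the space $W\triangleq B^{\ddagger}$ ranges over all $(k-r)$-dimensional $\mathbb{E}$-subspaces of $\mathbb{E}^{[k]}$. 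Thus the rank support weights of the $r$-dimensional subcodes are exactly the numbers $w-\dim_{\mathbb{F}}(W\cap U)$. Combining this with Theorem 4.1, whose part (2) yields $\max\{\wt(D)\mid\dim_{\mathbb{E}}(D)=r\}=\min\{mr,w\}$, I read off that the minimum of $\dim_{\mathbb{F}}(W\cap U)$ over all $(k-r)$-dimensional $\mathbb{E}$-subspaces $W$ equals $\max\{w-mr,0\}$. Consequently it suffices to produce a single $(k-r)$-dimensional $\mathbb{E}$-subspace $W_{0}$ with $\dim_{\mathbb{F}}(W_{0}\cap U)>\max\{w-mr,0\}$: the associated subcode then has weight strictly below the maximal value $\min\{mr,w\}$, which is itself attained by Theorem 4.1, producing two subcodes of distinct weight and hence $\neg(1)$.

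The remaining and main task is the construction of $W_{0}$, which I would carry out in two cases according to the sign of $w-mr$. If $w\geqslant mr$, then since $w<mk$ the subspace $U$ lies in some $\mathbb{F}$-hyperplane $H$ of $\mathbb{E}^{[k]}$ with $\dim_{\mathbb{F}}(H)=mk-1$; by Proposition 4.3 one has $\mathcal{L}(H)\geqslant mk-1-k(m-1)=k-1\geqslant k-r$, so $H$ contains a $(k-r)$-dimensional $\mathbb{E}$-subspace $W_{0}$, and $U+W_{0}\subseteq H$ forces
$$\dim_{\mathbb{F}}(W_{0}\cap U)\geqslant w+m(k-r)-(mk-1)=w-mr+1>w-mr.$$
If instead $w<mr$, then $\max\{w-mr,0\}=0$, and choosing any nonzero $u\in U$ and extending the $\mathbb{E}$-line $\langle u\rangle_{\mathbb{E}}$ to a $(k-r)$-dimensional $\mathbb{E}$-subspace $W_{0}$ (possible as $1\leqslant k-r$) gives $\dim_{\mathbb{F}}(W_{0}\cap U)\geqslant1>0$. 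In both cases $W_{0}$ has the required property, completing the contrapositive and hence the whole chain of equivalences. I expect the principal obstacle to be precisely this last step: the translation into the quantity $\dim_{\mathbb{F}}(W\cap U)$ and the identification of its minimum via Theorem 4.1 are routine, but exhibiting a subspace whose intersection strictly exceeds this minimum genuinely needs the two separate geometric inputs above, namely a hyperplane containing $U$ together with Proposition 4.3 when $w\geqslant mr$, and an $\mathbb{E}$-line inside $U$ when $w<mr$.
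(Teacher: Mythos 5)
Your proposal is correct and takes essentially the same route as the paper: the easy implications via Corollary 2.1, the weight--intersection dictionary of Lemma 2.1, Theorem 4.1 to identify the extremal intersection dimension, and the same key construction of an $\mathbb{F}$-hyperplane containing $U$ combined with Proposition 4.3 to produce a $(k-r)$-dimensional $\mathbb{E}$-subspace inside it. The only difference is organizational: the paper proves $(1)\Rightarrow(2)$ directly, where constancy of the intersections first forces $\dim_{\mathbb{F}}(U)\geqslant mr+1$ so your case $w<mr$ (handled by the $\mathbb{E}$-line trick) never arises, whereas you argue the contrapositive with a two-case split --- a cosmetic rather than substantive deviation.
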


\begin{proof}
Noticing that both $(2)\Longleftrightarrow(3)$ and $(3)\Longrightarrow(1)$ follow from Corollary 2.1, we only prove $(1)\Longrightarrow(2)$. First, we show that $\dim_{\mathbb{F}}(V\cap U)=\dim_{\mathbb{F}}(W\cap U)$ for $V,W\leqslant_{\mathbb{E}}\mathbb{E}^{[k]}$ with $\dim_{\mathbb{E}}(V)=\dim_{\mathbb{E}}(W)=k-r$. Indeed, let $B,P\leqslant_{\mathbb{E}}\mathbb{E}^{k}$ such that $V=B^{\ddagger}$, $W=P^{\ddagger}$, and let $D=\{\gamma G\mid\gamma\in B\}$, $Q=\{\gamma G\mid\gamma\in P\}$. Noticing that $\dim_{\mathbb{E}}(D)=\dim_{\mathbb{E}}(Q)=r$, we have $\wt(D)=\wt(Q)$, which, together with Lemma 2.1, implies that $\dim_{\mathbb{F}}(V\cap U)=\dim_{\mathbb{F}}(W\cap U)$, as desired. It follows that $U\cap L\neq\{0\}$ for all $L\leqslant_{\mathbb{E}}\mathbb{E}^{[k]}$ with $\dim_{\mathbb{E}}(L)=k-r$. By Theorem 4.1, we have $\dim_{\mathbb{F}}(U)\geqslant mr+1$, and there exists $V\leqslant_{\mathbb{E}}\mathbb{E}^{[k]}$ such that $\dim_{\mathbb{E}}(V)=k-r$, $U+V=\mathbb{E}^{[k]}$. Now suppose that $U\neq\mathbb{E}^{[k]}$. Then, we can choose $M\leqslant_{\mathbb{F}}\mathbb{E}^{[k]}$ such that $\dim_{\mathbb{F}}(M)=km-1$, $U\subseteq M$. By Proposition 4.3, we can choose $W\leqslant_{\mathbb{E}}\mathbb{E}^{[k]}$ such that $\dim_{\mathbb{E}}(W)=k-r$, $W\subseteq M$. From $\dim_{\mathbb{F}}(U\cap W)=\dim_{\mathbb{F}}(U\cap V)$, we deduce that $\dim_{\mathbb{F}}(U+W)=\dim_{\mathbb{F}}(U+V)=km$, a contradiction to $U+W\subseteq M$, as desired.
\end{proof}

\setlength{\parindent}{2em}
\begin{remark}
When $r=1$ and $\mathbb{E}$ is finite, Theorem 4.3 recovers [3, Corollary 5.7], Theorem 4.4 recovers [8, Theorem 3.4], Theorem 4.6 recovers [39, Theorem 12], [3, Corollary 3.17] and part of [3, Proposition 3.16], and Corollary 4.3 recovers Proposition 6.2, Corollary 6.19, Proposition 6.17 and the ``only if'' part of Corollary 6.20 of \cite{3}, respectively. Unlike Theorem 4.3, Theorems 4.4 and 4.5 do not seem to have analogues with respect to Hamming metric or sum-rank metric.
\end{remark}

\section{Minimal length of $r$-minimal codes}
\setlength{\parindent}{2em}
Throughout this section, let $\mathbb{E}/\mathbb{F}$ be a finite dimensional field extension with $\dim_{\mathbb{F}}(\mathbb{E})=m$ as in Section 4. For any $(k,r)\in\mathbb{N}^{2}$ with $k\geqslant r+1$, define
\begin{equation}\varpi_{\mathbb{E}/\mathbb{F}}(k,r)\triangleq\min\{\dim_{\mathbb{F}}(U)\mid \text{$U\leqslant_{\mathbb{F}}\mathbb{E}^{[k]}$, $U$ is a cutting $r$-blocking set of $\mathbb{E}^{[k]}$}\},\end{equation}
which in fact is also equal to the minimal length of all the $k$-dimensional $r$-minimal codes, as detailed in the following proposition.

\setlength{\parindent}{0em}
\begin{proposition}
Let $(k,r)\in\mathbb{N}^{2}$ with $k\geqslant r+1$. Then, for any $n\in\mathbb{Z}^{+}$ and any $[n,k]$ $r$-minimal code $C\leqslant_{\mathbb{E}}\mathbb{E}^{n}$, it holds that $n\geqslant\wt(C)\geqslant\varpi_{\mathbb{E}/\mathbb{F}}(k,r)$. Conversely, for any $n\in\mathbb{Z}^{+}$ with $n\geqslant\varpi_{\mathbb{E}/\mathbb{F}}(k,r)$, there exists an $[n,k]$ $r$-minimal code. In particular, it holds that
\begin{equation}\varpi_{\mathbb{E}/\mathbb{F}}(k,r)=\min\{n\in\mathbb{Z}^{+}\mid\text{there exists an $[n,k]$ $r$-minimal code}\}.\end{equation}
\end{proposition}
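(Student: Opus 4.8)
The plan is to read off both inequalities behind (5.2) from the dictionary between $r$-minimal codes and linear cutting $r$-blocking sets furnished by Theorem 4.3, combined with the identity $\wt(C)=\dim_{\mathbb{F}}(U)$ from Lemma 2.1. I first record two preliminary facts valid for $(k,r)$ with $k\geqslant r+1$. The quantity $\varpi_{\mathbb{E}/\mathbb{F}}(k,r)$ is well defined, since $\mathbb{E}^{[k]}$ itself is a cutting $r$-blocking set and hence the set in (5.1) is nonempty. Moreover, because $r\leqslant k-1$, every cutting $r$-blocking set $U$ of $\mathbb{E}^{[k]}$ satisfies $\langle U\rangle_{\mathbb{E}}=\mathbb{E}^{[k]}$: each vector lies in some $(k-r)$-dimensional $\mathbb{E}$-subspace $V$, and $V=\langle U\cap V\rangle_{\mathbb{E}}\subseteq\langle U\rangle_{\mathbb{E}}$ by Definition 2.4. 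In particular $\dim_{\mathbb{F}}(U)\geqslant k\geqslant1$.

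For the lower bound, let $C\leqslant_{\mathbb{E}}\mathbb{E}^{n}$ be an $[n,k]$ $r$-minimal code, fix a generator matrix $G\in\Mat_{k,n}(\mathbb{E})$, and set $U=\langle\col(G)\rangle_{\mathbb{F}}$. Since the $n$ columns of $G$ generate $U$ over $\mathbb{F}$, we have $\dim_{\mathbb{F}}(U)\leqslant n$; and Lemma 2.1 gives $\wt(C)=\dim_{\mathbb{F}}(U)$, so $n\geqslant\wt(C)$. As $C$ is $r$-minimal, Theorem 4.3 shows $U$ is a cutting $r$-blocking set of $\mathbb{E}^{[k]}$, whence by the definition (5.1) of $\varpi_{\mathbb{E}/\mathbb{F}}(k,r)$ as a minimum we get $\wt(C)=\dim_{\mathbb{F}}(U)\geqslant\varpi_{\mathbb{E}/\mathbb{F}}(k,r)$. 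Chaining these yields $n\geqslant\wt(C)\geqslant\varpi_{\mathbb{E}/\mathbb{F}}(k,r)$.

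For the converse, fix $n\geqslant\varpi_{\mathbb{E}/\mathbb{F}}(k,r)=:w$ and, by definition of the minimum, choose a cutting $r$-blocking set $U_{0}\leqslant_{\mathbb{F}}\mathbb{E}^{[k]}$ with $\dim_{\mathbb{F}}(U_{0})=w$. Let $v_{1},\dots,v_{w}$ be an $\mathbb{F}$-basis of $U_{0}$ and form $G\in\Mat_{k,n}(\mathbb{E})$ whose first $w$ columns are $v_{1},\dots,v_{w}$ and whose remaining $n-w$ columns are zero. By the spanning fact above, $\langle\{v_{1},\dots,v_{w}\}\rangle_{\mathbb{E}}=\langle U_{0}\rangle_{\mathbb{E}}=\mathbb{E}^{[k]}$, so the columns of $G$ span $\mathbb{E}^{[k]}$ over $\mathbb{E}$, the $k$ rows of $G$ are $\mathbb{E}$-linearly independent, and $C\triangleq\langle\row(G)\rangle_{\mathbb{E}}$ is genuinely an $[n,k]$ code with generator matrix $G$. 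Adjoining zero columns leaves the $\mathbb{F}$-span unchanged, so $\langle\col(G)\rangle_{\mathbb{F}}=U_{0}$ is still a cutting $r$-blocking set, and Theorem 4.3 certifies that $C$ is $r$-minimal.

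Finally, the two directions give (5.2) at once: the lower bound shows any $n$ admitting an $[n,k]$ $r$-minimal code obeys $n\geqslant\varpi_{\mathbb{E}/\mathbb{F}}(k,r)$, while the construction applied to $n=\varpi_{\mathbb{E}/\mathbb{F}}(k,r)$ exhibits a code of that length, so the minimum on the right of (5.2) is attained exactly at $\varpi_{\mathbb{E}/\mathbb{F}}(k,r)$. The only substantive point, and the step I would be most careful about, is confirming that the constructed $C$ has $\mathbb{E}$-dimension exactly $k$ rather than less; this is precisely where the fact that a cutting $r$-blocking set $\mathbb{E}$-spans $\mathbb{E}^{[k]}$ enters, and everything else is routine bookkeeping with Theorem 4.3 and Lemma 2.1.
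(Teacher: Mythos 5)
Your proof is correct and takes essentially the same route as the paper's: the lower bound comes from pairing Theorem 4.3 with $\wt(C)=\dim_{\mathbb{F}}(U)$ from Lemma 2.1 and the definition (5.1), and the converse comes from building a $k\times n$ matrix whose $\mathbb{F}$-column span is a minimum-dimension cutting $r$-blocking set and invoking Theorem 4.3 again. Your explicit verification that a cutting $r$-blocking set $\mathbb{E}$-spans $\mathbb{E}^{[k]}$ (hence $\rank(G)=k$) and that the set in (5.1) is nonempty simply spells out what the paper leaves implicit.
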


\begin{proof}
First, let $n\in\mathbb{Z}^{+}$, $C\leqslant_{\mathbb{E}}\mathbb{E}^{n}$ be an $[n,k]$ $r$-minimal code, $G\in\Mat_{k,n}(\mathbb{E})$ be a generator matrix of $C$, and let $U=\langle \col(G)\rangle_{\mathbb{F}}$. By Theorem 4.3, $U$ is a cutting $r$-blocking set of $\mathbb{E}^{[k]}$. It then follows from Lemma 2.1 and (5.1) that $n\geqslant\wt(C)=\dim_{\mathbb{F}}(U)\geqslant\varpi_{\mathbb{E}/\mathbb{F}}(k,r)$, as desired.

\hspace*{4mm}\,\,Second, let $n\in\mathbb{Z}^{+}$ with $\varpi_{\mathbb{E}/\mathbb{F}}(k,r)\leqslant n$. Then, we can choose $G\in\Mat_{k,n}(\mathbb{E})$ such that $U\triangleq\langle \col(G)\rangle_{\mathbb{F}}$ is a cutting $r$-blocking set of $\mathbb{E}^{[k]}$. From $\langle U\rangle_{\mathbb{E}}=\mathbb{E}^{[k]}$ and Theorem 4.3, we deduce that $\rank(G)=k\leqslant n$, and $C\triangleq\langle \row(G)\rangle_{\mathbb{E}}$ is an $[n,k]$ $r$-minimal code, as desired.
\end{proof}

\setlength{\parindent}{2em}
The rest of this section is devoted to deriving lower and upper bounds for $\varpi_{\mathbb{E}/\mathbb{F}}(k,r)$ as well as giving exact values of $\varpi_{\mathbb{E}/\mathbb{F}}(k,r)$ for some special scenarios. When $\mathbb{E}$ is finite, $\varpi_{\mathbb{E}/\mathbb{F}}(k,1)$ has been extensively studied in terms of both minimal codes and cutting blocking sets, and we first recall some known results in the literature.

\setlength{\parindent}{0em}
\begin{lemma}
Suppose that $\mathbb{E}$ is finite. Then, the following eight statements hold:

{\bf{(1)}}\,\,For any $k\in\mathbb{N}$ with $k\geqslant 2$, it holds that $\varpi_{\mathbb{E}/\mathbb{F}}(k,1)\geqslant m+k-1$ ([3, Corollary 5.10]);

{\bf{(2)}}\,\,For any $k\in\mathbb{N}$ with $k\geqslant\lfloor\sqrt{m}\rfloor+2$, it holds that $\varpi_{\mathbb{E}/\mathbb{F}}(k,1)\geqslant m+k$ ([8, Corollary 3.5]);

{\bf{(3)}}\,\,If $m=3$, then $\varpi_{\mathbb{E}/\mathbb{F}}(4,1)=8$ ([8, Section 3.4]);

{\bf{(4)}}\,\,If $m\geqslant4$, then $\varpi_{\mathbb{E}/\mathbb{F}}(3,1)\leqslant m+3$ ([26, Theorem 7.16]);

{\bf{(5)}}\,\,For any $k\in\mathbb{N}$ with $k\geqslant 2$, it holds that $\varpi_{\mathbb{E}/\mathbb{F}}(k,1)\leqslant m+2k-2$ ([3, Theorem 6.11]);

{\bf{(6)}}\,\,If $m=4$ and $|\mathbb{F}|$ is an odd power of $2$, then $\varpi_{\mathbb{E}/\mathbb{F}}(4,1)=8$ ([8, Proposition 4.9]);

{\bf{(7)}}\,\,If $m\geqslant4$ and $m\not\equiv3,5~(\bmod~6)$, then $\varpi_{\mathbb{E}/\mathbb{F}}(3,1)=m+2$ ([3, Theorem 6.7]);

{\bf{(8)}}\,\,If $m\geqslant5$ and $2\nmid m$, then $\varpi_{\mathbb{E}/\mathbb{F}}(3,1)=m+2$ with some additional assumptions on $\mathbb{E}/\mathbb{F}$ ([9, Theorem 5.1] and [31, Theorem 3.5]).
\end{lemma}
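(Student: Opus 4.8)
The plan is to recognize that Lemma 5.1 is not a result to be proved from scratch but a \emph{compilation} of facts already established in the literature, each accompanied by its own citation. Accordingly, the ``proof'' consists of attributing each of the eight statements to its source, and the only genuine work is to reconcile the various formulations appearing in those sources with the quantity $\varpi_{\mathbb{E}/\mathbb{F}}(k,1)$ as defined in (5.1). The organizing principle will be Proposition 5.1, which identifies $\varpi_{\mathbb{E}/\mathbb{F}}(k,r)$ simultaneously with the minimal length of a $k$-dimensional $r$-minimal code and (by definition (5.1)) with the minimal $\mathbb{F}$-dimension of a cutting $r$-blocking set of $\mathbb{E}^{[k]}$. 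With this triple identification in hand, every cited result can be read as a statement about $\varpi_{\mathbb{E}/\mathbb{F}}(\cdot,1)$ regardless of whether the original author phrased it in terms of codes, cutting blocking sets, or evasive subspaces.

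Concretely, I would proceed statement by statement. The lower bounds (1) and (2) come from [3, Corollary 5.10] and [8, Corollary 3.5], which are stated for minimal rank metric codes and transfer directly via Proposition 5.1. The general upper bound (5) is [3, Theorem 6.11]. For the exact values and the remaining bounds (3), (4), (6), (7), (8), I would cite the explicit constructions together with the matching bounds from [8, Section 3.4], [26, Theorem 7.16], [8, Proposition 4.9], [3, Theorem 6.7], and the pair [9, Theorem 5.1], [31, Theorem 3.5], respectively. Where a source works with the geometric objects rather than codes, I would invoke the equivalence between minimal rank metric codes and linear cutting blocking sets (Theorem 4.3, originally from [3]) and between the latter and evasive subspaces (Theorem 4.2) to pass to the code-theoretic quantity $\varpi_{\mathbb{E}/\mathbb{F}}(k,1)$.

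The main subtlety---rather than a genuine mathematical obstacle---will be bookkeeping across references that employ different but provably equivalent objects: minimal codes in [3,8], cutting blocking sets, scattered or evasive subspaces, and linear sets in [9,26,31]. Ensuring that the normalizations, the roles of $m$ and $k$, and the finiteness hypothesis on $\mathbb{E}$ all match up is the step most prone to error. I would be especially careful with statement (8), recording the ``additional assumptions on $\mathbb{E}/\mathbb{F}$'' exactly as they appear in [9, Theorem 5.1] and [31, Theorem 3.5], since the conclusion $\varpi_{\mathbb{E}/\mathbb{F}}(3,1)=m+2$ there is contingent on those hypotheses and on the parity condition $2\nmid m$; the value should not be asserted unconditionally.
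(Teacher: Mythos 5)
Your proposal matches the paper exactly: Lemma 5.1 is stated there as a compilation of known results with in-line citations and no proof is given, the implicit justification being precisely the identification of $\varpi_{\mathbb{E}/\mathbb{F}}(k,1)$ with the minimal length of minimal codes and the minimal dimension of linear cutting blocking sets via Proposition 5.1 and Theorems 4.2--4.3. Your attention to the conditional nature of statement (8) is consistent with how the paper phrases it (``with some additional assumptions on $\mathbb{E}/\mathbb{F}$''), so nothing is missing.
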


\setlength{\parindent}{2em}
In this section, among others, we will further generalize (1)--(5) of Lemma 5.1. First of all, we give a lower bound and an upper bound for $\varpi_{\mathbb{E}/\mathbb{F}}(k,r)$, both of which are tight for some special cases. The following result immediately follows from Corollary 4.2.

\setlength{\parindent}{0em}
\begin{corollary}
{\bf{(1)}}\,\,For any $(k,r)\in\mathbb{N}^{2}$ with $k\geqslant r+1$, it holds that
\begin{equation}(m-1)r+k\leqslant\varpi_{\mathbb{E}/\mathbb{F}}(k,r)\leqslant(k-1)m+1.\end{equation}
{\bf{(2)}}\,\,For any $(k,r)\in\mathbb{N}^{2}$ with $k\geqslant r+1\geqslant m$, it holds that
$\varpi_{\mathbb{E}/\mathbb{F}}(k,r)=(k-1)m+1$.

{\bf{(3)}}\,\,For any $k\in\mathbb{Z}^{+}$, it holds that $\varpi_{\mathbb{E}/\mathbb{F}}(k,k-1)=(k-1)m+1$, $\varpi_{\mathbb{E}/\mathbb{F}}(k,0)=k$.
\end{corollary}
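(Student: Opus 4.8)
The plan is to convert every statement about $\varpi_{\mathbb{E}/\mathbb{F}}(k,r)$ into a statement about the rank support weight of $r$-minimal codes, and then read off the four inequalities directly from Corollary 4.2. The bridge is that, by Lemma 2.1 and Theorem 4.3, an $\mathbb{F}$-subspace $U\leqslant_{\mathbb{F}}\mathbb{E}^{[k]}$ with $\langle U\rangle_{\mathbb{E}}=\mathbb{E}^{[k]}$ is a cutting $r$-blocking set if and only if the $[n,k]$ code $C=\langle\row(G)\rangle_{\mathbb{E}}$ attached to any $G\in\Mat_{k,n}(\mathbb{E})$ with $\langle\col(G)\rangle_{\mathbb{F}}=U$ is $r$-minimal, and in that case $\wt(C)=\dim_{\mathbb{F}}(U)$. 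Consequently (a reformulation of Proposition 5.1), $\varpi_{\mathbb{E}/\mathbb{F}}(k,r)$ is precisely the minimum of $\wt(C)$ taken over all $k$-dimensional $r$-minimal codes $C$, and each bound on $\wt(C)$ becomes a bound on $\varpi_{\mathbb{E}/\mathbb{F}}(k,r)$.

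For the lower bound in (1), I would apply part (3) of Corollary 4.2 to an arbitrary $r$-minimal code $C$. Writing $s=\min\{k-r-1,\mathcal{L}(U)\}$, one has $s\geqslant0$ because $k-r-1\geqslant0$ (as $k\geqslant r+1$) and $\mathcal{L}(U)\geqslant0$; hence $\wt(C)\geqslant(m-1)(r+s)+k\geqslant(m-1)r+k$, and minimising over $C$ gives $\varpi_{\mathbb{E}/\mathbb{F}}(k,r)\geqslant(m-1)r+k$. For the upper bound in (1), I would choose $U\leqslant_{\mathbb{F}}\mathbb{E}^{[k]}$ with $\dim_{\mathbb{F}}(U)=(k-1)m+1$, which is possible since $(k-1)m+1\leqslant km$; because this $\mathbb{F}$-dimension exceeds the value $(k-1)m$ of any $(k-1)$-dimensional $\mathbb{E}$-subspace, automatically $\langle U\rangle_{\mathbb{E}}=\mathbb{E}^{[k]}$, so the associated code has $\wt(C)=(k-1)m+1$ and is $r$-minimal by part (1) of Corollary 4.2. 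Thus $\varpi_{\mathbb{E}/\mathbb{F}}(k,r)\leqslant(k-1)m+1$.

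For (2), the hypothesis $r+1\geqslant m$ is exactly $r\geqslant m-1$, so part (4) of Corollary 4.2 applies to every $r$-minimal code and yields $\wt(C)\geqslant(k-1)m+1$, whence $\varpi_{\mathbb{E}/\mathbb{F}}(k,r)\geqslant(k-1)m+1$; together with the upper bound from (1) this forces equality. The first assertion of (3) then comes for free: at $r=k-1$ the two bounds in (1) coincide, since $(m-1)(k-1)+k=(k-1)m+1$, giving $\varpi_{\mathbb{E}/\mathbb{F}}(k,k-1)=(k-1)m+1$ at once.

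The only claim not handed to us by this machinery is $\varpi_{\mathbb{E}/\mathbb{F}}(k,0)=k$, which is the degenerate case: the lower bound from (1) already reads $\varpi_{\mathbb{E}/\mathbb{F}}(k,0)\geqslant k$, but part (3) of Corollary 4.2 inflates with $\mathcal{L}(U)$ and so does not pin down the minimum at $r=0$. For the matching upper bound I would return to Definition 2.4 and observe that a cutting $0$-blocking set of $\mathbb{E}^{[k]}$ is nothing but an $\mathbb{F}$-subspace $U$ with $\langle U\rangle_{\mathbb{E}}=\mathbb{E}^{[k]}$; taking $U$ to be the $\mathbb{F}$-span of an $\mathbb{E}$-basis of $\mathbb{E}^{[k]}$ produces such a set of $\mathbb{F}$-dimension $k$. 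I expect no genuine obstacle, as the argument is essentially bookkeeping; the one point needing care is precisely this $r=0$ edge case, where tightness must be supplied by the explicit basis construction rather than by Corollary 4.2.
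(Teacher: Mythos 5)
Your proposal is correct and follows essentially the same route as the paper, which simply states that the corollary ``immediately follows from Corollary 4.2'': the lower bound from part (3) with $s\geqslant 0$, the upper bound from part (1), statement (2) from part (4), and the $r=k-1$ case from the coincidence of the two bounds. Your handling of the one detail the paper leaves implicit --- that $\varpi_{\mathbb{E}/\mathbb{F}}(k,0)=k$ needs the explicit observation that the $\mathbb{F}$-span of an $\mathbb{E}$-basis of $\mathbb{E}^{[k]}$ is a cutting $0$-blocking set of $\mathbb{F}$-dimension $k$ --- is also correct.
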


\begin{remark}
When $r=1$ and $\mathbb{E}$ is finite, the lower bound in (5.3) recovers (1) of Lemma 5.1. We will show in Sections 5.2 and 5.3 that the two bounds in (5.3) are not tight in general.
\end{remark}

\subsection{Evasive subspaces}

\setlength{\parindent}{2em}
For further discussion, in this subsection, we derive some upper bounds for the dimensions of evasive subspaces. The following notation will be crucial for our discussion.

\setlength{\parindent}{0em}
\begin{notation}
{\bf{(1)}}\,\,For $(\lambda,a,u)\in\mathbb{N}^{3}$ and $k\in\mathbb{N}$, write $k\leftrightarrow(\lambda,a,u)$ if $k\geqslant\lambda$, and moreover, for any $n\in\mathbb{N}$ with $n\geqslant k$, and any $U\leqslant_{\mathbb{F}}\mathbb{E}^{n}$ such that $U$ is $(n-\lambda,n-\lambda+a)$-evasive in $\mathbb{E}^{n}$, it holds that $\dim_{\mathbb{F}}(U)\leqslant n+a+u$.

{\bf{(2)}}\,\,For $(\lambda,a,u)\in\mathbb{N}^{3}$, $(\varepsilon,b,v)\in\mathbb{N}^{3}$, write $(\lambda,a,u)\curlyeqprec(\varepsilon,b,v)$ if for any $k\in\mathbb{N}$ with $k\leftrightarrow(\lambda,a,u)$, it holds that $k\leftrightarrow(\varepsilon,b,v)$.
\end{notation}

\setlength{\parindent}{2em}
We begin with the following proposition, which is inspired by [8, Lemma 3.11].

\setlength{\parindent}{0em}
\begin{proposition}
{\bf{(1)}}\,\,Let $X$ be a $k$-dimensional $\mathbb{E}$-vector space, $(a,v)\in\mathbb{N}^{2}$, $(b,w)\in\mathbb{N}^{2}$ with $a\leqslant b\leqslant k$, and $U\leqslant_{\mathbb{F}}X$ be both $(a,v)$-evasive and $(b,w)$-evasive in $X$. Moreover, let $g_1\in\mathbb{R}$ such that $\dim_{\mathbb{F}}(I)\leqslant g_1$ for all $(a,v-1)$-evasive $\mathbb{F}$-subspace $I$ in $\mathbb{E}^{k}$, and let $g_2\in\mathbb{R}$ such that $\dim_{\mathbb{F}}(J)\leqslant g_2$ for all $(b-a,w-v)$-evasive $\mathbb{F}$-subspace $J$ in $\mathbb{E}^{k-a}$. Then, it holds that $\dim_{\mathbb{F}}(U)\leqslant\max\{g_{1},g_{2}+v\}$.

{\bf{(2)}}\,\,Let $X$ be a $k$-dimensional $\mathbb{E}$-vector space, $(b,w)\in\mathbb{N}^{2}$ with $b\leqslant k$, and $U\leqslant_{\mathbb{F}}X$ be $(b,w)$-evasive in $X$. Moreover, let $t\in\{0,1,\dots,b\}$, and let $g_1\in\mathbb{R}$ such that $\dim_{\mathbb{F}}(I)\leqslant g_1$ for all $(b-t,w-t-1)$-evasive $\mathbb{F}$-subspace $I$ in $\mathbb{E}^{k}$. Then, it holds that $$\dim_{\mathbb{F}}(U)\leqslant\max\left\{g_{1},k-b+w,\frac{(k-b+t)m}{t+1}+w-t\right\}.$$
\end{proposition}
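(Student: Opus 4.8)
The plan is to establish (1) by a case split according to whether the intersection bound $v$ is actually attained on some $a$-dimensional $\mathbb{E}$-subspace, and then to deduce (2) from (1) by a specific choice of parameters combined with the dimension bounds in Lemma 2.4. Throughout I use that $(h,r)$-evasiveness depends only on the $\mathbb{E}$-subspace lattice and $\mathbb{F}$-dimensions, hence is preserved by $\mathbb{E}$-linear isomorphisms; in particular $X$ may be identified with $\mathbb{E}^{k}$ and any quotient $X/A$ with $\mathbb{E}^{k-\dim_{\mathbb{E}}(A)}$.

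For (1), I would first ask whether $U$ is $(a,v-1)$-evasive in $X$. If it is, then since $X \cong_{\mathbb{E}} \mathbb{E}^{k}$ the hypothesis defining $g_1$ applies to $U$, giving $\dim_{\mathbb{F}}(U) \leqslant g_1$. If it is not, then because $\langle U \rangle_{\mathbb{E}} = X$ is already guaranteed by $(a,v)$-evasiveness, the only way $(a,v-1)$-evasiveness can fail is that there is some $A \leqslant_{\mathbb{E}} X$ with $\dim_{\mathbb{E}}(A) = a$ and $\dim_{\mathbb{F}}(U \cap A) \geqslant v$; the $(a,v)$-evasiveness of $U$ forces $\dim_{\mathbb{F}}(U \cap A) = v$ exactly. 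I would then apply (3) of Lemma 2.4 to this $A$ (legitimate since $a \leqslant b$): the image $(U+A)/A$ is $(b-a, w-v)$-evasive in $X/A \cong_{\mathbb{E}} \mathbb{E}^{k-a}$, so the hypothesis defining $g_2$ gives $\dim_{\mathbb{F}}((U+A)/A) \leqslant g_2$. Since $(U+A)/A \cong U/(U\cap A)$ as $\mathbb{F}$-spaces, this yields $\dim_{\mathbb{F}}(U) = \dim_{\mathbb{F}}(U\cap A) + \dim_{\mathbb{F}}((U+A)/A) \leqslant v + g_2$. The two cases together give $\dim_{\mathbb{F}}(U) \leqslant \max\{g_1, g_2 + v\}$.

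For (2), I would invoke (1) with $a = b-t$ and $v = w-t$ (retaining the original $b$ and $w$). By (1) of Lemma 2.4 the $(b,w)$-evasiveness of $U$ makes $U$ also $(b-t,w-t)$-evasive, so $U$ is simultaneously $(a,v)$- and $(b,w)$-evasive, and $a = b-t \leqslant b \leqslant k$ as required. With these values the $g_1$-hypothesis of (1) is precisely the $(b-t,w-t-1)$-evasive bound supplied in (2), so the given $g_1$ is admissible. It then remains to exhibit an admissible $g_2$, namely a bound on $\dim_{\mathbb{F}}(J)$ over all $(b-a,w-v) = (t,t)$-evasive subspaces $J$ of $\mathbb{E}^{k-a} = \mathbb{E}^{k-b+t}$. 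Setting $k' = k-b+t$, I would split on $\dim_{\mathbb{F}}(J) \leqslant k'$ versus $\dim_{\mathbb{F}}(J) \geqslant k'+1$, the latter controlled by (2) of Lemma 2.4 (with $h = t$) as $\dim_{\mathbb{F}}(J) \leqslant k'm/(t+1)$; hence $g_2 = \max\{k', k'm/(t+1)\}$ works. Substituting and using $k' + (w-t) = k-b+w$ converts $\max\{g_1, g_2+v\}$ into the claimed $\max\{g_1, k-b+w, (k-b+t)m/(t+1) + w - t\}$.

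All arithmetic here is routine, and the structural content is just an assembly of (1)--(3) of Lemma 2.4 with the second isomorphism theorem. The one step deserving real care is the dichotomy in (1): I must check that the failure of $(a,v-1)$-evasiveness genuinely produces a subspace $A$ on which the intersection dimension equals $v$ (this is exactly where $(a,v)$-evasiveness, rather than mere generation, is used), and that the degenerate values such as $v = 0$ --- for which $(a,v-1)$-evasiveness is vacuously impossible and one lands in the second case --- and $a = 0$ are absorbed cleanly. Beyond this bookkeeping I do not expect a substantive obstacle.
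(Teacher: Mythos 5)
Your proposal is correct and follows essentially the same route as the paper's own proof: the identical dichotomy on whether $U$ is $(a,v-1)$-evasive (with the $(a,v)$-evasiveness forcing $\dim_{\mathbb{F}}(U\cap A)=v$ and part (3) of the evasiveness lemma applied to $X/A$), and the same specialization $(a,v)=(b-t,w-t)$ with $g_{2}=\max\{k-b+t,(k-b+t)m/(t+1)\}$ for part (2). The only differences are cosmetic: you make explicit the isomorphism-invariance of evasiveness and the handling of degenerate values, which the paper leaves implicit.
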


\begin{proof}
{\bf{(1)}}\,\,If $U$ is $(a,v-1)$-evasive in $X$, then we have $\dim_{\mathbb{F}}(U)\leqslant g_{1}$, as desired. Hence in what follows, suppose that there exists $A\leqslant_{\mathbb{E}}X$ such that $\dim_{\mathbb{E}}(A)=a$, $\dim_{\mathbb{F}}(U\cap A)\geqslant v$. This, together with the $(a,v)$-evasiveness of $U$, implies that $\dim_{\mathbb{F}}(U\cap A)=v$. By (3) of Lemma 2.3, $(U+A)/A$ is $(b-a,w-v)$-evasive in the $(k-a)$-dimensional $\mathbb{E}$-vector space $X/A$. It follows that $\dim_{\mathbb{F}}((U+A)/A)\leqslant g_{2}$, which further implies that $\dim_{\mathbb{F}}(U)\leqslant g_{2}+v$, as desired.

{\bf{(2)}}\,\,From (1) of Lemma 2.3, we deduce that $b\leqslant w$, and $U$ is $(b-t,w-t)$-evasive in $X$. Let $g_2\triangleq\max\{k-b+t,(k-b+t)m/(t+1)\}$. By (2) of Lemma 2.3, we infer that $\dim_{\mathbb{F}}(J)\leqslant g_2$ for all $(t,t)$-evasive $\mathbb{F}$-subspace $J$ in $\mathbb{E}^{k-b+t}$. Now an application of (1) to $(b-t,w-t)$ and $(b,w)$ yields that $\dim_{\mathbb{F}}(U)\leqslant \max\{g_{1},g_{2}+w-t\}$, which further implies the desired result.
\end{proof}

\setlength{\parindent}{2em}
\begin{remark}
If we set $w=b+1$, then (2) of Proposition 5.2 recovers [8, Lemma 3.11].
\end{remark}

Next, with the help of Proposition 5.2, we derive the following lemma.

\setlength{\parindent}{0em}
\begin{lemma}
{\bf{(1)}}\,\,Let $\lambda\in\mathbb{N}$, $u\in\mathbb{N}$. Moreover, let $k\in\mathbb{N}$ such that $k\geqslant\lambda$ and
\begin{equation}k^{2}-(m+\lambda-u-2)k\geqslant(\lambda-1)u+\lambda.\end{equation}
Then, it holds that $k\leftrightarrow(\lambda,0,u)$.

{\bf{(2)}}\,\,Let $\lambda\in\mathbb{N}$, $u\in\mathbb{N}$, and let $t\in\mathbb{N}$ such that $t^{2}+(\lambda+u+2-m)t\geqslant(m-1)\lambda-u$. Then, for any $a\in\mathbb{Z}^{+}$, it holds that $(\lambda+t,a-1,u+1)\curlyeqprec(\lambda,a,u)$.
\end{lemma}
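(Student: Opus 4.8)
The plan is to prove both parts by reducing the defining condition of $\leftrightarrow$ to a single quadratic inequality and then feeding in the dimension bounds for evasive subspaces already available: Lemma 2.3(2) for part (1), and Proposition 5.2(2) for part (2). In each case I would argue by contradiction against the dimension \emph{one above} the target; this turns the inequality we must verify into a \emph{strict} one, and since all dimensions are integers the strict inequality is exactly the (seemingly off-by-one) algebraic hypothesis in the statement. Matching the non-strict version directly would fail by one, so pinning down the correct threshold is the crux.

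For part (1), fix $n\geqslant k$ and a $(n-\lambda,n-\lambda)$-evasive $U\leqslant_{\mathbb{F}}\mathbb{E}^{n}$; the goal is $\dim_{\mathbb{F}}(U)\leqslant n+u$. If $\lambda=0$ this is immediate, since $(n,n)$-evasiveness applied to $M=\mathbb{E}^{n}$ already forces $\dim_{\mathbb{F}}(U)\leqslant n$. For $\lambda\geqslant1$ I would suppose $\dim_{\mathbb{F}}(U)\geqslant n+u+1\geqslant n+1$ and apply Lemma 2.3(2) (with ambient $\mathbb{E}$-dimension $n$ and $h=n-\lambda$) to get $\dim_{\mathbb{F}}(U)\leqslant nm/(n-\lambda+1)$. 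Combining the two yields $(n+u+1)(n-\lambda+1)\leqslant nm$, which rearranges to $\Phi(n)\leqslant(u+1)(\lambda-1)$, where $\Phi(n)=n^{2}-(m+\lambda-u-2)n$. The hypothesis is $\Phi(k)\geqslant(\lambda-1)u+\lambda=(u+1)(\lambda-1)+1$; since for $\lambda\geqslant1$ it gives $\Phi(k)\geqslant(\lambda-1)u+\lambda\geqslant1>0$, which (as $k>0$) places $k$ beyond the vertex of $\Phi$, the function $\Phi$ is nondecreasing on $[k,\infty)$, so $\Phi(n)\geqslant\Phi(k)>(u+1)(\lambda-1)$, a contradiction. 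Thus $\dim_{\mathbb{F}}(U)\leqslant n+u$, and since $k\geqslant\lambda$ is assumed, $k\leftrightarrow(\lambda,0,u)$ follows.

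For part (2), assume $k\leftrightarrow(\lambda+t,a-1,u+1)$ (note $a\in\mathbb{Z}^{+}$ makes $a-1\in\mathbb{N}$), so in particular $k\geqslant\lambda+t\geqslant\lambda$. Given $n\geqslant k$ and a $(n-\lambda,n-\lambda+a)$-evasive $U\leqslant_{\mathbb{F}}\mathbb{E}^{n}$, I would apply Proposition 5.2(2) with ambient $\mathbb{E}$-dimension $n$, with $(b,w)=(n-\lambda,n-\lambda+a)$, and with the given $t$, which lies in $\{0,\dots,n-\lambda\}$ because $n-\lambda\geqslant k-\lambda\geqslant t$. The key point is the choice of $g_{1}$: the subspaces to be bounded are exactly the $(n-\lambda-t,\,n-\lambda-t+(a-1))$-evasive ones in $\mathbb{E}^{n}$, and the hypothesis $k\leftrightarrow(\lambda+t,a-1,u+1)$ applied at $n'=n$ bounds their dimension by $n+(a-1)+(u+1)=n+a+u$, so $g_{1}=n+a+u$ is admissible. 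Proposition 5.2(2) then gives
$$\dim_{\mathbb{F}}(U)\leqslant\max\left\{n+a+u,\ n+a,\ \frac{(\lambda+t)m}{t+1}+n-\lambda+a-t\right\}.$$
Since $n+a\leqslant n+a+u$, it remains to show the last term is $<n+a+u+1$; as $\dim_{\mathbb{F}}(U)$ is an integer this forces $\dim_{\mathbb{F}}(U)\leqslant n+a+u$. That strict inequality is equivalent to $(\lambda+t)m<(u+1+\lambda+t)(t+1)$, i.e. to $t^{2}+(\lambda+u+2-m)t>(m-1)\lambda-u-1$, and by integrality this is precisely the hypothesis $t^{2}+(\lambda+u+2-m)t\geqslant(m-1)\lambda-u$. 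Hence $k\leftrightarrow(\lambda,a,u)$, establishing $(\lambda+t,a-1,u+1)\curlyeqprec(\lambda,a,u)$.

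The main obstacle is not any single estimate but the bookkeeping that makes the two stated conditions come out exactly right: one must (a) contradict the dimension $n+u+1$ (resp. $n+a+u+1$) rather than $n+u$ (resp. $n+a+u$) so that the inequality becomes strict, (b) use integrality of $\dim_{\mathbb{F}}$ to pass from that strict inequality back to the non-strict hypothesis, and (c) in part (1), justify that $\Phi$ is monotone on $[k,\infty)$ so that the single-point hypothesis at $k$ propagates to all $n\geqslant k$. I expect the vertex/monotonicity verification and the routine but error-prone expansions in (b) to demand the most care, and the $\lambda=0$ degeneracy in part (1) must be peeled off separately since there $\Phi(k)>0$ can fail.
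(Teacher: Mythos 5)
Your proof is correct and takes essentially the same route as the paper: part (1) reduces to Lemma 2.3(2) and propagates the quadratic inequality from $k$ to all $n\geqslant k$, and part (2) applies Proposition 5.2(2) with $g_{1}=n+a+u$ supplied by the hypothesis $k\leftrightarrow(\lambda+t,a-1,u+1)$, finishing with the same integrality computation. The only cosmetic difference is in part (1), where you justify $\Phi(n)\geqslant\Phi(k)$ by monotonicity of the quadratic past its vertex, while the paper factors $f=x^{2}-(m+\lambda-u-2)x-(\lambda-1)u-\lambda$ as $(x-b)(x-c)$ with $b<0<c$ and observes that $f(n)<0$ together with $0\leqslant k\leqslant n$ forces $f(k)<0$; both verifications are sound.
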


\begin{proof}
{\bf{(1)}}\,\,Let $n\in\mathbb{N}$ with $n\geqslant k$, and let $U\leqslant_{\mathbb{F}}\mathbb{E}^{n}$ be $(n-\lambda,n-\lambda)$-evasive in $\mathbb{E}^{n}$. It suffices to show that $\dim_{\mathbb{F}}(U)\leqslant n+u$. By way of contradiction, suppose that $\dim_{\mathbb{F}}(U)\geqslant n+u+1$. Then, we have $\lambda\geqslant1$; moreover, from (2) of Lemma 2.3, we deduce that $n+u+1\leqslant mn/(n-\lambda+1)$, which further implies that $n^{2}-(m+\lambda-u-2)n\leqslant(\lambda-1)u+\lambda-1$. It then follows that $f(n)<0$ for $f\triangleq x^{2}-(m+\lambda-u-2)x-(\lambda-1)u-\lambda\in\mathbb{R}[x]$. Since $\lambda\geqslant1$, there exists $b,c\in\mathbb{R}$ such that $b<0<c$ and $f=(x-b)(x-c)$. By $f(n)<0$, we have $b<n<c$, which, together with $0\leqslant k\leqslant n$, implies that $b<k<c$, which further implies that $f(k)<0$, a contradiction to (5.4), as desired.

{\bf{(2)}}\,\,Fix $a\in\mathbb{Z}^{+}$. Let $k\in\mathbb{N}$ such that $k\leftrightarrow(\lambda+t,a-1,u+1)$, and we show that $k\leftrightarrow(\lambda,a,u)$. We note that $k\geqslant\lambda+t\geqslant\lambda$. Now let $n\in\mathbb{N}$ with $n\geqslant k$. Then, for any $J\leqslant_{\mathbb{F}}\mathbb{E}^{n}$ such that $J$ is $(n-\lambda-t,n-\lambda+a-t-1)$-evasive in $\mathbb{E}^{n}$, it holds that $\dim_{\mathbb{F}}(J)\leqslant n+(a-1)+(u+1)=n+a+u$. Let $U\leqslant_{\mathbb{F}}\mathbb{E}^{n}$ be $(n-\lambda,n-\lambda+a)$-evasive in $\mathbb{E}^{n}$. It suffices to show that $\dim_{\mathbb{F}}(U)\leqslant n+a+u$. Indeed, by $n\geqslant k\geqslant\lambda+t$, we have $t\in\{0,1,\dots,n-\lambda\}$. It then follows from Proposition 5.2 that
\begin{equation}\dim_{\mathbb{F}}(U)\leqslant\max\left\{n+a+u,n+a,\frac{(\lambda+t)m}{t+1}+n-\lambda+a-t\right\}.\end{equation}
Hence if $\dim_{\mathbb{F}}(U)\geqslant n+a+u+1$, then via some straightforward computation, (5.5) implies that $t^{2}+(\lambda+u+2-m)t\leqslant (m-1)\lambda-u-1$, a contradiction, as desired.
\end{proof}

\setlength{\parindent}{2em}
Now we state and prove the main result of this subsection.

\setlength{\parindent}{0em}
\begin{theorem}
{\bf{(1)}}\,\,Let $\lambda\in\mathbb{N}$, $u\in\mathbb{N}$, and let $(g_i\mid i\in\mathbb{Z}^{+})$ be a sequence of non-negative integers satisfy the following condition:
\begin{equation}\forall~s\in\mathbb{N}:{g_{s+1}}^{2}+\left(\lambda+u+2-m+\left(\sum_{i=1}^{s}g_{i}\right)+s\right)g_{s+1}\geqslant(m-1)\lambda+(m-1)\left(\sum_{i=1}^{s}g_{i}\right)-u-s.\end{equation}
Moreover, let $a\in\mathbb{N}$, and let $k\in\mathbb{N}$ such that $k\geqslant\lambda+\left(\sum_{i=1}^{a}g_{i}\right)$ and
\begin{equation}k^{2}-\left(m+\lambda-u+\left(\sum_{i=1}^{a}g_{i}\right)-a-2\right)k\geqslant\left(\lambda+\left(\sum_{i=1}^{a}g_{i}\right)-1\right)(u+a)+\lambda+\left(\sum_{i=1}^{a}g_{i}\right).\end{equation}
Then, it holds that $k\leftrightarrow\left(\lambda,a,u\right)$.

{\bf{(2)}}\,\,Let $\lambda\in\mathbb{N}$, $u\in\mathbb{N}$. Suppose that either $m=2$, or $(m=3$, $2u\geqslant\lambda)$, or $(m=4$, $u\geqslant\lambda+1)$. Then, for any $a\in\mathbb{N}$, it holds that $(a+\lambda+1)\leftrightarrow(\lambda,a,u)$.
\end{theorem}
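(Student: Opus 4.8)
The plan is to run an induction on $a$ after first unifying the three hypotheses into a single numerical condition. For integers $\lambda,u\in\mathbb{N}$, each of the three listed cases is equivalent to the single inequality
$$2u\geqslant(m-2)\lambda+(m-3),$$
which I abbreviate as $C(\lambda,u)$: for $m=2$ it reads $2u\geqslant-1$ (always true); for $m=3$ it reads $2u\geqslant\lambda$; and for $m=4$ it reads $2u\geqslant2\lambda+1$, equivalent to $u\geqslant\lambda+1$ since $2u$ is even. The statement to prove by induction on $a$ is then: whenever $C(\lambda,u)$ holds, $(a+\lambda+1)\leftrightarrow(\lambda,a,u)$.

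For the base case $a=0$ I would apply Lemma 5.2(1) with $k=\lambda+1$. Its only nontrivial hypothesis is $k^{2}-(m+\lambda-u-2)k\geqslant(\lambda-1)u+\lambda$; substituting $k=\lambda+1$, the left-hand side factors as $(\lambda+1)(u-m+3)$, and the inequality reduces after a short computation to exactly $C(\lambda,u)$, giving $(\lambda+1)\leftrightarrow(\lambda,0,u)$.

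For the inductive step the key tool is Lemma 5.2(2) with $t=1$, whose hypothesis $t^{2}+(\lambda+u+2-m)t\geqslant(m-1)\lambda-u$ likewise collapses to exactly $C(\lambda,u)$ when $t=1$. Thus $C(\lambda,u)$ yields the reduction $(\lambda+1,a-1,u+1)\curlyeqprec(\lambda,a,u)$. I would then invoke the induction hypothesis at the shifted parameters $(\lambda+1,u+1)$: its natural index is $(a-1)+(\lambda+1)+1=a+\lambda+1$, so it delivers $(a+\lambda+1)\leftrightarrow(\lambda+1,a-1,u+1)$, and combining this with the $\curlyeqprec$ relation gives $(a+\lambda+1)\leftrightarrow(\lambda,a,u)$, closing the induction.

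The single point requiring care, and the origin of the restriction $m\leqslant4$, is the propagation of the condition: applying the induction hypothesis at $(\lambda+1,u+1)$ requires $C(\lambda,u)\Rightarrow C(\lambda+1,u+1)$. Writing both inequalities out, this amounts to $m-3\geqslant2m-7$, i.e.\ $m\leqslant4$, which is precisely where the three admissible values of $m$ are consumed; for $m\geqslant5$ the condition weakens at each step and the argument no longer closes.
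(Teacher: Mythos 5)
There is a genuine gap: your argument proves only part (2) of the theorem, and part (1) is never addressed. Part (1) is the general statement: for an \emph{arbitrary} sequence $(g_i\mid i\in\mathbb{Z}^{+})$ of non-negative integers satisfying the hypothesis (5.6), and an \emph{arbitrary} $k$ satisfying $k\geqslant\lambda+\left(\sum_{i=1}^{a}g_{i}\right)$ together with the quadratic condition (5.7), one has $k\leftrightarrow(\lambda,a,u)$. Your induction is built on the unified condition $C(\lambda,u):\ 2u\geqslant(m-2)\lambda+(m-3)$ and on Lemma 5.2(2) with the fixed shift $t=1$; this is precisely the specialization $g_i\equiv1$, $k=a+\lambda+1$ of part (1), and it cannot yield the general statement --- part (1) is valid for every $m$ (with suitable sequences and $k$), whereas your condition $C$ only propagates when $m\leqslant4$, as you yourself observe. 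The omission matters: part (1) is what the paper uses later (Theorem 5.2 invokes it with general sequences $(a_i)$ to derive lower bounds on $\varpi_{\mathbb{E}/\mathbb{F}}(k,r)$), so it is not removable generality. To prove it, run your same reduction with the step-dependent shift $t=g_{c+1}$ at stage $c$: the hypothesis of Lemma 5.2(2) for the parameters $\lambda+\sum_{i=1}^{c}g_i$, $u+c$, $t=g_{c+1}$ is exactly (5.6) at $s=c$, giving the chain $\left(\lambda+\sum_{i=1}^{c+1}g_{i},\,b-c-1,\,u+c+1\right)\curlyeqprec\left(\lambda+\sum_{i=1}^{c}g_{i},\,b-c,\,u+c\right)$; then close with Lemma 5.2(1) applied at $k$, whose hypothesis is exactly (5.7), to get $k\leftrightarrow\left(\lambda+\sum_{i=1}^{a}g_{i},\,0,\,u+a\right)$, and descend along the chain.

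For part (2) itself, your proof is correct and is essentially the paper's own route in unrolled form: the paper deduces (2) from (1) by taking $g_i=1$ for all $i$ and $k=a+\lambda+1$, and verifying (5.6) and (5.7) in that case reduces to exactly your two computations (the base-case inequality at $k=\lambda+1$, and the propagation inequality $m-3\geqslant2m-7$, i.e.\ $m\leqslant4$). Your repackaging via the single inequality $C(\lambda,u)$ is a clean way of exhibiting where the three hypotheses on $(m,\lambda,u)$ come from, but it is a reorganization of the same argument, not a substitute for part (1).
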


\begin{proof}
{\bf{(1)}}\,\,First, for any $b\in\mathbb{Z}^{+}$ and $c\in\{0,1,\dots,b-1\}$, along with (5.6), an application of (2) of Lemma 5.2 to $\lambda+(\sum_{i=1}^{c}g_{i})\in\mathbb{N}$, $u+c\in\mathbb{N}$, $g_{c+1}\in\mathbb{N}$ and $b-c\in\mathbb{Z}^{+}$ immediately implies that \begin{equation}\left(\lambda+\left(\sum_{i=1}^{c+1}g_{i}\right),b-c-1,u+c+1\right)\curlyeqprec\left(\lambda+\left(\sum_{i=1}^{c}g_{i}\right),b-c,u+c\right).\end{equation}
Now by (5.7) and (1) of Lemma 5.2, we have $k\leftrightarrow\left(\lambda+\left(\sum_{i=1}^{a}g_{i}\right),0,u+a\right)$; moreover, by (5.8), we have $\left(\lambda+\left(\sum_{i=1}^{a}g_{i}\right),0,u+a\right)\curlyeqprec\left(\lambda,a,u\right)$, which further implies that $k\leftrightarrow(\lambda,a,u)$, as desired.

{\bf{(2)}}\,\,Define $(g_i\mid i\in\mathbb{Z}^{+})$ as $g_i=1$ for all $i\in\mathbb{Z}^{+}$. Via some straightforward computation, one can verify that $\lambda$, $u$ and $(g_i\mid i\in\mathbb{Z}^{+})$ satisfy (5.6); moreover, for any $a\in\mathbb{N}$, (5.7) holds true for $k=a+\lambda+1$. Hence the desired result immediately follows from (1).
\end{proof}

\setlength{\parindent}{2em}
We end this subsection with the following corollary, which will be used in Section 5.2.

\setlength{\parindent}{0em}
\begin{corollary}
Fix $(\lambda,s,k)\in\mathbb{N}^{3}$ with $k\geqslant s\geqslant\lambda+1$. Let $X$ be a $k$-dimensional vector space over $\mathbb{E}$, and let $J\leqslant_{\mathbb{F}}X$ be $(k-\lambda,2k-\lambda-s)$-evasive in $X$. Then, the following three statements hold:

{\bf{(1)}}\,\,If $m=2$, then $\dim_{\mathbb{F}}(J)\leqslant 2k-s$;

{\bf{(2)}}\,\,If $m=3$, then $\dim_{\mathbb{F}}(J)\leqslant 2k-s+\lceil\frac{\lambda}{2}\rceil$;

{\bf{(3)}}\,\,If $m=4$, then $\dim_{\mathbb{F}}(J)\leqslant 2k-s+\lambda+1$.
\end{corollary}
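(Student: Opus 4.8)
The plan is to recognize this corollary as a direct specialization of part (2) of Theorem 5.1 through the relation $\leftrightarrow$ of Notation 5.1, so that no new argument beyond careful parameter matching is required. First I would set $a \triangleq k-s$; since $s \leqslant k$ we have $a \in \mathbb{N}$, and since $s \geqslant \lambda+1$ we have $k \geqslant a+\lambda+1$. Identifying $X$ with $\mathbb{E}^{k}$ — which is harmless because $(h,r)$-evasiveness is preserved under $\mathbb{E}$-linear isomorphism — the hypothesis that $J$ is $(k-\lambda,2k-\lambda-s)$-evasive reads, upon setting $n\triangleq k$, precisely as the assertion that $J$ is $(n-\lambda,n-\lambda+a)$-evasive in $\mathbb{E}^{n}$, since $n-\lambda+a = 2k-\lambda-s$.

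Next, for a value $u \in \mathbb{N}$ to be chosen per case, I would invoke Theorem 5.1(2), which yields $(a+\lambda+1)\leftrightarrow(\lambda,a,u)$ whenever the stated condition on $(m,u)$ holds. Unwinding Notation 5.1, this says that for every $n\geqslant a+\lambda+1$, any $(n-\lambda,n-\lambda+a)$-evasive $\mathbb{F}$-subspace of $\mathbb{E}^{n}$ has $\mathbb{F}$-dimension at most $n+a+u$. Because $k\geqslant a+\lambda+1$, the instance $n=k$ is admissible, and applying it to our $J$ gives $\dim_{\mathbb{F}}(J)\leqslant k+a+u = 2k-s+u$.

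It then remains only to pick the smallest admissible $u$ in each regime. For $m=2$ there is no constraint, so $u=0$ gives the bound $2k-s$; for $m=3$ the requirement $2u\geqslant\lambda$ forces $u=\lceil\lambda/2\rceil$, giving $2k-s+\lceil\lambda/2\rceil$; and for $m=4$ the requirement $u\geqslant\lambda+1$ forces $u=\lambda+1$, giving $2k-s+\lambda+1$. Since each of these values lies in $\mathbb{N}$ and satisfies the hypothesis of Theorem 5.1(2), the three bounds follow at once.

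As the corollary is purely a reformulation, there is no substantive obstacle; the only points demanding care are the bookkeeping verifications that $a=k-s\in\mathbb{N}$, that $k\geqslant a+\lambda+1$ (so that $n=k$ is a legitimate instance in the definition of $\leftrightarrow$), and that the second evasiveness index indeed satisfies $2k-\lambda-s=n-\lambda+a$.
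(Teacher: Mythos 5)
Your proposal is correct and follows essentially the same route as the paper: both set $a=k-s$, invoke Theorem 5.1(2) to get $(k-s+\lambda+1)\leftrightarrow(\lambda,k-s,u)$ with $u=0$, $\lceil\lambda/2\rceil$, $\lambda+1$ for $m=2,3,4$ respectively, and apply the definition of $\leftrightarrow$ at $n=k$ using $s\geqslant\lambda+1$. Your write-up is slightly more careful than the paper's (which proves only case (2) and leaves the identification of $X$ with $\mathbb{E}^{k}$ implicit), but there is no substantive difference.
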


\begin{proof}
We only prove (2) and the proofs of (1), (3) are similar. By (2) of Theorem 5.1, we have $k-s+\lambda+1\leftrightarrow(\lambda,k-s,\lceil\frac{\lambda}{2}\rceil)$; moreover, $\dim_{\mathbb{E}}(X)=k\geqslant k-s+\lambda+1$, and $J$ is $(k-\lambda,k-\lambda+(k-s))$-evasive in $X$. It then follows that $\dim_{\mathbb{F}}(J)\leqslant k+(k-s)+\lceil\frac{\lambda}{2}\rceil=2k-s+\lceil\frac{\lambda}{2}\rceil$, as desired.
\end{proof}

\subsection{Lower bounds for $\varpi_{\mathbb{E}/\mathbb{F}}(k,r)$}

\setlength{\parindent}{2em}
Throughout this subsection, suppose that $m\geqslant2$. We will focus on the lower bounds for $\varpi_{\mathbb{E}/\mathbb{F}}(k,r)$. The following theorem is the first main result of this subsection, in which we show that the left hand side of (5.3) is not tight in general.

\setlength{\parindent}{0em}
\begin{theorem}
Let $r\in\mathbb{Z}^{+}$, and let $(a_i\mid i\in\mathbb{Z}^{+})$ be a sequence of non-negative integers satisfy the following condition:
\begin{equation}\forall~s\in\mathbb{N}:{a_{s+1}}^{2}+\left(m(r-1)+2+\left(\sum_{i=1}^{s}a_i\right)+s\right)a_{s+1}\geqslant(m-1)\left(\sum_{i=1}^{s}a_i\right)+m-s.\end{equation}
Moreover, let $w\in\mathbb{N}$, and let $k\in\mathbb{N}$ such that $k\geqslant r+1+\left(\sum_{i=1}^{w}a_i\right)$, $(m-1)(k-r)\geqslant w$ and
\begin{equation}k^{2}-\left(2r-m(r-1)+\left(\sum_{i=1}^{w}a_i\right)-w\right)k\geqslant\left(\left(\sum_{i=1}^{w}a_i\right)+r\right)((m-1)r+w)+1.\end{equation}
Then, for any $\varepsilon\in\mathbb{N}$ with $\varepsilon\geqslant k$, it holds that $\varpi_{\mathbb{E}/\mathbb{F}}(\varepsilon,r)\geqslant (m-1)r+\varepsilon+w+1$.
\end{theorem}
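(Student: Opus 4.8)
The plan is to reduce the statement entirely to Theorem 5.1 and Corollary 4.1 by a single change of parameters, so that no new estimate is needed. Set $\lambda=r+1$, $u=(m-1)r-1$ (which lies in $\mathbb{N}$ since $m\geqslant2$ and $r\geqslant1$), $a=w$, and $g_i=a_i$ for all $i$. The first step is to check that, under this substitution, the hypotheses of Theorem 5.1(1) are exactly those imposed in the present statement. A direct computation gives $\lambda+u+2-m=m(r-1)+2$ and $(m-1)\lambda-u=m$, so that condition (5.6) becomes word-for-word condition (5.9); similarly the coefficient $m+\lambda-u+(\sum_{i=1}^{a}g_i)-a-2$ collapses to $2r-m(r-1)+(\sum_{i=1}^{w}a_i)-w$ and the right-hand side of (5.7) collapses to $((\sum_{i=1}^{w}a_i)+r)((m-1)r+w)+1$, so that (5.7) becomes (5.10), while the size condition $k\geqslant\lambda+\sum_{i=1}^{a}g_i$ becomes $k\geqslant r+1+\sum_{i=1}^{w}a_i$. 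Applying Theorem 5.1(1) therefore yields $k\leftrightarrow(r+1,\,w,\,(m-1)r-1)$.

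Next I would promote this from $k$ to an arbitrary $\varepsilon\geqslant k$. This is immediate from Notation 5.1(1): if $k\leftrightarrow(\lambda,a,u)$ and $\varepsilon\geqslant k$, then $\varepsilon\geqslant k\geqslant\lambda$, and every $n\geqslant\varepsilon$ also satisfies $n\geqslant k$, so the defining evasive-dimension bound continues to hold for all such $n$; hence $\varepsilon\leftrightarrow(r+1,\,w,\,(m-1)r-1)$. Unwinding this relation with $n=\varepsilon$ and identifying $\mathbb{E}^{\varepsilon}$ with $Y\triangleq\mathbb{E}^{[\varepsilon]}$ as $\mathbb{E}$-vector spaces (which preserves both $\mathbb{F}$-dimension and evasiveness), it says precisely that every $\mathbb{F}$-subspace $J$ of $Y$ which is $(\varepsilon-r-1,\,\varepsilon+w-r-1)$-evasive in $Y$ satisfies $\dim_{\mathbb{F}}(J)\leqslant\varepsilon+w+(m-1)r-1$.

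Finally I would feed this into Corollary 4.1 applied to the $\varepsilon$-dimensional space $Y=\mathbb{E}^{[\varepsilon]}$ with $b\triangleq(m-1)r+\varepsilon+w$. With $k=\varepsilon$ there, the evasive parameters $(k-r-1,\,b-mr-1)$ equal $(\varepsilon-r-1,\,\varepsilon+w-r-1)$, so the hypothesis of Corollary 4.1 is exactly the bound $\dim_{\mathbb{F}}(J)\leqslant b-1$ obtained above. The side condition $b\leqslant\varepsilon m$ reduces to $(m-1)(\varepsilon-r)\geqslant w$, which follows from $\varepsilon\geqslant k$ together with the hypothesis $(m-1)(k-r)\geqslant w$, and $r\leqslant\varepsilon-1$ follows from $\varepsilon\geqslant k\geqslant r+1$. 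Corollary 4.1 then gives $\dim_{\mathbb{F}}(A)\geqslant b+1$ for every cutting $r$-blocking set $A$ of $\mathbb{E}^{[\varepsilon]}$, and minimizing over such $A$ yields $\varpi_{\mathbb{E}/\mathbb{F}}(\varepsilon,r)\geqslant(m-1)r+\varepsilon+w+1$, as claimed.

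The hard part is conceptually slight: once one guesses the value $u=(m-1)r-1$, everything follows mechanically. That value is in fact forced, since matching the target $b-1=(m-1)r+\varepsilon+w-1$ against the generic output $\varepsilon+w+u$ of the $\leftrightarrow$-relation leaves no choice. The genuine labor is thus the bookkeeping of verifying that (5.9)--(5.10) are the verbatim specializations of (5.6)--(5.7); this is elementary algebra but must be carried out carefully, as the expressions involve several cancellations where an off-by-one slip would be easy to miss.
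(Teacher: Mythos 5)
Your proof is correct and follows exactly the paper's own route: the same substitution $\lambda=r+1$, $u=(m-1)r-1$, $a=w$, $g_i=a_i$ into Theorem 5.1(1) to obtain $k\leftrightarrow(r+1,w,(m-1)r-1)$, followed by the same application of Corollary 4.1 with $b=(m-1)r+\varepsilon+w$, including the same verification that $(m-1)(k-r)\geqslant w$ gives $b\leqslant\varepsilon m$. The only difference is presentational: you spell out the algebraic cancellations and the trivial promotion from $k$ to $\varepsilon$, which the paper leaves implicit.
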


\begin{proof}
Set $\lambda=r+1$, $u=(m-1)r-1$, $(g_i\mid i\in\mathbb{Z}^{+})=(a_i\mid i\in\mathbb{Z}^{+})$ and $a=w$ in Theorem 5.1. From (5.9) and (5.10), we deduce that $k\leftrightarrow(r+1,w,(m-1)r-1)$. Let $\varepsilon\in\mathbb{N}$ with $\varepsilon\geqslant k$. By $(m-1)(k-r)\geqslant w$, we have $(m-1)r+\varepsilon+w\leqslant\varepsilon m$; moreover, for any $(\varepsilon-r-1,\varepsilon-r-1+w)$-evasive $\mathbb{F}$-subspace $J$ in $\mathbb{E}^{\varepsilon}$, it holds that $\dim_{\mathbb{F}}(J)\leqslant \varepsilon+w+(m-1)r-1$. Hence an application of Corollary 4.1 to $b=(m-1)r+\varepsilon+w$ immediately implies the desired result.
\end{proof}

\setlength{\parindent}{2em}
Now we consider some special cases of Theorem 5.2. Part (1) of the following corollary is exactly (2) of Lemma 5.1 when $\mathbb{E}$ is finite, which we recover among other results.

\setlength{\parindent}{0em}
\begin{corollary}
{\bf{(1)}}\,\,For any $k\in\mathbb{N}$ with $k\geqslant\lfloor\sqrt{m}\rfloor+2$, it holds that $\varpi_{\mathbb{E}/\mathbb{F}}(k,1)\geqslant m+k$.

{\bf{(2)}}\,\,For any $k\in\mathbb{N}$ such that $k^{2}-\lceil\sqrt{m+1}\rceil k\geqslant\lceil\sqrt{m+1}\rceil m+1$, it holds that $\varpi_{\mathbb{E}/\mathbb{F}}(k,1)\geqslant m+k+1$.

{\bf{(3)}}\,\,For any $(k,r)\in\mathbb{N}^{2}$ such that $k\geqslant r+2$, $r\geqslant2$, it holds that $\varpi_{\mathbb{E}/\mathbb{F}}(k,r)\geqslant (m-1)r+k+1$.

{\bf{(4)}}\,\,For any $(k,r)\in\mathbb{N}^{2}$ such that $k\geqslant r+3$, $r\geqslant3$, it holds that $\varpi_{\mathbb{E}/\mathbb{F}}(k,r)\geqslant (m-1)r+k+2$;

{\bf{(5)}}\,\,For $k\in\mathbb{N}$ such that either $k\geqslant 6$ or $(k=5,m\leqslant7)$, it holds that $\varpi_{\mathbb{E}/\mathbb{F}}(k,2)\geqslant 2m+k$.
\end{corollary}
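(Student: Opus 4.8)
The plan is to obtain each of the five bounds as a direct specialization of Theorem 5.2, in each case fixing the parameter $r$, the integer $w$, and a sequence $(a_i\mid i\in\mathbb{Z}^{+})$ meeting (5.9), and then reading off the conclusion $\varpi_{\mathbb{E}/\mathbb{F}}(\varepsilon,r)\geqslant(m-1)r+\varepsilon+w+1$ at $\varepsilon$ equal to the $k$ of the statement. The choice of $w$ is dictated by the target: a bound of the form $(m-1)r+k+c$ forces $w=c-1$, so I would take $w=0$ for (1) and (3) and $w=1$ for (2), (4) and (5). Because only the prefix $a_1,\dots,a_w$ enters (5.10) and the side conditions, while the tail $a_{w+1},a_{w+2},\dots$ can be chosen as large as needed to meet (5.9) for $s\geqslant w$, the whole task reduces to selecting $a_1,\dots,a_w$ as small as possible (which minimizes the threshold on $k$) and then verifying the quadratic condition (5.10) together with the cheap side hypotheses $k\geqslant r+1+\sum_{i=1}^{w}a_i$ and $(m-1)(k-r)\geqslant w$.

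For the $w=0$ cases the sequence is immaterial and (5.10) is a pure quadratic in $k$. In (1), taking $r=1$ it becomes $k^{2}-2k\geqslant m$, i.e.\ $(k-1)^{2}\geqslant m+1$; and $k\geqslant\lfloor\sqrt{m}\rfloor+2$ gives $k-1\geqslant\lfloor\sqrt{m}\rfloor+1\geqslant\sqrt{m+1}$ since $m+1\leqslant(\lfloor\sqrt{m}\rfloor+1)^{2}$, so the hypothesis holds and Theorem 5.2 yields $\varpi_{\mathbb{E}/\mathbb{F}}(k,1)\geqslant m+k$. In (3), with $r\geqslant2$, I would check (5.10) at the endpoint $k=r+2$, where the left side minus the right side collapses to $m(r-2)+3\geqslant0$ (this is exactly where $r\geqslant2$ is needed); since the relevant quadratic in $k$ has positive derivative at $k=r+2$, the inequality then propagates to all $k\geqslant r+2$, so a single invocation covers every $\varepsilon\geqslant r+2$.

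For the $w=1$ cases I would read the minimal admissible $a_1$ off the $s=0$ instance of (5.9), namely $a_1^{2}+(m(r-1)+2)a_1\geqslant m$. In (2), with $r=1$ this is $(a_1+1)^{2}\geqslant m+1$, forcing $a_1=\lceil\sqrt{m+1}\rceil-1$; substituting $\sum_{i=1}^{1}a_i=a_1$ into (5.10) makes it collapse exactly to the stated inequality $k^{2}-\lceil\sqrt{m+1}\rceil k\geqslant\lceil\sqrt{m+1}\rceil m+1$, and this inequality already forces $k>\lceil\sqrt{m+1}\rceil$, which supplies the auxiliary hypotheses for free. In (4) and (5), since $m(r-1)+2\geqslant m$ for $r\geqslant2$, the choice $a_1=1$ satisfies (5.9); I would then evaluate (5.10) at $k=r+3$ for (4), where the surplus simplifies to $m(r-3)+7\geqslant0$ (whence $r\geqslant3$) with positive $k$-derivative extending it to all $k\geqslant r+3$, and at $k\in\{5,6\}$ for (5) with $r=2$, where (5.10) reads $k^{2}+(m-4)k-6m+2\geqslant0$ and equals $7-m$ at $k=5$ (forcing $m\leqslant7$) and $14>0$ at $k=6$.

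There is no genuine conceptual obstacle here: once the parameters are fixed the argument is bookkeeping. The two points demanding care are (i) confirming that the \emph{minimal} choice of $a_1$ in (1) and (2) makes (5.10) reduce precisely to the clean quoted inequalities, rather than to something weaker; and (ii) checking that (5.10), verified only at a single endpoint $k\in\{r+2,r+3,5,6\}$, is increasing in $k$ there, since that monotonicity is what lets one application of Theorem 5.2 cover the full advertised range of $\varepsilon$ (and, in (5), separates the case $k=5$, which needs $m\leqslant7$, from $k\geqslant6$, which needs no restriction on $m$).
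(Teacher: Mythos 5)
Your proposal is correct and is essentially the paper's own proof: the paper likewise obtains all five parts by specializing Theorem 5.2, with $(r,w)=(1,0)$, $k=\lfloor\sqrt{m}\rfloor+2$ for (1); $r=1$, $w=1$, $a_1=\lfloor\sqrt{m}\rfloor$ for (2) (numerically identical to your $a_1=\lceil\sqrt{m+1}\rceil-1$, since $\lfloor\sqrt{m}\rfloor^{2}\leqslant m<(\lfloor\sqrt{m}\rfloor+1)^{2}$ forces $\lceil\sqrt{m+1}\rceil=\lfloor\sqrt{m}\rfloor+1$); $w=0$, $k=r+2$ for (3); $a_1=1$, $w=1$, $k=r+3$ for (4); and $r=2$, $a_1=1$, $w=1$, $k\in\{5,6\}$ for (5), and your endpoint computations $m(r-2)+3$, $m(r-3)+7$, $7-m$ and $14$ match the required verifications. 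Your monotonicity remark in (3)--(5) is harmless but superfluous, since a single invocation of Theorem 5.2 at the minimal $k$ already yields the bound for every $\varepsilon\geqslant k$.
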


\begin{proof}
All the statements follow from Theorem 5.2 via some verification. More specifically, (1) follows from setting $r=1$, $w=0$ and $k=\lfloor\sqrt{m}\rfloor+2$ in Theorem 5.2, (2) follows from setting $r=1$, $s=0$, $a_1=\lfloor\sqrt{m}\rfloor$ and $w=1$ in Theorem 5.2, (3) follows from setting $w=0$ and $k=r+2$ in Theorem 5.2, (4) follows from setting $s=0$, $a_1=1$, $w=1$ and $k=r+3$ in Theorem 5.2, and (5) follows from setting $r=2$, $s=0$, $a_1=1$, $w=1$ and $k=5,6$ in Theorem 5.2, respectively.
\end{proof}

\setlength{\parindent}{2em}
Now we further improve the left hand side of (5.3) for small $m$. The following theorem is the second main result of this subsection.

\setlength{\parindent}{0em}
\begin{theorem}
{\bf{(1)}}\,\,If $m=3$, then for any $k\in\mathbb{N}$ with $k\geqslant 2$, it holds that $\varpi_{\mathbb{E}/\mathbb{F}}(k,1)\geqslant2k$.

{\bf{(2)}}\,\,If $m=4$, then for any $k\in\mathbb{N}$ with $k\geqslant 3$, it holds that $\varpi_{\mathbb{E}/\mathbb{F}}(k,2)\geqslant2k+3$.
\end{theorem}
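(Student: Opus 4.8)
The plan is to deduce both inequalities from Corollary 4.1, which converts a lower bound on the minimal dimension of a cutting $r$-blocking set of $\mathbb{E}^{[k]}$—and hence, by definition $(5.1)$, a lower bound on $\varpi_{\mathbb{E}/\mathbb{F}}(k,r)$—into an upper bound on the dimensions of a suitable family of evasive subspaces. The required evasive estimates are precisely those supplied by Corollary 5.2 in the cases $m=3$ and $m=4$, while the boundary value of $k$ in each part is dispatched by the general bounds of Corollary 5.1. The essence of the argument is therefore parameter bookkeeping: choose the auxiliary integer $b$ optimally and match the evasiveness type $(k-r-1,\,b-mr-1)$ appearing in Corollary 4.1 with the type $(k-\lambda,\,2k-\lambda-s)$ appearing in Corollary 5.2.

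For part (1) I would set $m=3$ and $r=1$. When $k=2$ the general lower bound in $(5.3)$, namely Corollary 5.1(1), already gives $\varpi_{\mathbb{E}/\mathbb{F}}(2,1)\geqslant (m-1)\cdot 1+2=4=2k$. For $k\geqslant 3$ I would apply Corollary 4.1 to $Y=\mathbb{E}^{[k]}$ with $r=1$ and $b=2k-1$ (note $b\leqslant 3k=km$). Its hypothesis asks that every $(k-2,\,2k-5)$-evasive $\mathbb{F}$-subspace $J$ of $\mathbb{E}^{[k]}$ satisfy $\dim_{\mathbb{F}}(J)\leqslant b-1=2k-2$. This is exactly Corollary 5.2(2) taken with $\lambda=2$ and $s=3$, for which the constraint $k\geqslant s=3\geqslant\lambda+1=3$ holds and the conclusion reads $\dim_{\mathbb{F}}(J)\leqslant 2k-3+\lceil 2/2\rceil=2k-2$. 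Corollary 4.1 then yields $\varpi_{\mathbb{E}/\mathbb{F}}(k,1)\geqslant b+1=2k$.

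For part (2) I would set $m=4$ and $r=2$. When $k=3$ one has $r=k-1$, so Corollary 5.1(3) gives the exact value $\varpi_{\mathbb{E}/\mathbb{F}}(3,2)=(3-1)\cdot 4+1=9=2k+3$. For $k\geqslant 4$ I would apply Corollary 4.1 with $r=2$ and $b=2k+2$ (note $b\leqslant 4k=km$). Its hypothesis now asks that every $(k-3,\,2k-7)$-evasive $\mathbb{F}$-subspace $J$ of $\mathbb{E}^{[k]}$ satisfy $\dim_{\mathbb{F}}(J)\leqslant b-1=2k+1$. This follows from Corollary 5.2(3) with $\lambda=3$ and $s=4$, for which $k\geqslant s=4\geqslant\lambda+1=4$ and the conclusion gives the even stronger estimate $\dim_{\mathbb{F}}(J)\leqslant 2k-4+3+1=2k$. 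Corollary 4.1 then yields $\varpi_{\mathbb{E}/\mathbb{F}}(k,2)\geqslant b+1=2k+3$.

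I expect the only genuinely delicate point to be the optimal choice of $b$ together with the matching parameters $\lambda,s$. In part (2), $b=2k+2$ is the largest value for which Corollary 5.2(3) still applies: pushing to $b=2k+3$ would force $s=3$, violating the standing hypothesis $s\geqslant\lambda+1=4$ of Corollary 5.2, so this route cannot by itself produce a sharper bound. Apart from this parameter tuning—and the separate treatment of the boundary cases $k=2$ in part (1) and $k=3$ in part (2), where $r=k-1$ and the bound comes directly from Corollary 5.1—the argument is routine assembly, since all the analytic work has already been carried out in Corollaries 4.1 and 5.2.
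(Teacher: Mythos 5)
Your proposal is correct and follows essentially the same route as the paper: the boundary cases via Corollary 5.1, and the main cases via Corollary 4.1 combined with Corollary 5.2, with exactly the paper's parameter choices ($r=1$, $b=2k-1$, $\lambda=2$, $s=3$ for part (1), and the analogous $r=2$, $b=2k+2$, $\lambda=3$, $s=4$ for part (2), which the paper leaves implicit as ``similar''). The only cosmetic difference is that for $k=2$ you invoke the lower bound of Corollary 5.1(1) where the paper cites the exact value in Corollary 5.1(3); both give $\varpi_{\mathbb{E}/\mathbb{F}}(2,1)=4=2k$.
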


\begin{proof}
We only prove (1) and the proof of (2) is similar. If $k=2$, then (3) of Corollary 5.1 implies that $\varpi_{\mathbb{E}/\mathbb{F}}(k,1)=2k$, as desired. Therefore in what follows, suppose that $k\geqslant 3$. By (2) of Corollary 5.2, for any $(k-2,2k-5)$-evasive $\mathbb{F}$-subspace $J$ in $\mathbb{E}^{[k]}$, it holds that $\dim_{\mathbb{F}}(J)\leqslant 2k-2$. Hence an application of Corollary 4.1 to $r=1$ and $b=2k-1$ immediately implies the desired result.
\end{proof}

\setlength{\parindent}{2em}
\begin{remark}
In Section 5.3, we will show that if $\mathbb{E}$ is finite and $m=3$, then $\varpi_{\mathbb{E}/\mathbb{F}}(k,1)=2k$.
\end{remark}

\subsection{An upper bound for $\varpi_{\mathbb{E}/\mathbb{F}}(k,r)$}
\setlength{\parindent}{2em}
In this subsection, suppose that $\mathbb{F}$ is finite with $|\mathbb{F}|=q$. We will use Theorem 3.4 to derive a general upper bound for $\varpi_{\mathbb{E}/\mathbb{F}}(k,r)$. As a first step, we compute the number of all the $[n,r+1]$ $r$-minimal codes, as detailed in the following proposition.

\setlength{\parindent}{0em}
\begin{proposition}
Let $(n,r)\in\mathbb{N}^{2}$ with $n\geqslant r+1$. Then, it holds that:
\begin{eqnarray*}
\begin{split}
|\{C\mid\text{$C$ is an $[n,r+1]$ $r$-minimal code}\}|=\left(\sum_{i=mr+1}^{m(r+1)}\delta_{q}(m(r+1),i)\bin_{q}(n,i)\right)\delta_{q^{m}}(r+1,r+1)^{-1}.
\end{split}
\end{eqnarray*}
\end{proposition}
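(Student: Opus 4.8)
The plan is to count the full-rank generator matrices of such codes and then divide by the number of generator matrices each code possesses. First I would invoke the characterization of $r$-minimality for codes of dimension exactly $r+1$: if $C$ is an $[n,r+1]$ code with generator matrix $G\in\Mat_{r+1,n}(\mathbb{E})$ and $U=\langle\col(G)\rangle_{\mathbb{F}}$, then by Theorem 4.4, $C$ is $r$-minimal if and only if $\mathbf{d}_{r+1}(C)\geqslant mr+1$; and since $\dim_{\mathbb{E}}(C)=r+1$ the only $(r+1)$-dimensional subcode of $C$ is $C$ itself, so $\mathbf{d}_{r+1}(C)=\wt(C)$. By Lemma 2.1 we have $\wt(C)=\dim_{\mathbb{F}}(U)$, so the condition becomes $\dim_{\mathbb{F}}(U)\geqslant mr+1$. (Alternatively, this also follows from Theorem 4.3 together with Corollary 4.2(1) applied with $k=r+1$, since a cutting $r$-blocking set of $\mathbb{E}^{[r+1]}$ is a cutting $(k-1)$-blocking set, and such sets are exactly the $\mathbb{F}$-subspaces of $\mathbb{F}$-dimension at least $(k-1)m+1=mr+1$.)

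Next I would translate the count of codes into a count of matrices. Every $[n,r+1]$ code has exactly $|\GL_{r+1}(\mathbb{E})|=\delta_{q^{m}}(r+1,r+1)$ generator matrices of full rank $r+1$, so it suffices to enumerate the full-rank $G\in\Mat_{r+1,n}(\mathbb{E})$ with $\dim_{\mathbb{F}}(U)\geqslant mr+1$ and then divide by $\delta_{q^{m}}(r+1,r+1)$. The observation that removes the full-rank constraint is that $\dim_{\mathbb{F}}(U)\geqslant mr+1$ already forces $\langle U\rangle_{\mathbb{E}}=\mathbb{E}^{[r+1]}$: were $\langle U\rangle_{\mathbb{E}}$ a proper $\mathbb{E}$-subspace, it would have $\mathbb{E}$-dimension at most $r$ and hence $\mathbb{F}$-dimension at most $mr$, contradicting $\dim_{\mathbb{F}}(U)\geqslant mr+1$. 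Since $\langle U\rangle_{\mathbb{E}}=\mathbb{E}^{[r+1]}$ is equivalent to $\rank(G)=r+1$, the full-rank condition is automatic, and I may simply count all $G\in\Mat_{r+1,n}(\mathbb{E})$ with $\dim_{\mathbb{F}}(U)\geqslant mr+1$.

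Then I would compute this matrix count by passing to $\mathbb{F}$-coordinates. Fixing an $\mathbb{F}$-basis of $\mathbb{E}$ identifies $\mathbb{E}^{[r+1]}$ with $\mathbb{F}^{m(r+1)}$ and yields a bijection $\Mat_{r+1,n}(\mathbb{E})\longrightarrow\Mat_{m(r+1),n}(\mathbb{F})$, $G\mapsto\widetilde{G}$, under which each column of $G$ is expanded into a column of $\widetilde{G}$; this bijection satisfies $\dim_{\mathbb{F}}(U)=\rank(\widetilde{G})$. Hence the matrices to be counted are exactly those $\widetilde{G}\in\Mat_{m(r+1),n}(\mathbb{F})$ with $\rank(\widetilde{G})\geqslant mr+1$. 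Using the stated fact that the number of matrices in $\Mat_{m(r+1),n}(\mathbb{F})$ of rank $i$ equals $\delta_{q}(m(r+1),i)\bin_{q}(n,i)$, the total is $\sum_{i=mr+1}^{m(r+1)}\delta_{q}(m(r+1),i)\bin_{q}(n,i)$; terms with $i>n$ vanish automatically because $\bin_{q}(n,i)=0$ there, which in particular yields the count $0$ when $n<mr+1$, consistent with $\varpi_{\mathbb{E}/\mathbb{F}}(r+1,r)=mr+1$ from Corollary 5.1. Dividing by $\delta_{q^{m}}(r+1,r+1)$ gives precisely the claimed formula.

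The argument is essentially routine once the characterization is in hand; the only genuinely non-formulaic step, and thus the point I would be most careful about, is the observation that the $\mathbb{F}$-dimension bound $\dim_{\mathbb{F}}(U)\geqslant mr+1$ itself guarantees full $\mathbb{E}$-rank of $G$. This is what allows the clean rank-stratified enumeration over $\mathbb{F}$ to be applied directly, rather than forcing an awkward correction that separates the $\mathbb{E}$-rank of $G$ from the $\mathbb{F}$-dimension of its column space.
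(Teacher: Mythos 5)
Your proposal is correct and follows essentially the same route as the paper's proof: the same reduction of $r$-minimality to $\dim_{\mathbb{F}}(U)\geqslant mr+1$ (the paper cites Theorem 4.4 and Corollary 2.1), the same key observation that this dimension bound makes $\rank(G)=r+1$ automatic, the same $\mathbb{F}$-coordinate bijection $\Mat_{r+1,n}(\mathbb{E})\longleftrightarrow\Mat_{m(r+1),n}(\mathbb{F})$, and the same division by $\delta_{q^{m}}(r+1,r+1)$. The only cosmetic difference is that the paper stratifies by each fixed weight $i$ and sums the resulting $\theta_{i}$, whereas you count all admissible matrices in one stroke.
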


\begin{proof}
For any $[n,r+1]$ code $C\leqslant_{\mathbb{E}}\mathbb{E}^{n}$, by Theorem 4.4 and Corollary 2.1, $C$ is $r$-minimal if and only if $mr+1\leqslant\wt(C)\leqslant m(r+1)$. Therefore the number of all the $[n,r+1]$ $r$-minimal codes is equal to $\sum_{i=mr+1}^{m(r+1)}\theta_{i}$, where $\theta_{i}$ denotes the number of all the $[n,r+1]$ codes with rank support weight $i$. Now we compute $\theta_{i}$ for any fixed $i\in\{mr+1,\dots,m(r+1)\}$. Noticing that each $[n,r+1]$ code has $\delta_{q^{m}}(r+1,r+1)$ generator matrices in $\Mat_{r+1,n}(\mathbb{E})$, all of which have rank $r+1$; moreover, any $G\in\Mat_{r+1,n}(\mathbb{E})$ with $\rank(G)=r+1$ is a generator matrix of the $[n,r+1]$ code $C\triangleq\langle \row(G)\rangle_{\mathbb{E}}$, and it holds that $\wt(C)=\dim_{\mathbb{F}}(\langle \col(G)\rangle_{\mathbb{F}})$ by Lemma 2.1. It then follows that
$$\theta_i=\delta_{q^{m}}(r+1,r+1)^{-1}\cdot|\{G\in\Mat_{r+1,n}(\mathbb{E})\mid\text{$\rank(G)=r+1$, $\dim_{\mathbb{F}}(\langle \col(G)\rangle_{\mathbb{F}})=i$}\}|.$$
We claim that every $G\in\Mat_{r+1,n}(\mathbb{E})$ with $\dim_{\mathbb{F}}(\langle \col(G)\rangle_{\mathbb{F}})=i$ satisfies that $\rank(G)=r+1$. Indeed, from $\dim_{\mathbb{F}}(\langle \col(G)\rangle_{\mathbb{E}})\geqslant i>mr$, we deduce that $\rank(G)=\dim_{\mathbb{E}}(\langle \col(G)\rangle_{\mathbb{E}})>r$, which implies that $\rank(G)=r+1$, as desired. Next, let $K\in\Mat_{r+1,m(r+1)}(\mathbb{E})$ such that $\col(K)$ form an $\mathbb{F}$-basis of $\mathbb{E}^{[r+1]}$. Then, $H\mapsto KH$ induces a bijection from $\Mat_{m(r+1),n}(\mathbb{F})$ to $\Mat_{r+1,n}(\mathbb{E})$; moreover, we have $\rank(H)=\dim_{\mathbb{F}}(\langle\col(KH)\rangle_{\mathbb{F}})$ for all $H\in\Mat_{m(r+1),n}(\mathbb{F})$. It follows that
\begin{eqnarray*}
\begin{split}
\theta_i&=\delta_{q^{m}}(r+1,r+1)^{-1}\cdot|\{H\in\Mat_{m(r+1),n}(\mathbb{F})\mid\rank(H)=i\}|\\
&=\delta_{q^{m}}(r+1,r+1)^{-1}\delta_{q}(m(r+1),i)\bin_{q}(n,i),
\end{split}
\end{eqnarray*}
which immediately establishes the desired result.
\end{proof}

\setlength{\parindent}{2em}
Now we are ready to prove the main result of this subsection.

\setlength{\parindent}{0em}
\begin{theorem}
Let $(k,r)\in\mathbb{N}^{2}$ with $k\geqslant r+1$. Then, there exists an $[mr+k(r+1)-r^{2}-2r,k]$ $r$-minimal code, i.e., it holds that $\varpi_{\mathbb{E}/\mathbb{F}}(k,r)\leqslant mr+k(r+1)-r^{2}-2r$.
\end{theorem}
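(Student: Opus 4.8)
The plan is to deduce the bound from the combinatorial existence criterion of Theorem 3.4, fed with the count of $[n,r+1]$ $r$-minimal codes underlying Proposition 5.3, and then to close the argument with the estimates of Proposition 2.2. Throughout write $n=mr+k(r+1)-r^{2}-2r$, so that $n-k=r(m+k-r-2)\geqslant0$ and in particular $n\geqslant k$. By Proposition 5.1 it suffices to produce an $[n,k]$ $r$-minimal code. I would invoke Theorem 3.4 in the rank metric instantiation of Example 3.1, where the ambient module is $\mathbb{E}^{n}$ and the base division ring is $\mathbb{E}$ with $|\mathbb{E}|=q^{m}$; thus every occurrence of the field size in Theorem 3.4 becomes $q^{m}$. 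Taking $t=r+1$, it then remains to verify
\[
\psi(r+1,r)<\prod_{i=k-r}^{k}\frac{q^{m(i+n-k)}-1}{q^{mi}-1},
\]
where $\psi(r+1,r)$ counts the $[n,r+1]$ codes in $\mathbb{E}^{n}$ that are not $r$-minimal.

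The next step is to bound $\psi(r+1,r)$ from above. By Theorem 4.4 (with $\mathbf{d}_{r+1}(C)=\wt(C)$) together with Corollary 2.1, an $[n,r+1]$ code $C$ fails to be $r$-minimal exactly when $\wt(C)\leqslant mr$. Counting generator matrices as in the proof of Proposition 5.3 --- each such code has $\delta_{q^{m}}(r+1,r+1)$ generator matrices in $\Mat_{r+1,n}(\mathbb{E})$, and for a fixed $K\in\Mat_{r+1,m(r+1)}(\mathbb{E})$ whose columns are an $\mathbb{F}$-basis of $\mathbb{E}^{[r+1]}$ the map $H\mapsto KH$ identifies $\Mat_{m(r+1),n}(\mathbb{F})$ with $\Mat_{r+1,n}(\mathbb{E})$ while carrying $\rank(H)$ to $\dim_{\mathbb{F}}(\langle\col(KH)\rangle_{\mathbb{F}})$ --- and then discarding the full-rank constraint, one obtains
\[
\psi(r+1,r)<\delta_{q^{m}}(r+1,r+1)^{-1}\sum_{i=0}^{mr}\delta_{q}(m(r+1),i)\bin_{q}(n,i).
\]

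Finally I would estimate both sides. For the right-hand side of the target inequality, writing $Q=q^{m}$ and using $n\geqslant k$, each factor obeys $\frac{Q^{i+n-k}-1}{Q^{i}-1}\geqslant Q^{n-k}$, so the product is at least $q^{m(r+1)(n-k)}$. For the left-hand side, I would apply Proposition 2.2(2) with $a=q$, $M=m(r+1)$, $h=mr$ to bound the sum, and Proposition 2.2(1) with $a=q^{m}$ to bound $\delta_{q^{m}}(r+1,r+1)^{-1}$; substituting $n-k=r(m+k-r-2)$ and comparing exponents then shows that the two sides agree at leading order up to a factor of exactly $q^{m}$, so the claim reduces to a closed-form numerical inequality. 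This tightness is the crux of the proof: because the margin is only $q^{m}$, the lower-order terms and the constant $q^{2}/(q^{2}-q-1)$ --- which is as large as $4$ when $q=2$ --- must be tracked precisely, and for the smallest pairs $(q,m)$ the crude bound above is too lossy, so one expects to retain instead the exact value $\psi(r+1,r)=\delta_{q^{m}}(r+1,r+1)^{-1}\bigl(\delta_{q^{m}}(n,r+1)-\sum_{i=mr+1}^{m(r+1)}\delta_{q}(m(r+1),i)\bin_{q}(n,i)\bigr)$ supplied by Proposition 5.3, whose subtracted term compensates for the overcounting incurred when the full-rank constraint is dropped.
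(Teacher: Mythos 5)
Your setup coincides with the paper's own proof: invoke Theorem 3.4 with $t=r+1$ over the field $\mathbb{E}$ of size $q^{m}$, identify the non-$r$-minimal $[n,r+1]$ codes as those with $\wt(C)\leqslant mr$ (via Theorem 4.4, since $\mathbf{d}_{r+1}(C)=\wt(C)$ here), bound $\psi(r+1,r)$ by counting matrices in $\Mat_{m(r+1),n}(\mathbb{F})$ of rank at most $mr$ and discarding the full-rank constraint (this is exactly how the paper passes from its inequality (5.12) to (5.11)), and then estimate both sides by Proposition 2.2. Your observation that after normalizing by $q^{mr(m+n)}$ the two sides differ at leading order by a factor of $q^{m}$ is also correct.

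The gap is at the decisive final step, which you do not complete. You rightly note that the resulting numerical inequality cannot be closed by the crude estimates when $q^{m}$ is small: for instance with $q=2$, $m=2$ the provable upper bound on the normalized left side is about $1/2+4=9/2$, while the provable lower bound on the normalized right side is only $q^{m}-1-q^{-m}=11/4$, so the method is inconclusive there. But your proposed remedy --- retaining the exact value of $\psi(r+1,r)$ from Proposition 5.3 --- is neither executed nor the right tool: the hypothesis of Theorem 3.4 is merely sufficient, and making the exact count beat the product for all small $(q,m)$ would require genuinely finer estimates that you do not supply. The paper's resolution is structural rather than analytic: it first disposes of $r=0$ by Corollary 5.1(3) and of $r+1\geqslant m$ by Corollary 5.1(2), since in those cases $\varpi_{\mathbb{E}/\mathbb{F}}(k,r)$ is known exactly ($k$, respectively $(k-1)m+1$) and one checks directly that it is at most $mr+k(r+1)-r^{2}-2r$. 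Only then does it run the counting argument, now under the standing assumption $m\geqslant r+2\geqslant 3$, which forces $q^{m}\geqslant 8$; the crude bounds then suffice with room to spare, namely $17/4$ against $55/8$. Note that every pair $(q,m)$ for which your estimate is too lossy satisfies $m\leqslant 2$ and hence (for $r\geqslant 1$) falls under the case $r+1\geqslant m$, so the reduction removes precisely the troublesome parameters; note also that your application of Proposition 2.2(2) with $h=mr$ requires $h\geqslant1$, so the case $r=0$ must be split off in any event. Without this case reduction your argument does not close; with it, it becomes the paper's proof.
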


\begin{proof}
Let $n\triangleq mr+k(r+1)-r^{2}-2r$. We note that if $r=0$, then (3) of Corollary 5.1 implies that $\varpi_{\mathbb{E}/\mathbb{F}}(k,r)=n$, as desired; moreover, if $r+1\geqslant m$, then (2) of Corollary 5.1 implies that $\varpi_{\mathbb{E}/\mathbb{F}}(k,r)=(k-1)m+1\leqslant n$, as desired. Therefore in what follows, we assume that $m\geqslant r+2\geqslant3$. We first show that
\begin{equation}
\sum_{i=0}^{mr}\delta_{q}(m(r+1),i)\bin_{q}(n,i)<q^{m(n-k)(r+1)}\delta_{q^{m}}(r+1,r+1).
\end{equation}
Indeed, from (2) of Proposition 2.2, we deduce that
{\small$$q^{-mr(m+n)}\left(\sum_{i=0}^{mr}\delta_{q}(m(r+1),i)\bin_{q}(n,i)\right)<\frac{1}{(q-1)(q^{2}-q-1)q^{(k-r-1)(r+1)+m-1}}+\frac{q^{2}}{q^{2}-q-1}\leqslant\frac{17}{4},$$}where the second ``$\leqslant$'' follows from $k\geqslant r+1$, $m\geqslant3$ and $q\geqslant2$. Moreover, noticing that $q^{m}\geqslant8$, from (1) of Proposition 2.2, (2.5) and some straightforward computation, we deduce that
$$q^{-mr(m+n)}\cdot q^{m(n-k)(r+1)}\delta_{q^{m}}(r+1,r+1)=q^{m}\left(\prod_{i=1}^{r+1}(1-q^{-mi})\right)>q^{m}-1-q^{-m}\geqslant\frac{55}{8},$$
which, together with $\frac{17}{4}<\frac{55}{8}$, immediately implies (5.11), as desired. Next, we show that
\begin{equation}
\delta_{q^{m}}(n,r+1)-\left(\sum_{i=mr+1}^{m(r+1)}\delta_{q}(m(r+1),i)\bin_{q}(n,i)\right)<\left(\prod_{i=k-r}^{k}\frac{q^{m(i+n-k)}-1}{q^{mi}-1}\right)\delta_{q^{m}}(r+1,r+1).
\end{equation}
Indeed, by (2.5) and [18, Lemma 26], we have
$$\delta_{q^{m}}(n,r+1)<q^{m(r+1)n}=\sum_{i=0}^{m(r+1)}\delta_{q}(m(r+1),i)\bin_{q}(n,i),$$
which implies that the left hand side of (5.12) is smaller than that of (5.11). Moreover, from $n\geqslant k$, we deduce that the right hand side of (5.11) is smaller than or equal to that of (5.12). Therefore (5.12) follows from (5.11), as desired. Dividing both sides of (5.12) by $\delta_{q^{m}}(r+1,r+1)$, we have $$\bin_{q^{m}}(n,r+1)-\delta_{q^{m}}(r+1,r+1)^{-1}\left(\sum_{i=mr+1}^{m(r+1)}\delta_{q}(m(r+1),i)\bin_{q}(n,i)\right)<\prod_{i=k-r}^{k}\frac{q^{m(i+n-k)}-1}{q^{mi}-1}.$$
By Proposition 5.3, the left hand side is equal to the number of all the $[n,r+1]$ codes which are not $r$-minimal. Hence an application of Theorem 3.4 immediately implies the desired result.
\end{proof}

\setlength{\parindent}{2em}
Along with (1) of Theorem 5.3, an application of Theorem 5.4 to $r=1$ immediately implies the following result for minimal codes.

\setlength{\parindent}{0em}
\begin{corollary}
{\bf{(1)}}\,\,For any $k\in\mathbb{N}$ with $k\geqslant2$, it holds that $\varpi_{\mathbb{E}/\mathbb{F}}(k,1)\leqslant m+2k-3$.

{\bf{(2)}}\,\,If $m=3$, then for any $k\in\mathbb{N}$ with $k\geqslant2$, it holds that $\varpi_{\mathbb{E}/\mathbb{F}}(k,1)=2k$.
\end{corollary}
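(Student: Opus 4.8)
The plan is to read off both statements directly from the two main results of the preceding subsections, namely the general upper bound of Theorem 5.4 and the sharpened lower bound of Theorem 5.3~(1); no new argument is required beyond substituting parameters and matching the two bounds.

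For part (1), I would apply Theorem 5.4 with $r=1$. The general bound there reads $\varpi_{\mathbb{E}/\mathbb{F}}(k,r)\leqslant mr+k(r+1)-r^{2}-2r$, and specializing to $r=1$ gives $mr+k(r+1)-r^{2}-2r=m+2k-1-2=m+2k-3$. Since the hypothesis $k\geqslant r+1$ of Theorem 5.4 becomes exactly $k\geqslant 2$, this immediately yields $\varpi_{\mathbb{E}/\mathbb{F}}(k,1)\leqslant m+2k-3$ for all $k\geqslant 2$, establishing the first assertion.

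For part (2), I would combine the two bounds under the extra hypothesis $m=3$. Substituting $m=3$ into the upper bound just proved gives $\varpi_{\mathbb{E}/\mathbb{F}}(k,1)\leqslant 3+2k-3=2k$ for every $k\geqslant 2$. For the matching lower bound I would invoke Theorem 5.3~(1), which asserts precisely that when $m=3$ one has $\varpi_{\mathbb{E}/\mathbb{F}}(k,1)\geqslant 2k$ for all $k\geqslant 2$. The two inequalities together force $\varpi_{\mathbb{E}/\mathbb{F}}(k,1)=2k$.

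There is essentially no obstacle at this stage: the entire substance has already been carried out in establishing Theorem 5.4 (the counting argument via Proposition 5.3 and the $q$-binomial estimates of Proposition 2.2, fed into Theorem 3.4) and Theorem 5.3~(1) (the evasive-subspace dimension bound of Corollary 5.2 combined with Corollary 4.1). The only point worth checking is the arithmetic of the specialization $r=1$ and the exact cancellation $m+2k-3=2k$ at $m=3$, after which the upper and lower bounds coincide.
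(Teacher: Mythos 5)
Your proposal is correct and is exactly the paper's own argument: the paper derives this corollary by applying Theorem 5.4 to $r=1$ (giving $m+2k-3$ under the hypothesis $k\geqslant r+1=2$) and combining it with the lower bound of Theorem 5.3~(1) when $m=3$, where $m+2k-3=2k$. The only contextual point, which both you and the paper leave implicit, is that the upper bound of Theorem 5.4 lives in Section 5.3 where $\mathbb{F}$ (hence $\mathbb{E}$) is assumed finite, so part (2) is a statement about finite fields, consistent with Remark 5.2.
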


\begin{remark}
{\bf{(1)}}\,\,Part (1) of Corollary 5.4 slightly improves [3, Theorem 6.11], and also recovers [26, Theorem 7.16] when $k=3$ (see (4), (5) of Lemma 5.1). Moreover, the upper bound $m+2k-3$ is not tight in general, and we refer the reader to (6)--(8) of Lemma 5.1 and references therein for more details.

{\bf{(2)}}\,\,If $m=3$, then (2) of Corollary 5.4 implies that $\varpi_{\mathbb{E}/\mathbb{F}}(4,1)=8$, which recovers the main result of [8, Section 3.4] (see (3) of Lemma 5.1). Moreover, an explicit construction of $[8,4]$ minimal codes has been given in \cite{8}, and we refer the reader to [8, Proposition 3.13] for more details.
\end{remark}

\end{document}